\tikzstyle{vecArrow} = [thick, decoration={markings,mark=at position
\tikzstyle{innerWhite} = [semithick, white,line width=1.4pt, shorten >= 4.5pt]
\newcommand\org@hypertarget{}
\let\org@hypertarget\hypertarget
\renewcommand\hypertarget[2]{%
  \Hy@raisedlink{\org@hypertarget{#1}{}}#2%
  }
\definecolor{darkgreen}{RGB}{50,190,50}
\definecolor{darkblue}{RGB}{0,0,190}
\definecolor{darkred}{RGB}{238,0,0}
\definecolor{quantum}{RGB}{83,37,127}
\definecolor{quantumlight}{RGB}{169,146,191}
\newcommand{\pr}{^{\prime}}
\newcommand{\ket}[1]{\ensuremath{\left|\right.\!{#1}\!\left.\right\rangle}}
\newcommand{\bra}[1]{\ensuremath{\left\langle\right.\!{#1}\!\left.\right|}}
\newcommand{\ketbra}[2]{\ensuremath{|{#1}\rangle\!\langle{#2}|}}
\newcommand{\nl}{\ensuremath{\hspace*{-0.5pt}}}
\newcommand{\nr}{\ensuremath{\hspace*{0.5pt}}}
\newcommand{\subtiny}[3]{\ensuremath{_{\hspace{#1 pt}\protect\raisebox{#2 pt}{\tiny{$ #3$}}}}}
\newcommand{\suptiny}[3]{\ensuremath{^{\hspace{#1 pt}\protect\raisebox{#2 pt}{\tiny{$ #3$}}}}}
\newcommand{\Sys}{\ensuremath{_{\hspace{-0.5pt}\protect\raisebox{0pt}{\tiny{$S$}}}}}
\newcommand{\Poi}{\ensuremath{_{\hspace{-0.5pt}\protect\raisebox{0pt}{\tiny{$P$}}}}}
\newcommand{\PoiN}[1]{\ensuremath{_{\hspace{-0.5pt}\protect\raisebox{0pt}{\tiny{$P_{#1}$}}}}}
\newcommand{\SP}{\ensuremath{_{\hspace{-0.5pt}\protect\raisebox{0pt}{\tiny{$S\hspace*{-1pt}P$}}}}}
\newcommand{\Field}{\ensuremath{_{\hspace{-0.5pt}\protect\raisebox{0pt}{\tiny{$F$}}}}}
\newcommand{\tr}{\textnormal{Tr}}
\newcommand{\djj}{d\kern-0.4em\char"16\kern-0.1em}
\newtheorem{prop}{Proposition}[section]
\renewcommand{\thesection}{\arabic{section}}
\renewcommand{\thesubsection}{\arabic{section}.\Alph{subsection}}
\renewcommand{\p@subsection}{}
\renewcommand{\p@subsubsection}{}
\newcommand{\mean}[1]{\langle #1 \rangle }
\newcommand{\M}[1]{\mathcal{#1}}
\newcommand{\id}{\mathbb{I}}
\newcommand{\ch}[1]{\left( { #1} \right)}
\newcommand{\N}[1]{\left \Vert { #1} \right \Vert}
\newcommand{\U}[1]{U_{\Lambda} { #1} U^{\dagger}_{\Lambda}}
\tikzset{style green/.style={
    set fill color=green!50!lime!60,
    set border color=white,
  },
  style cyan/.style={
    set fill color=cyan!90!blue!60,
    set border color=white,
  },
  style orange/.style={
    set fill color=orange!80!red!60,
    set border color=white,
  },
  style hordash/.style={
    set fill color=white,
    set border color=black,
  },
     style rose/.style={
    set fill color= magenta!70!pink!70, 
    set border color=white,
  },
  hor/.style={
    above left offset={-0.09,0.25},
    below right offset={0.09,-0.05},
    #1
  },
  ver/.style={
    above left offset={-0.09,0.35},
    below right offset={0.09,-0.1},
    #1
  }
}
\definecolor{mycolor}{rgb}{0.122, 0.435, 0.698}
\newmdenv[innerlinewidth=0.5pt, roundcorner=4pt,linecolor=mycolor,innerleftmargin=6pt,
innerrightmargin=6pt,innertopmargin=6pt,innerbottommargin=6pt]{mybox}
\newtcolorbox[blend into=figures]{boxdefi}[3][]
{ float*=ht,width=\textwidth,lower separated=false, center upper,
title={#2},label= def:#3,#1}
\renewcommand{\thesection}{\Roman{section}}
\renewcommand{\thesubsection}{\Roman{section}.\arabic{subsection}}
\begin{document}

\title{Work estimation and work fluctuations in the presence of non-ideal measurements}
\author{Tiago Debarba}
\email{debarba@utfpr.edu.br}
\affiliation{Departamento Acad{\^ e}mico de Ci{\^ e}ncias da Natureza, Universidade Tecnol{\'o}gica Federal do Paran{\'a} (UTFPR), Campus Corn{\'e}lio Proc{\'o}pio, Avenida Alberto Carazzai 1640, Corn{\'e}lio Proc{\'o}pio, Paran{\'a} 86300-000, Brazil.}
\affiliation{Institute for Quantum Optics and Quantum Information - IQOQI Vienna, Austrian Academy of Sciences, Boltzmanngasse 3, 1090 Vienna, Austria}
\author{Gonzalo Manzano}
\email{gmanzano@ucm.es}
\affiliation{Scuola Normale Superiore, Piazza dei Cavalieri 7, I-56126, Pisa, Italy}
\affiliation{International Center for Theoretical Physics, Strada Costiera 11, Trieste 34151, Italy}
\author{Yelena Guryanova}
\email{yelena.guryanova@oeaw.ac.at}
\affiliation{Institute for Quantum Optics and Quantum Information - IQOQI Vienna, Austrian Academy of Sciences, Boltzmanngasse 3, 1090 Vienna, Austria}
\author{Marcus Huber}
\email{marcus.huber@univie.ac.at}
\affiliation{Institute for Quantum Optics and Quantum Information - IQOQI Vienna, Austrian Academy of Sciences, Boltzmanngasse 3, 1090 Vienna, Austria}
\author{Nicolai Friis}
\email{nicolai.friis@univie.ac.at}
\affiliation{Institute for Quantum Optics and Quantum Information - IQOQI Vienna, Austrian Academy of Sciences, Boltzmanngasse 3, 1090 Vienna, Austria}


\begin{abstract}
From the perspective of quantum thermodynamics, realisable measurements cost work and result in measurement devices that are not perfectly correlated with the measured systems. We investigate the consequences for the estimation of work in non-equilibrium processes and for the fundamental structure of the work fluctuations when one assumes that the measurements are non-ideal. We show that obtaining work estimates and their statistical moments at finite work cost implies an imperfection of the estimates themselves: more accurate estimates incur higher costs. Our results provide a qualitative relation between the cost of obtaining information about work and the trustworthiness of this information. Moreover, we show that Jarzynski's equality can be maintained exactly at the expense of a correction that depends only on the system's energy scale, while the more general fluctuation relation due to Crooks no longer holds when the cost of the work estimation procedure is finite. We show that precise links between dissipation and irreversibility can be extended to the non-ideal situation.
\end{abstract}

\maketitle

\section{Introduction}

Energy is a resource and, as with any resource, it is of interest to understand how much of it is spent or can be obtained during a given process, or simply, how much of it is stored, for instance, in a battery. A quite different, but familiar, resource that one handles on a daily basis is money. Money does not usually come for free: it is exchanged for goods and services, and as a consequence it is in one's interest to know how much things cost and how much money is at hand, e.g., stored in a wallet or bank account. But while checking the exact amount of money (or lack thereof) in one's wallet is free, it is not unusual to expect that banks charge certain fees for storing and transferring money. Unfortunately, when it comes to energy, Nature is similarly unforthcoming.
Fees apply to the storage and transfer of energy and an energy cost is incurred for obtaining estimates of the work transferred within any thermodynamic process, or stored in a quantum system. In this work, we show that obtaining these estimates with a finite amount of work implies an imperfection in the estimates themselves: estimates which are more accurate incur higher costs.

From a thermodynamic point of view, acknowledging the energetic cost of measurements is crucial, e.g., for a complete understanding of Maxwell's demon or Szilard's engine~\cite{LeffRex2003, MayuramaNoriVedral2009}. The work-cost of those measurements that are ideal and projective has been investigated by means of the work-value of measurement outcomes~\cite{SagawaUeda2009, Jacobs2012, LipkaBartosikDemkowiczDobrzanski2018} or via Landauer's erasure bound for resetting the memory that stores these outcomes~\cite{Landauer1961, Bennett1982, EspositoVanDenBroeck2011, ReebWolf2014, AbdelkhalekNakataReeb2016}. However, a common observation among Refs.~\cite{ParrondoHorowitzSagawa2015, KammerlanderAnders2016, ManzanoPlastinaZambrini2018} is that the benefits derived from using measurements as sources of free energy are either matched or surpassed by the corresponding costs.

The crux of our argument is that energy delivered by measurements is not free of charge and must be supplied to realise the measurement in the first place. As we show, this statement is bolstered by the first, second and, in particular, third law of thermodynamics. It was recently shown in~\cite{GuryanovaFriisHuber2018} that ideal projective measurements require one to prepare the measurement apparatus in a pure initial state. The third law stipulates that such zero-entropy states can only be prepared asymptotically using infinite time, infinite energy, or operations of infinite complexity (see e.g.,~\cite{MasanesOppenheim2017}). Consequently, ideal measurements do not exist, in a strict sense, since they always incur diverging costs. This implies that any realistic measurement using finite resources is non-ideal. It is precisely these considerations that become conceptually important when the purpose  of the measurement is to assess the energy consumption itself.

Significant focus in quantum statistical mechanics has been dedicated to the quantification of work and its fluctuations in thermodynamic processes~\cite{DornerClarkHeaneyFazioGooldVedral2013, MazzolaDeChiaraPaternostro2013, FuscoEtAl2014, RoncagliaCerisolaPaz2014}. Studies have also looked at the two-point measurement (TPM) scheme (one of the most prominent approaches for estimating work in an out-of-equilibrium process)~\cite{TalknerLutzHaenggi2007} in the context of Jarzynski's and Crooks' fluctuation relations~\cite{CampisiHaenggiTalkner2011}. In this work we revisit these concepts and investigate the consequences for these quantities when one does \textit{not} assume ideal measurements. We explicitly show how the average work of the ideal TPM is modified and discuss the operational meaning of the corresponding estimates. We show that while Jarzynski's equality can be maintained exactly at the expense of a correction that only depends on the system's Hamiltonian, the more general relation due to Crooks (as well as related results linking irreversibility and dissipation~\cite{KawaiParrondoVanDenBroeck2007, ParrondoVanDenBroeckKawai2009}) no longer hold in the presence of non-ideal measurements. Our results provide a qualitative connection between the cost of obtaining information about work and the trustworthiness of this information.

The paper is structured as follows. In Sec.~\ref{sec:framework} we set the stage for our investigation: First, we review the usual TPM scheme in Sec.~\ref{sec:TPM} before discussing the key properties of ideal and non-ideal measurements following Ref.~\cite{GuryanovaFriisHuber2018} in Secs.~\ref{sec:ideal measurements} and~\ref{sec:nonideal measurements}, respectively. In Sec.~\ref{sec:stimating work with non-ideal measurements} we discuss a modified TPM scheme based on non-ideal measurements and the resulting work estimates, before investigating the implications for fluctuation relations of Jarzynski and Crooks in Sec.~\ref{sec:fluctuation relations}. Finally, we discuss our findings in Sec.~\ref{sec:discussion}.


\section{Framework}\label{sec:framework}

\subsection{Two-point measurement scheme}\label{sec:TPM}

To formulate our ideas we adopt a commonplace view in quantum thermodynamics, namely, that work is a central resource that is required to move systems away from freely available thermal equilibrium states~\cite{GourMuellerNarasimhacharSpekkensHalpern2015} --- an approach that has staged a diverse range of investigations within the broader field~\cite{VinjanampathyAnders2016, MillenXuereb2016, GooldHuberRieraDelRioSkrzypczyk2016}.
In this paradigm, previous research has investigated the work-cost (or gain) of quantum processes~\cite{HorodeckiOppenheim2013b, SkrzypczykShortPopescu2014, FaistDupuisOppenheimRenner2015, WilmingGallegoEisert2016, FaistRenner2018}, refrigeration~\cite{ClivazSilvaHaackBohrBraskBrunnerHuber2019a, ClivazSilvaHaackBohrBraskBrunnerHuber2019b}, or for establishing correlations~\cite{HuberPerarnauHovhannisyanSkrzypczykKloecklBrunnerAcin2015, BruschiPerarnauLlobetFriisHovhannisyanHuber2015, FriisHuberPerarnauLlobet2016, VitaglianoKloecklHuberFriis2019}.

In this so called `resource-theoretic' approach, consider a quantum system with Hamiltonian $H\suptiny{0}{0}{(0)}=\sum_{i}E_{i}\suptiny{0}{0}{(0)}\ketbra{E_{i}\suptiny{0}{0}{(0)}}{E_{i}\suptiny{0}{0}{(0)}}$ initially at thermal equilibrium with its environment at temperature $T$, described by a Gibbs state $\tau\suptiny{0}{0}{(0)} = \exp(-\beta H\suptiny{0}{0}{(0)})/\mathcal{Z}\suptiny{0}{0}{(0)}$ with partition function $\mathcal{Z}\suptiny{0}{0}{(0)}=\tr\bigl(\exp(-\beta H\suptiny{0}{0}{(0)})\bigr)$ and
$\beta=(k\subtiny{0}{0}{\mathrm{B}}T)^{-1}$. Suppose the system is driven out of equilibrium by a process $\Lambda$
\begin{align}
    \left(\tau\suptiny{0}{0}{(0)}, H\suptiny{0}{0}{(0)}\right) \xrightarrow{\phantom{b..}\Lambda \phantom{b..}}\left( \rho\suptiny{0}{0}{(\mathrm{f})} , H\suptiny{0}{0}{(\mathrm{f})}\right),
    \label{eq:process}
\end{align}
resulting in a final Hamiltonian $H\suptiny{0}{0}{(\mathrm{f})}=\sum_{i}E_{i}\suptiny{0}{0}{(\mathrm{f})}\ketbra{E_{i}\suptiny{0}{0}{(\mathrm{f})}}{E_{i}\suptiny{0}{0}{(\mathrm{f})}}$ and a final state $\rho\suptiny{0}{0}{(\mathrm{f})}=\U{\tau\suptiny{0}{0}{(0)}}$, where $U_{\Lambda}$ is a unitary determined by $\Lambda$. The work that is performed on or extracted from the system during such a process can be estimated via the two-point measurement (TPM) scheme~\cite{TalknerLutzHaenggi2007, CampisiHaenggiTalkner2011} consisting of two \emph{ideal projective measurements} with respect to the eigenbases of $H\suptiny{0}{0}{(0)}$ and $H\suptiny{0}{0}{(\mathrm{f})}$ before and after the protocol $\Lambda$ is implemented, respectively. After obtaining the outcomes labelled by ``$n$" and ``$m$" in these measurements one concludes that the system is left in the states $\ket{E_{n}\suptiny{0}{0}{(0)}}$ and $\ket{E_{m}\suptiny{0}{0}{(\mathrm{f})}}$, respectively. To any transition between these pure states one may associate a probability $p_{\raisebox{-1pt}{\scriptsize{$n\!\rightarrow\!m$}}} =|\bra{E_{m}\suptiny{0}{0}{(\mathrm{f})}}U_{\Lambda}\ket{E_{n}\suptiny{0}{0}{(0)}}|^{2}$ together with a work value $W_{\raisebox{-1pt}{\scriptsize{$n\!\rightarrow\!m$}}}=E_{m}\suptiny{0}{0}{(\mathrm{f})}-E_{n}\suptiny{0}{0}{(0)}$, while the probability for obtaining the first outcome is $p_{n}\suptiny{0}{0}{(0)}=\exp(-\beta E_{n}\suptiny{0}{0}{(0)})/\mathcal{Z}\suptiny{0}{0}{(0)}$. The average work performed during the protocol is thus
\begin{align}
    \mean{W}_{\Lambda} &=\,\sum\limits_{m,n}p_{n}\suptiny{0}{0}{(0)}\,p_{\raisebox{-1pt}{\scriptsize{$n\!\rightarrow\!m$}}}\,
    \bigl(E_{m}\suptiny{0}{0}{(\mathrm{f})}-E_{n}\suptiny{0}{0}{(0)}\bigr),
    \label{eq:ideal tpm work}
\end{align}
which equals the change in average energy during the protocol $\Lambda$, i.e.,
\begin{align}
    \mean{W}_{\Lambda} &=\,\tr(H\suptiny{0}{0}{(\mathrm{f})}\rho\suptiny{0}{0}{(\mathrm{f})})-\tr(H\suptiny{0}{0}{(0)}\tau\suptiny{0}{0}{(0)})
    =:\Delta E_{\Lambda}.
\end{align}
Estimates of $\mean{W}_{\Lambda}$ could thus be obtained from performing ideal measurements and collecting the corresponding outcome statistics. In the following, we will show how the quantity $\mean{W}_{\Lambda}$ in Eq.~(\ref{eq:ideal tpm work}) and its estimate are modified when replacing the two ideal measurements in the TPM by more general non-ideal measurements (see Fig.~\ref{fig:tpmfig}).


\subsection{Ideal measurements}\label{sec:ideal measurements}

The notion of perfect projective measurements that leave the system in pure states with certainty is of course idealized. To  understand this idealization and its consequences, we review the framework for non-ideal measurements in Ref.~\cite{GuryanovaFriisHuber2018}: Measurements are performed by coupling the measured system to a suitably prepared measurement apparatus (the ``pointer") via an energy investment. Assuming that the initial system and pointer states are $\rho\Sys$ and $\rho\Poi$, respectively, the measurement can be described by a physical process that correlates the system and pointer, resulting in a joint post-measurement state $\tilde{\rho}\SP$. For each of the states $\ket{n}\Sys$ (with $n=0,1,\ldots,d\Sys-1$) in the measurement basis $\{\ket{n}\Sys\}_{n}$, one assigns a corresponding outcome subspace of the pointer Hilbert space via a projector $\Pi_{n}$, such that $\sum_{n}\Pi_{n}=\mathds{1}\Poi$ and $\Pi_{m}\Pi_{n}=\delta_{mn}\Pi_{n}$. We then define an ideal measurement to have the following three properties:
\begin{enumerate}[(i)]
    \item{\label{item:unbiased main}
    \textbf{Unbiased}: The (post-interaction) pointer reproduces the measurement statistics of the (pre-interaction) system exactly, i.e.,
    \begin{align}
         \tr\bigl(\mathds{1}\Sys\otimes\Pi_{n}\tilde{\rho}\SP\bigr)    &=\,\tr\bigl(\ket{n}\!\!\bra{n}\Sys\,\rho\Sys\bigr)\ \forall\,n\ \ \forall \,
         \rho\Sys\,.
         \label{eq:unbiased main}
    \end{align}
    }
    \item{\label{item:faithful main}
    \textbf{Faithful}: The post-interaction pointer and the post-interaction system are perfectly correlated w.r.t. the measurement basis (projectors), that is,
    \begin{align}
        C(\tilde{\rho}\SP) &:=\sum\limits_{n}\tr\bigl(\ket{n}\!\!\bra{n}\Sys\otimes\Pi_{n}\,\tilde{\rho}\SP\bigr)\,=\,1\,.
        \label{eq:correlation function faithful}
    \end{align}
    In other words, given a measurement outcome $n$, the probability that the system is left in the state $\ket{n}\Sys$ is $1$.
    }
    \item{\label{item:non-invasive main}
    \textbf{Non-invasive}: The diagonal entries (w.r.t. the measurement basis) of the pre-measurement system state and the unconditional post-measurement system state are the same, i.e.,
    \begin{align}\label{eq:noninv}
        \tr\bigl(\ket{n}\!\!\bra{n}\Sys\,\rho\Sys\bigr) &=\,
        \tr\bigl(\ket{n}\!\!\bra{n}\Sys
    \,\tilde{\rho}\Sys\bigr)
        \quad \forall\,n\ \forall \,
         \rho\Sys\, ,
    \end{align}
    where $\tilde{\rho}\Sys := \tr_{\Poi}(\tilde{\rho}\SP)$.
    }
\end{enumerate}
A measurement that \textit{does not} satisfy all three properties is called \emph{non-ideal}. In particular, it was shown in Ref.~\cite{GuryanovaFriisHuber2018} that ideal projective measurements are not exactly realizable in practise because they require the preparation of initially pure pointer states (at least in some nontrivial subspace) to satisfy condition (\ref{item:faithful main}). However, the third law of thermodynamics prevents one from reaching the ground-state of any system with finite resources~\cite{SchulmanMorWeinstein2005, SilvaManzanoSkrzypczyBrunner2016, WilmingGallego2017, MasanesOppenheim2017, ClivazSilvaHaackBohrBraskBrunnerHuber2019a, ScharlauMueller2018, ClivazSilvaHaackBohrBraskBrunnerHuber2019b}. Since any other (pure) state necessarily has higher energy than the ground state, the third law thus excludes ideal measurements.

\subsection{Non-ideal measurements}\label{sec:nonideal measurements}

In order to understand non-ideal measurements better, we consider how the laws of thermodynamics place constraints on the ability to perform measurements in quantum mechanics. To begin, we note that the first law establishes a lower bound on the work-cost of measurement when one explicitly considers the system and measurement apparatus in the physical description (note that work-costs in terms of lower bounds are sometimes assumed, even if the costs are not always explicitly considered~\cite{ElouardHerreraMartiHuardAuffeves2017, ElouardJordan2018, BuffoniEtAl2019}). The second law implies that any reduction of the system's entropy (e.g., by a projective measurement that leaves the system in a pure state) must be compensated by an entropy increase of at least the same magnitude in the environment or the measurement apparatus~\cite{ParrondoHorowitzSagawa2015}. Finally, the third law provides the most severe constraint on the cost of measurement. From the above~\cite{GuryanovaFriisHuber2018} an ideal measurement can only be implemented using a pure state measurement apparatus. By the third law, it is impossible to create pure states using finite resources (e.g., with finite time, energy or complexity), which implies that any physical measurement using finite resources is non-ideal. Nevertheless, it was shown that non-ideal measurements employing finite resources can approximate ideal projective measurements arbitrarily well.

To understand the sense in which a non-ideal measurement can be considered `close' to ideal, we recall properties (\ref{item:unbiased main})-(\ref{item:non-invasive main}). All three properties are independent -- for a given measurement any one of them can be satisfied, while the other two are not~\cite[Appendix~A.3]{GuryanovaFriisHuber2018}. At the same time, satisfying any two properties (for all $\rho\Sys$) also implies the third. Since the third law of thermodynamics prevents measurements from being exactly faithful (i.e., satisfying~\eqref{item:faithful main} exactly), this means that any non-ideal measurement can only be unbiased or non-invasive, but not both.

Consider a non-ideal measurement that is non-invasive. From the above we know it can be neither faithful nor unbiased. Then neither the individual measurement outcomes nor the statistics generated from many measurements allow reliable inferences about either the post- or pre-measurement system state, respectively. Such a measurement does not seem to reveal any information about the measured system. Instead we consider non-ideal measurements that are \emph{unbiased}, and take this to be the relevant requirement to speak meaningfully about a measurement. For unbiased measurements one may then attempt to maximise the correlation between pointer outcomes and post-measurement system states to approach an ideal measurement.

To formalise this, we consider an arbitrary system state $\rho\Sys$ in finite dimension $d\Sys$, measured in the energy eigenbasis $\{\ket{E_{i}}\Sys\}_{i=0}^{d\Sys-1}$. The system interacts with the measurement apparatus, represented as a finite-size pointer with Hamiltonian $H\Poi = \sum_{i} E_i\suptiny{0}{0}{({P})} \ketbra{E_i\suptiny{0}{0}{({P})}}{E_i\suptiny{0}{0}{({P})}}$ and dimension $d\Poi$. In order to account transparently for all resources from a thermodynamic point of view, we assume the pointer is initially in a thermal\footnote{If the system is assumed to be initially thermal as well, such as in the TPM scheme, there need not be a relation between the inverse temperatures of the pointer and the system in principle.} state $\tau\Poi = \exp(-\beta\Poi H\Poi)/\mathcal{Z}\Poi$. The interaction can be modelled by a suitable unitary\footnote{In principle, one may effectively model such processes by completely positive and trace preserving (CPTP) maps, see~\cite[Appendix A.4]{GuryanovaFriisHuber2018}, but for an exact account of the invested work it is necessary to specify a corresponding dilation to a unitary acting on a larger space of the pointer and its environment.} evolution $U_{\mathrm{meas}}$ leading to
\begin{align}
    \tilde{\rho}\SP &:= U_{\mathrm{meas}}(\rho\Sys\otimes\tau\Poi) U_{\mathrm{meas}}^{\dagger}.
\end{align}
The interaction between system and pointer is unitary but does not generally preserve energy, requiring an investment of energy
\begin{align}\label{eq:specificWork}
    \Delta E_{\mathrm{meas}} &= \tr\bigl[(H\Sys + H\Poi)(\tilde{\rho}\SP - \rho\Sys \otimes \tau\Poi)\bigr]
\end{align}
in the form of work. For details see Appendix~\ref{appendix:Ideal and Non-Ideal Measurements}. Following Ref.~\cite[Lemma~2]{GuryanovaFriisHuber2018}, one may then construct $U_{\mathrm{meas}}$ to realise an unbiased measurement with finite energy cost, as we illustrate for a $3$-dimensional system in Appendix~\ref{appsec:unbiased measurements}. The probability $p_{n}$ to obtain the measurement outcome $n$ in such a measurement is given by $p_{n}=\tr\bigl(\ket{n}\!\!\bra{n}\rho\Sys\bigr)$. To ensure this is the case for all $\rho\Sys$, the unitary $U_{\mathrm{meas}}$ must result in a final state $\tilde{\rho}\SP$ which satisfies the equivalence relation
\begin{align}
    \mathds{1}\Sys\otimes\Pi_{n} \tilde{\rho}\SP \mathds{1}\Sys\otimes\Pi_{n}
    \mathrel{\hat=}
    p_{n}\tilde{U}\suptiny{0}{0}{(n)}\tau\Poi \tilde{U}\suptiny{0}{0}{(n)}{}^{\dagger},
\end{align}
i.e., $\mathds{1}\Sys\otimes\Pi_{n} \tilde{\rho}\SP \mathds{1}\Sys\otimes\Pi_{n}$ are $d\Poi\times d\Poi$ matrices with the same spectra as $p_{n}\tau\Poi$, i.e., they are equivalent up to applications of arbitrary unitaries $\tilde{U}\suptiny{0}{0}{(n)}$ on the pointer Hilbert space, as we discuss in detail in Appendix~\ref{appsec:unbiased measurements}. This implies that $\rho_{n}$, the conditional post-measurement system state for outcome $n$, is independent of the initial system state $\rho\Sys$. Note that this is also the case for ideal measurements (which are unbiased by definition), where the pure state after the measurement only depends on the measurement outcome, but not on the initial system state. In this regard, the difference with unbiased non-ideal measurements is that the conditional state for the latter are not pure. The most general form of any post-measurement mixed state conditioned on the outcome $n$ is
\begin{align}
    \rho_{n}    &= \sum\limits_{l,l\pr}q_{ll\pr|n}\,\ket{E_{l}}\!\!\bra{E_{l\pr}}.
\end{align}
Each state is normalised $\sum_{l}q_{ll|n}=1 ,\;\forall \,n$ and in general the coefficients $q_{ll\pr|n}$ are independent of $\rho\Sys$. In the rest of the paper we formulate our results as conditions and constraints on $q_{ll\pr|n}$.

One should also bear in mind that for non-ideal measurements the pointer outcomes and post-measurement system states are not perfectly correlated. This can be quantified by
the correlation function in Eq.~(\ref{eq:correlation function faithful}),
\begin{align}
    C(\tilde{\rho}\SP)  &=\,\sum\limits_{i=0}^{d\Sys-1}\tr\bigl(\ket{E_{i}}\!\!\bra{E_{i}}\Sys\otimes\Pi_{i}\,\tilde{\rho}\SP\bigr),
    \label{eq:correlation function}
\end{align}
where $\Pi_{i}$ are the orthogonal pointer projectors associated to different outcomes. The value of $C(\tilde{\rho}\SP)$ represents the average probability of correctly inferring the post-measurement state upon observing the pointer. For ideal measurements $C(\tilde{\rho}\SP)=1$. However, for non-ideal measurements there is an algebraic maximum, the \emph{maximal correlation} $C_{\mathrm{max}}<1$, which can be unitarily achieved.
$C_{\mathrm{max}}$ is given by the sum of the largest $\tfrac{d\Poi}{d\Sys}$ eigenvalues of~$\tau\Poi$, see~\cite[Appendix A.7]{GuryanovaFriisHuber2018}. We call unbiased measurements that achieve $C(\tilde{\rho}\SP)=C_{\mathrm{max}}$ \emph{unbiased maximally correlating} (UMC) measurements, discussed in detail in Appendix~\ref{appsec:unbiased max corr measurements}. In the same Appendix we also show that UMC measurements lead to the following constraints on the coefficients of the post-measurement state: $q_{nn|n}=C_{\mathrm{max}}\, ,\forall\,n$ and $q_{nl|n}=q_{ln|n}=0\ ,\forall l\neq n$. As a consequence, the trace distance between the conditional post-measurement system state $\rho_{n}$ and the pure state $\ket{E_{n}}$ evaluates to $D(\rho_{n}, \ket{E_{n}}\!\!\bra{E_{n}}) = 1-C_{\mathrm{max}}$.

By further restricting to UMC measurements of \emph{minimal energy} (Appendix~\ref{appsec:min energy unbiased max corr measurements}) we obtain $q_{ll\pr|n}=0\ \forall E_{l}\neq E_{l\pr}$, because any off-diagonal elements with respect to energy eigenstates with different energies would imply extractable work, see, e.g.,~\cite{PuszWoronowicz1978}. Minimal energy UMC measurements thus imply a back-action on the measured system. Up to off-diagonal elements in degenerate subspaces, the unconditional post-measurement state for minimal energy UMC measurements is given by $\tilde{\rho}\Sys = \sum_{n} p_{n} \rho_n= \sum_{n,l} q_{ll|n} p_{n} \ket{E_{l}}\!\!\bra{E_{l}}$. By~\eqref{eq:noninv} this would be non-invasive when $\rho_{\Sys} = \tilde{\rho}_{\Sys}$ implying $q_{ll|n}=\delta_{ln} \;\forall n$.

\begin{figure}[t!]
\begin{center}
\includegraphics[width=0.95\columnwidth,trim={0cm 0mm 0cm 0mm}]{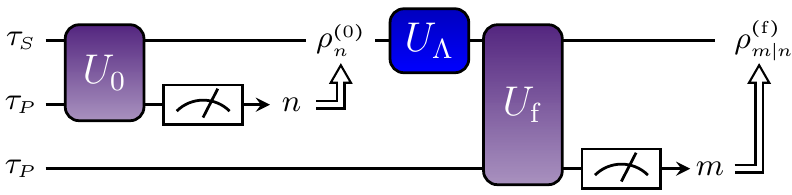}
\caption{Work estimation using two non-ideal measurements. To estimate the work done on or extracted from a system during a process $\Lambda$, two measurements are carried out before and after the process occurs. The respective outcomes labelled ``$n$" and ``$m$" allow concluding that the system is left in states $\rho\suptiny{0}{-1}{(0)}\subtiny{0}{-1}{n}$ and $\rho\suptiny{0}{-1}{(\mathrm{f})}\subtiny{0}{-1}{m|n}$. In the TPM scheme~\cite{TalknerLutzHaenggi2007}, these states are (pure) eigenstates of the system Hamiltonian. For non-ideal measurements modelled by unitaries $U_{0}$ and $U_{\mathrm{f}}$ coupling the system state to pointers
originally in thermal states $\tau\Sys$, $\rho\suptiny{0}{-1}{(0)}\subtiny{0}{-1}{n}$ and $\rho\suptiny{0}{-1}{(\mathrm{f})}\subtiny{0}{-1}{m|n}$ are mixed states.
}
\label{fig:tpmfig}
\end{center}
\end{figure}

At the same time, $C_{\mathrm{max}}$ can be understood as an indicator of the resource cost of a measurement: Qualitatively, increasing $C_{\mathrm{max}}$ requires more work, control over more complex pointers, more time to carry out operations, or combinations thereof~\cite{GuryanovaFriisHuber2018}. In particular, for fixed pointer systems, increasing $C_{\mathrm{max}}$ can only be achieved by investing work in the preparation of the initial state of the pointer.


\section{Estimating work with non-ideal measurements}\label{sec:stimating work with non-ideal measurements}

It is hardly surprising that the procedure of checking how much work is spent or extracted during a protocol itself costs work. In other words, obtaining a work estimate is accompanied by an additional, non-negative work-cost. In Eq.~\eqref{eq:specificWork} we saw that the specific (average) work $\Delta E_{\mathrm{meas}}$ of UMC measurements depends on the details of the Hamiltonians, the initial state $\rho\Sys$, the temperature, and the association of the states $\ket{E_{i}}\Sys$ with the projectors $\Pi_{i}$. It has a finite positive minimum value $\Delta E_{\mathrm{meas}}^{\mathrm{min}}>0$ and as pointed out in Ref.~\cite{GuryanovaFriisHuber2018} should not be taken for granted, as it may significantly outweigh the ideal expectation~$\mean{W}_{\Lambda}$ in Eq.~\eqref{eq:ideal tpm work}. Particular attention should be paid when it comes to machines using measurements as a means of injecting free energy into the system~\cite{ElouardHerreraMartiHuardAuffeves2017, ElouardJordan2018, BuffoniEtAl2019}, as such costs need to be included in an evaluation of the machine's efficiency.

In this work, we do not wish to focus on the specific cost of the measurement, but rather on the consequences for the work estimate itself. The imperfection of measurements is unavoidable and has immediate and interesting consequences for work estimation. To investigate, we modify the ordinary TPM scheme  and replace the ideal measurements by non-ideal (minimal energy UMC) measurements\footnote{In principle these two measurements can be different but we assume that they are both minimal energy UMC.} (Fig.~\ref{fig:tpmfig}). This assumption can be interpreted as the desire to restrict to measurements that are as close as possible to ideal ones while choosing the energetically cheapest way of doing so. A guiding intuition for the following analysis is that work estimation requires two measurements. If the first is non-ideal but unbiased, it necessarily disturbs the system. In particular, the invasiveness of the first measurement changes the statistics of the second, leading to deviations in the work estimation from the ideal case.

Because the measurements are assumed to be unbiased, the probability $p_{n}\suptiny{0}{0}{(0)}$ for obtaining the outcome $n$ in the first measurement is unchanged w.r.t. to the ideal scenario. However, since the system is disturbed by the measurement-induced back action, the conditional post-measurement state $\rho\suptiny{0}{0}{(0)}_{n}$ is no longer an eigenstate $\ket{E\suptiny{0}{0}{(0)}_{n}}$ of $H\suptiny{0}{0}{(0)}$. In particular, the conditional probability that the system is left in the eigenstate $\ket{E\suptiny{0}{0}{(0)}_{l}}$ after observing the pointer outcome $n$ in the first measurement is $q\suptiny{0}{-1}{(0)}_{ll|n}\neq\delta_{ln}$. Given the outcome $n$, the process $\Lambda$, thus acts on the state $\rho_{n}\suptiny{0}{0}{(0)} = \sum_{l} q\suptiny{0}{0}{(0)}_{ll|n}  \ketbra{E_l\suptiny{0}{0}{(0)}}{E_l\suptiny{0}{0}{(0)}}$ via a unitary $U_{\Lambda}$. Unbiasedness then implies that the conditional probability to obtain outcome $m$ in the second measurement given outcome $n$ in the first measurement is
\begin{align}
    p(m|n)  &=\,
    \bra{E_m\suptiny{0}{0}{(\mathrm{f})}}
        U_{\Lambda} \rho_{n}\suptiny{0}{0}{(0)} U^{\dagger}_{\Lambda}
    \ket{E_m\suptiny{0}{0}{(\mathrm{f})}}
    \nonumber\\[1mm]
    &=\,\sum\limits_{l}  q\suptiny{0}{-1}{(0)}_{ll|n}\,
    p_{\raisebox{-1pt}{\scriptsize{$l\!\rightarrow\!m$}}}\, \neq\, p_{\raisebox{-1pt}{\scriptsize{$n\!\rightarrow\!m$}}},
\end{align}
where $p_{\raisebox{-1pt}{\scriptsize{$n\!\rightarrow\!m$}}} =|\bra{E_{m}\suptiny{0}{0}{(\mathrm{f})}}U_{\Lambda}\ket{E_{n}\suptiny{0}{0}{(0)}}|^{2}$ as before (for details of this scheme see Appendix~\ref{sec:ttpm}). For the work estimate $\mean{W}_{\mathrm{non-id}}$ obtained within the non-ideal TPM scheme by (erroneously) associating outcomes $n$ and $m$ with energies $E_{n}\suptiny{0}{0}{(0)}$ and $E_{m}\suptiny{0}{0}{(\mathrm{f})}$ one obtains
\begin{align}
    \mean{W}_{\mathrm{non-id}}    &=\,\sum\limits_{m,n}
    p(m|n)\,p_{n}\suptiny{0}{0}{(0)}\,
    \bigl(E_{m}\suptiny{0}{0}{(\mathrm{f})}-E_{n}\suptiny{0}{0}{(0)}\bigr)\,\neq\,\mean{W}_{\Lambda},
    \label{eq:work nonideal measurements}
\end{align}
which generally does not match the ideal value in Eq.~(\ref{eq:ideal tpm work}).
In Appendix~\ref{sec:work nonid tpm} we explicitly derive the corrected expression to be
\begin{align}
    \mean{W}_{\mathrm{non-id}}    &=C_{\mathrm{max}}\mean{W}_{\Lambda}+
    \sum\limits_{\substack{m,n\\ l\neq n}}
    p_{\raisebox{-1pt}{\scriptsize{$l\!\rightarrow\!m$}}}\, q\suptiny{0}{-1}{(0)}_{ll|n}\,
    p_{n}\suptiny{0}{0}{(0)}\,
    \bigl(E_{m}\suptiny{0}{0}{(\mathrm{f})}-E_{n}\suptiny{0}{0}{(0)}\bigr).\nonumber\\[-4mm]
    \label{eq:ideal vs nonideal work}
\end{align}
Thus, we find that between the modified and ordinary TPM schemes, the deviation of the estimated work $\mean{W}_{\mathrm{non-id}}$ from the ideal value $\mean{W}_{\Lambda}$\footnote{obtained in the scenario with minimal energy UMC measurements using finite resources.} is characterised by two values: a modifying prefactor $C_{\mathrm{max}}<1$, along with an additional term which may be either positive or negative. Despite the generally complicated dependence on the details of the initial state, the measurement, energy spectrum and on the process $\Lambda$, we show, in Appendix~\ref{sec:id vs nonid tpm}, that
this deviation can be bounded,
\begin{align}
    |\mean{W}_{\mathrm{non-id}} - \mean{W}_{\Lambda}| &\leq\, (1-C_{\mathrm{max}})\N{H\suptiny{0}{0}{(\mathrm{f})}}_{\infty},
\end{align}
\begin{figure}[ht!]
\begin{center}
\includegraphics[width=0.95\columnwidth,trim={0cm 3mm 0cm 0mm}]{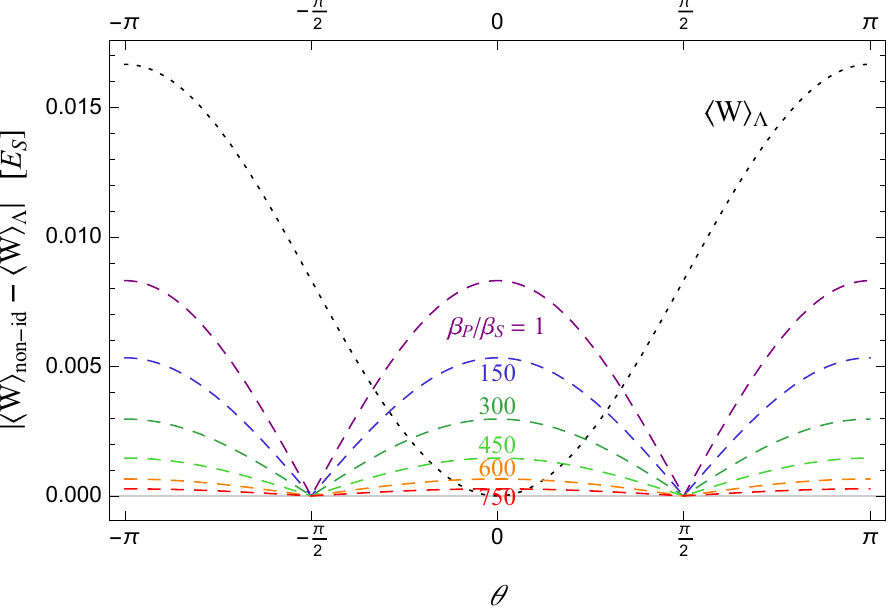}
\caption{Deviation from the ideal work estimate for a qubit system and a $3$-qubit pointer. For an atomic two-level system driven out of equilibrium by a classical electromagnetic field, the Rabi oscillations can be represented by the unitary $U_{\Lambda}(\theta) \equiv\exp\ch{-i\frac{\theta}{2}\sigma_y }$ (for a derivation see Appendix~\ref{appsec:driven atom}). The qubit is initially thermal, $\rho\suptiny{0}{0}{(0)}\Sys = \tau\suptiny{0}{0}{(0)}\Sys$, at room temperature $(k\subtiny{0}{0}{\mathrm{B}}\beta\Sys)^{-1} = 300$K, and has a Hamiltonian $H\Sys\suptiny{0}{0}{(0)}=H\Sys\suptiny{0}{0}{(\mathrm{f})}=-E\Sys\sigma_{z}/2$ with $\sigma_{z}=\ket{0}\!\!\bra{0}-\ket{1}\!\!\bra{1}$ and an energy gap in the microwave regime such that $\beta\Sys E\Sys \approx 1/30$. The 3-qubit pointer is initially in the thermal state $\tau(\beta\Poi)^{\otimes 3}$ where each of the three qubits has the same Hamiltonian $H\Poi=-E\Poi\sigma_{z}/2$ and we assume $E\Poi = E\Sys/10$. The dashed lines show the deviation $ |\mean{W}_{\mathrm{non-id}} - \mean{W}_{\Lambda}|$ in units of $E\Sys$ as a function of $\theta$ for different ratios of the system and pointer temperature, i.e., $\beta\Poi/\beta\Sys=1$, and from $\beta\Poi/\beta\Sys=150$ to $750$ in steps of $150$. The dotted black line shows the ideal work $\mean{W}_{\Lambda}$ (i.e., a pure state pointer). At $\theta = 0$, the deviation is maxmial while the ideal work estimate vanishes, whereas for $\theta=\pi$, the ratio $ |\mean{W}_{\mathrm{non-id}} - \mean{W}_{\Lambda}|/\mean{W}_{\Lambda}$ approaches $1/2$ (the precise value is $0.49875$).
}
\label{fig: difference vs wideal}
\end{center}
\end{figure}
\noindent
where $\N{A}_{\infty} := \max_i || \sum_j a_{ij}^*||_1$ and $A=(a_{ij})$. In this sense, $C_{\mathrm{max}}$ can be thought of as representing the \emph{trustworthiness} of the work estimate, in addition to its resource cost. For a fixed process $\Lambda$, the closer $C_{\mathrm{max}}$ is to $1$, the smaller the potential distance of the work estimate from its ideal value, but also the higher the involved costs.

A principal purpose of this paper is to highlight that the difference $\mean{W}_{\mathrm{non-id}} - \mean{W}_{\Lambda}$ is non-negligible and in some cases, rather significant. Indeed, in Fig.~\ref{fig: difference vs wideal} we plot this difference for a standard system-pointer Hamiltonian with realistic parameters and find that the difference between the estimated ideal and non-ideal work is at times more than twice as large.

One should be careful not to confuse $\mean{W}_{\mathrm{non-id}}$ with the total work performed on the system or with the system's change in average energy $\Delta E_{\mathrm{non-id}}$. The latter can be expressed as
\begin{align}
    &\Delta E_{\mathrm{non-id}}    \,=\,\sum\limits_{m,n,k}
    q\suptiny{0}{-1}{(\mathrm{f})}_{kk|m}\,p(m|n)\,p_{n}\suptiny{0}{0}{(0)}  \,
    \bigl(E_{k}\suptiny{0}{0}{(\mathrm{f})}-E_{n}\suptiny{0}{0}{(0)}\bigr)
    \label{eq:av energy change nonideal measurements}\\
    &\ =\,
    C_{\mathrm{max}}^{2}\,\mean{W}_{\Lambda}
    +\!\!\!\!\sum\limits_{\substack{m,n\\ l\neq n, k\neq m}}\!\!
    q\suptiny{0}{-1}{(\mathrm{f})}_{kk|m}
    p_{\raisebox{-1pt}{\scriptsize{$l\!\rightarrow\!m$}}}
    q\suptiny{0}{-1}{(0)}_{ll|n}
    p_{m}\suptiny{0}{0}{(0)}
    \bigl(E_{k}\suptiny{0}{0}{(\mathrm{f})}\!-\!E_{n}\suptiny{0}{0}{(0)}\bigr)\nonumber
    \vspace*{-2mm}\\
    &\ \ \ \ +\,C_{\mathrm{max}}\sum\limits_{m,n} p_{n}\suptiny{0}{0}{(0)}
    \bigl[
        \sum\limits_{k\neq m}q\suptiny{0}{-1}{(\mathrm{f})}_{kk|m}\,
        p_{\raisebox{-1pt}{\scriptsize{$n\!\rightarrow\!m$}}}\,
        \bigl(E_{k}\suptiny{0}{0}{(\mathrm{f})}\!-\!E_{n}\suptiny{0}{0}{(0)}\bigr)
        \nonumber\\
    &\hspace*{2.75cm}+
        \sum\limits_{l\neq n}q\suptiny{0}{-1}{(0)}_{ll|n}\,
        p_{\raisebox{-1pt}{\scriptsize{$l\!\rightarrow\!m$}}}\,
        \bigl(E_{m}\suptiny{0}{0}{(\mathrm{f})}\!-\!E_{n}\suptiny{0}{0}{(0)}\bigr)
    \bigr],\nonumber
\end{align}
and generally contains contributions associated with {both the} work done on the system and heat transferred from the pointer to the system. Here, we make two observations. First, the work estimate $\mean{W}_{\mathrm{non-id}}$ generally does not match the ideal work estimate $\mean{W}_{\Lambda}=\Delta E_{\Lambda}$, the actual work done on the system, or the change in average energy $\Delta E_{\mathrm{non-id}}$. Second, while $\Delta E_{\mathrm{non-id}}$ captures the average energy change of the system, the average energy change of the pointer is not yet included and has to be considered separately~\cite{GuryanovaFriisHuber2018}. This contribution depends on the specific dimension of the pointer and the structure of its Hamiltonian, which in turn determine $C_{\mathrm{max}}$. Ideal measurements can be approached by increasing the pointer dimension (e.g., measuring a system with an $N$-qubit pointer and increasing $N$), or by cooling the pointer to a smaller but non-vanishing temperature (using a desired refrigeration paradigm~\cite{ClivazSilvaHaackBohrBraskBrunnerHuber2019a, RodriguezBrionesMartinMartinezKempfLaflamme2017, RodriguezBrionesLiPengMorWeinsteinLaflamme2017, AlhambraLostaglioPerry2019,  ClivazSilvaHaackBohrBraskBrunnerHuber2019b}); these measures all increase the work cost of the measurement~\cite[\ref{appsec:driven atom}]{GuryanovaFriisHuber2018}. Achieving the limit $C_{\mathrm{max}}\rightarrow 1$, requires infinite time, infinite energy, or infinite control (e.g., $N\rightarrow\infty$) and in this limit one also recovers $\mean{W}_{\mathrm{non-id}}=\mean{W}_{\Lambda}=\Delta E_{\Lambda}=\Delta E_{\mathrm{non-id}}$. When limited by finite resources, $\mean{W}_{\mathrm{non-id}}=\mean{W}_{\Lambda}$ can only be achieved for specific processes $\Lambda$, as we discuss in Appendix~\ref{sec:id vs nonid tpm}.


\section{Fluctuation relations}\label{sec:fluctuation relations}

Besides work estimates, higher statistical moments of work are relevant in many contexts, from the study of quenched quantum many-body systems~\cite{Silva2008,DornerGooldCormickPaternostroVedral2012} to the performance of quantum thermal machines~\cite{CampisiPekolaFazio2015}, or the development of effective 
charging protocols~\cite{FriisHuber2018}. For discussions about the quantum-classical correspondence of work distributions see Refs.~\cite{Allahverdyan2014, TalknerHaenggi2016, JarzynskiQuanRahav2015, PerarnauLlobetBaeumerHovhannisyanHuberAcin2017, Lostaglio2018}.

The quantification of work fluctuations~\cite{DornerClarkHeaneyFazioGooldVedral2013, MazzolaDeChiaraPaternostro2013, FuscoEtAl2014, RoncagliaCerisolaPaz2014} in the context of Jarzynski's and Crooks' fluctuation relations using the ideal TPM scheme has been studied in Ref.~\cite{CampisiHaenggiTalkner2011}. Here we are interested in exploring whether these universal relations are recovered within the non-ideal TPM scheme. In particular we explore the statistical properties of the work estimate $\mean{W}_{\mathrm{non-id}}$. We first focus on the Jarzynski equality~\cite{Jarzynski1997} for which the quantity of interest is the work functional $\mean{e^{-\beta W}}$. Using the properties of the non-ideal TPM scheme we obtain
\begin{align}
    \mean{e^{-\beta W}}_{\mathrm{non-id}}   &= \chi\, e^{-\beta \Delta F},
    \label{eq:JJ nonideal}
\end{align}
where we have introduced the correction term
\begin{align} \label{eq:correction}
    \chi := \frac{1}{\mathcal{Z} \suptiny{0}{0}{(\mathrm{f})}}
    \sum\limits_{n,m,l}\!
    e^{-\beta E_{m}\suptiny{0}{0}{(\mathrm{f})}}
    q\suptiny{0}{-1}{(0)}_{ll|n}
    |\bra{\nl E_{m}\suptiny{0}{0}{(\mathrm{f})}\nl}
    U_{\Lambda}
    \ket{\nl E_{l}\suptiny{0}{0}{(\mathrm{0})}\nl}|^2.
\end{align}
Here, the partition function is $\mathcal{Z}\suptiny{0}{0}{(\mathrm{f})}:=\tr[e^{-\beta H\suptiny{0}{0}{(\mathrm{f})}}]$, and $\Delta F:= k\subtiny{0}{0}{\mathrm{B}} T \log(\mathcal{Z}\suptiny{0}{0}{(0)}/\mathcal{Z}\suptiny{0}{0}{(\mathrm{f})})$ is the difference in Helmholtz free energies between the initial state and the thermal state $\tau\suptiny{0}{0}{(\mathrm{f})}$ w.r.t. 
$H\suptiny{0}{0}{(\mathrm{f})}$. By construction $0 \leq \chi \leq d_S$.
When $\chi = 1$ one has the original Jarzynski relation, $\mean{e^{-\beta W}}   = e^{-\beta \Delta F}$, which is satisfied by all ideal measurements. We notice that the breakdown of the original Jarzynski equality is expected here, since the invasiveness of the first non-ideal measurement spoils the requirement that the system starts the process $\Lambda$ in thermal equilibrium.

Nonetheless, we find that there exists a class of non-ideal measurements for which $\chi = 1$ as well. This is the case when $\sum_{n}q\suptiny{0}{-1}{(0)}_{ll|n}=1\ \forall l$. This implies that in order to recover Jarzynski's relation, the matrix in Eq.~(\ref{eq:correction}) needs to be doubly stochastic. We call these measurements \emph{minimally invasive UMC measurements}, and they correspond to unital maps, whose average effect is to preserve the identity operator on the system (for details see Appendix~\ref{appsec:min invasive UMC measurements}). Using Eq.~\eqref{eq:JJ nonideal} and Jensen's inequality, $\exp{\mean{x}} \leq \mean{\exp{x}}$, we can write the following second-law-like inequality
\begin{align}
    \mean{W}_{\mathrm{non-id}} \geq \Delta F - k\subtiny{0}{0}{\mathrm{B}} T \log{\chi},
\end{align}
which provides a lower bound for the non-ideal TPM work estimate $\mean{W}_{\mathrm{non-id}}$. Since $0 \leq \chi \leq d_S$, the extra term $- k\subtiny{0}{0}{\mathrm{B}} T \log{\chi}$ can be either positive or negative. For non-ideal measurements that are unital ($\chi = 1$) this term vanishes and the equation above reduces to the usual second-law inequality $\mean{W}_{\mathrm{non-id}} \geq \Delta F$.

Interestingly, we observe that minimal energy UMC measurements do not, in general, correspond to minimally invasive measurements (except for the special case of a single-qubit system, see Appendix~\ref{appsec:min invasive UMC measurements}). This can be understood in the following way: For minimal energy UMC measurements, all elements of the post-interaction state $\tilde{\rho}\SP$ that depend on the outcome probabilities $\rho_{ii}=\bra{i}\rho\Sys\ket{i}$ can be collected in a correlation matrix of dimension $d\Poi/d\Sys\times d\Poi/d\Sys$ with blocks $\Gamma_{ij}=\ket{i}\!\!\bra{i}\otimes\Pi_{j}\tilde{\rho}\SP\ket{i}\!\!\bra{i}\otimes\Pi_{j}$. Property \eqref{item:unbiased main} (unbiasedness) fixes $\tr(\Gamma_{ij})=\rho_{jj}q_{ii|j}$ and maximal correlations are achieved when $q_{ii|i}=C_{\mathrm{max}}$ is satisfied. Minimal energy means that the remaining $q_{ii|j}$ for $i\neq j$ within each column are ordered such that $q_{ii|j}\geq q_{kk|j}\forall k\geq i$ with $k\neq j$. This is generally not compatible with row sums $\sum_{j}q_{ii|j}=1$ required for $q_{ii|j}$ to be doubly stochastic and thus correspond to a minimally invasive measurement.

One can, however, ensure that the measurement is minimally invasive \textit{and} unbiased at the expense of moving away from the energy minimum. One can show that this additional cost only depends on the system Hamiltonian (and not the pointer Hamiltonian). Thus, by accepting this cost, non-ideal measurements can satisfy Jarzynski's relation.


\begin{figure}[t]
\begin{center}
\includegraphics[scale=0.334]{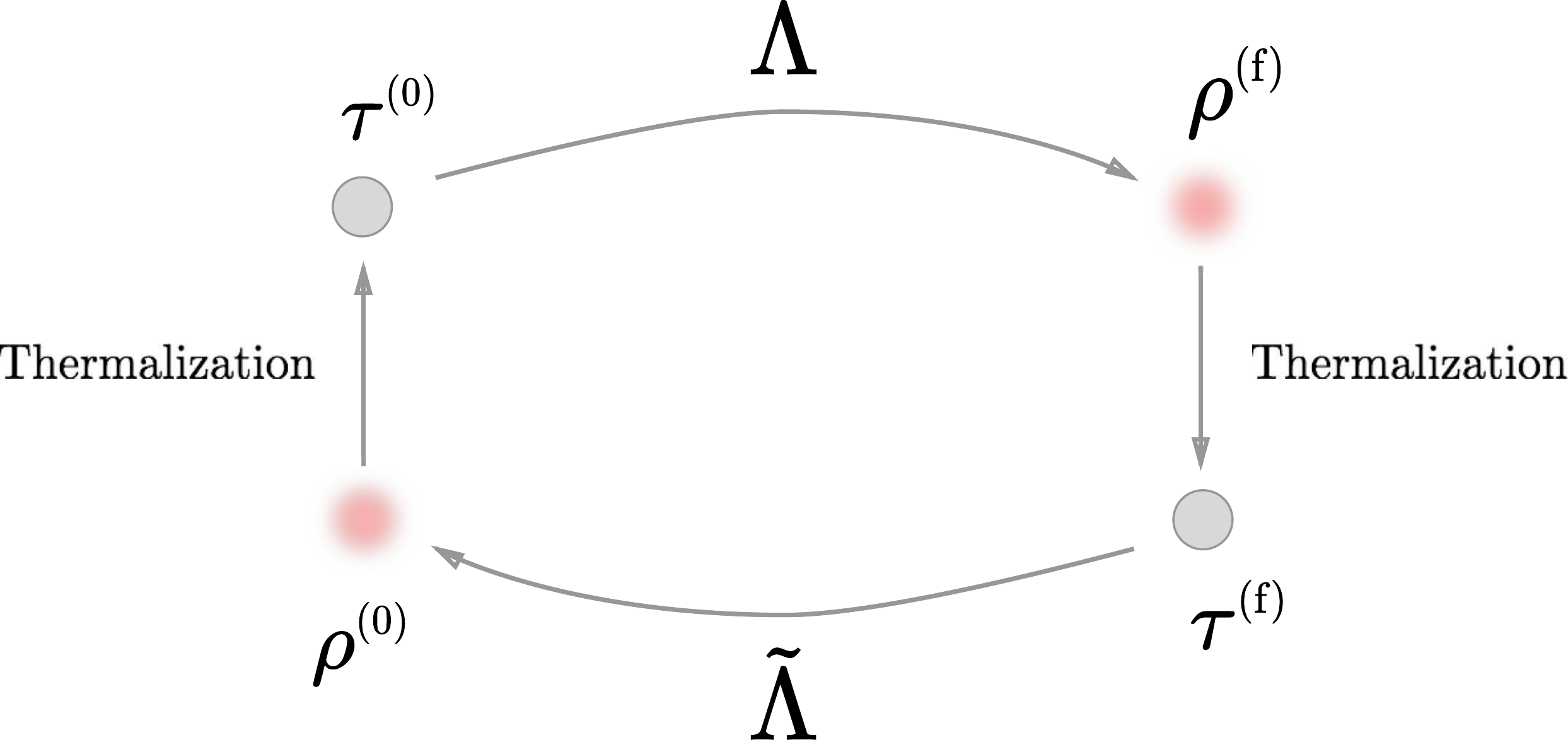}
\caption{Schematic representing the non-reversibility of the process $\Lambda$. The system is initially at a thermal state $\tau\suptiny{0}{0}{(0)} = \exp(-\beta H\suptiny{0}{0}{(0)})/\mathcal{Z}\suptiny{0}{0}{(0)}$ is driven out of equilibrium by the unitary transformation $U_{\Lambda}$ to the state $\rho\suptiny{0}{0}{(\mathrm{f})} = \U{ \tau\suptiny{0}{0}{(0)}}$. At the end of the process, the system is left to thermalise to the equilibrium state $\tau\suptiny{0}{0}{(\mathrm{f})} = \exp(-\beta H\suptiny{0}{0}{(\mathrm{f})})/\mathcal{Z}\suptiny{0}{0}{(\mathrm{f})}$,  described by the Hamiltonian $H\suptiny{0}{0}{(\mathrm{f})}$. In the time reversed protocol, the thermal state $\tau\suptiny{0}{0}{(\mathrm{f})}$ is driven out of equilibrium to state $\rho\suptiny{0}{0}{(0)} = {U_{\tilde\Lambda} \tau\suptiny{0}{0}{(\mathrm{f})}U^{\dagger}_{\tilde\Lambda}} $. The process is reversible if $\rho\suptiny{0}{0}{(\mathrm{f})} = \tau\suptiny{0}{0}{(\mathrm{f})}$ and $\rho\suptiny{0}{0}{(0)} = \tau\suptiny{0}{0}{(0)}$.}
\label{fig:dissipation}
\end{center}
\end{figure}
To investigate the effect of non-ideal measurements on the irreversibilty of a process, we turn to
Crooks' theorem~\cite{Crooks1999}. This relates the probabilities $P_{\mathrm{F}}(W)$ of performing some work during a realisation of the TPM scheme and $P_{\mathrm{B}}(-W)$ for extracting the same amount of work when the time-reversed protocol $\tilde{\Lambda}$ is implemented:
\begin{equation}
    P_{\mathrm{B}}(-W) = e^{-\beta(W - \Delta F)} {P}_{\mathrm{F}}(W).
    \label{eq:Crooks}
\end{equation}
The quantity $W-\Delta F$ is usually referred to as the {\it dissipated work}, the work which is lost when the final state of the TPM after the protocol, $\rho\suptiny{0}{0}{(\mathrm{f})}$ in Eq.~(\ref{eq:process}), relaxes back to equilibrium at temperature $T$.
As we discuss in  Appendix \ref{sec:non ideal crooks}, Crooks' relation \eqref{eq:Crooks} is not recovered in the non-ideal TPM scheme, not even for minimally invasive UMC measurements (unlike the Jarzynski equality).
This is in contrast to the ideal TPM scheme, where Crooks' theorem can be recovered for all unitary and unital maps~\cite{CampisiTalknerHaenggi2010, Rastegin2013, WatanabeVenkateshTalknerCampisiHaenggi2014, ManzanoHorowitzParrondo2015}.
The reason is that, here, both non-ideal measurements act as (independent) noise sources, disturbing the initial states of the forward and backward TPM processes, respectively.
Reestablishing Crooks' fluctuation theorem may eventually require considering the work performed in the measurement processes and, therefore, taking into account the energy changes in the pointers.

Eq.~(\ref{eq:Crooks}) expresses the fact that finite-time processes that drive systems out of equilibrium  are irreversible and thus the consumed work is unlikely to be recovered when reversing the protocol (See Fig.~\ref{fig:dissipation}). This irreversibility can be captured~\cite{KawaiParrondoVanDenBroeck2007, ParrondoVanDenBroeckKawai2009} by the average of the dissipated work appearing in Eq.~(\ref{eq:Crooks}), and is related to the entropy production during a hypothetical relaxation of $\rho\suptiny{0}{0}{(\mathrm{f})}$ to the thermal equilibrium state $ \tau\suptiny{0}{0}{(\mathrm{f})}$~\cite{ManzanoHorowitzParrondo2015, DeffnerPazZurek2016}. When the measurements for determining this dissipated work are non-ideal, additional entropy is produced, resulting in greater energy dissipation in the final relaxation. In Appendix~\ref{app:irre}, we calculate the average estimated work performed in addition to the free energy change to be
\begin{align}
    {\mean{W}_{\mathrm{non-id}}} - \Delta F =  k\subtiny{0}{0}{\mathrm{B}} T [\Delta S\suptiny{0}{0}{(0)} + D(\tilde{\rho}\suptiny{0}{0}{(\mathrm{f})}||\hspace*{1pt} \tau\suptiny{0}{0}{(\mathrm{f})})],
    \label{eq:dissipation}
\end{align}
where $\tilde{\rho}\suptiny{0}{0}{(\mathrm{f})}$ is the unconditional final state after the process $\Lambda$, $D(\rho\suptiny{0}{0}{(\mathrm{f})}||\tau\suptiny{0}{0}{(\mathrm{f})})  \geq 0$ is the relative entropy quantifying the irreversibility in the ideal process~\cite{Spohn1978, BreuerPetruccione2002} and $\Delta S\suptiny{0}{0}{(0)} = S(\tilde{\rho}\suptiny{0}{0}{(\mathrm{0})})- S(\tau\suptiny{0}{0}{(0)})$ is the corresponding change in von~Neumann entropy of the system due to the first non-ideal measurement. For non-ideal, minimally invasive measurements $({\mean{W}_{\mathrm{non-id}}} - \Delta F) \ge 0$  in which case it may be interpreted as the entropy production. For these measurements, the von~Neumann entropy of the system cannot decrease~\cite{BreuerPetruccione2002} and the entropy of the pointer during the measurement does not change (see, e.g.,~\cite{ManzanoHorowitzParrondo2015}). Thus, the total entropy produced in the measurement coincides with the entropy change in the system, $\Delta S\suptiny{0}{0}{(0)} \geq 0$. When ideal projective measurements are considered, $\Delta S\suptiny{0}{0}{(0)} = 0$, and the usual expression for the entropy production is recovered. In this ideal case, Kawai, Parrondo and Van den Broeck ~\cite{KawaiParrondoVanDenBroeck2007, ParrondoVanDenBroeckKawai2009} derived an important result in non-equilibrium thermodynamics, which is closely related to Crook's fluctuation theorem
\begin{align}
    \mean{W}_\Lambda - \Delta F = k\subtiny{0}{0}{\mathrm{B}} T D(\rho_{\mathrm{F}}(t)|| \Theta^\dagger \rho_{\mathrm{B}}(t_\mathrm{f}- t) \Theta).
    \label{eq:Idealdissipation}
\end{align}
Here, $\rho_{\mathrm{F}}(t)$ and $\rho_{\mathrm{B}}(t_\mathrm{f} - t)$ are the density operators in the forward and backward processes taken at the same instance of time $t$ and $\Theta$ is the time-reversal operator in quantum mechanics. $\Theta$ is anti-unitary, satisfies $\Theta i \mathds{1} = -i \mathds{1} \Theta$ and $\Theta \Theta^\dagger = \Theta^\dagger \Theta = \mathds{1}$, and is responsible for changing the sign of odd variables under time reversal (momentum, magnetic field, etc.)~\cite{Haake}.
Equation~\eqref{eq:Idealdissipation} establishes a deep relationship between the physical and information-theoretical notion of irreversibility. Namely, it connects the dissipated work (left-hand side), which is a physical measure of irreversibility, with the relative entropy at any snapshot of time (right-hand side), which is an information-theoretical measure. We are able to extend this result to the non-ideal TPM scheme by using the generalised dissipation relation in Eq.~\eqref{eq:dissipation}. In the modified scheme we get that
\begin{align}
    \mean{W}_{\mathrm{non-id}} - \Delta F = k\subtiny{0}{0}{\mathrm{B}} T [&\Delta S_0 + \Delta D_{\mathrm{f}} +\nonumber\\ +& D(\rho_{\mathrm{F}}(t)||\hspace*{1pt} \Theta^\dagger \rho_{\mathrm{B}}(t_\mathrm{f}-t)\Theta)],
    \label{eq:dissipation2}
\end{align}
where $\rho_{\mathrm{F}}(t) = U_\Lambda(t,0) \tilde \rho \suptiny{0}{0}{(\mathrm{f})} U^\dagger_\Lambda(t,0)$ is the system state at intermediate time $0 \leq t \leq t_\mathrm{f}$ in the forward process and $\rho_{\mathrm{B}}(t_\mathrm{f} - t) = U_{\tilde \Lambda}(t_\mathrm{f}-t,0) \rho_{\mathrm{B}} \suptiny{0}{0}{(\mathrm{f})} U^\dagger_{\tilde \Lambda}(t_\mathrm{f}-t,0)$ is the state of the system at the same instance of time in the backward process\footnote{
$U_{\tilde \Lambda}(t_\mathrm{f} - t,0)$ is the unitary evolution generated by the time-reversed protocol $\tilde \Lambda$.}
(see Appendix~\ref{sec:non ideal crooks}). We have introduced the correction term
\begin{align}
    \Delta D_{\mathrm{f}}
    &= \tr[\tilde{\rho}\suptiny{0}{0}{(\mathrm{f})} \left( \log \tau\suptiny{0}{0}{(\mathrm{f})} - \log \rho_{\mathrm{B}}\suptiny{0}{0}{(\mathrm{f})} \right)],
\end{align}
whereby starting in $\tau\suptiny{0}{0}{(\mathrm{f})}$, the average system state $\rho_{\mathrm{B}}\suptiny{0}{0}{(\mathrm{f})}$ is obtained after the first non-ideal measurement in the backward process. The ideal case in Eq.~\eqref{eq:Idealdissipation} is recovered when  $\rho_{\mathrm{B}}\suptiny{0}{0}{(\mathrm{f})} = \tau\suptiny{0}{0}{(\mathrm{f})}$ and $\tilde{\rho}\suptiny{0}{0}{(\mathrm{0})} = \tau\suptiny{0}{0}{(0)}$, making $\Delta D_\mathrm{f} = 0$ and $\Delta S\suptiny{0}{0}{(0)} = 0$.
Finally, we stress that, analogously to what happens with the result in Eq.~\eqref{eq:Idealdissipation} for the ideal TPM scheme, Eqs.~\eqref{eq:dissipation} and \eqref{eq:dissipation2} can be turned into inequalities for open system dynamics (see Appendix~\ref{app:irre}).


\section{Discussion}\label{sec:discussion}

We have studied the consequences of fundamentally unavoidable measurement imperfections on the estimation of work and its fluctuations in out-of-equilibrium processes. Non-ideal measurements lead to a mismatch between the obtained estimate $\mean{W}_{\mathrm{non-id}}$, the desired ideal estimate $\mean{W}_{\Lambda}$, and the actual work performed on the system during the non-ideal TPM. In addition, an energy cost is incurred for operating the measurement apparatus. This leads to the conclusion that the process of estimating work itself has a work cost, which increases with increasing precision of the estimate. Moreover, we find that the statistical properties of the non-ideal estimate $\mean{W}_{\mathrm{non-id}}$ are modified. While the celebrated Jarzynski relation may be recovered exactly by imposing specific conditions on the measurement scheme, the more general Crooks theorem no longer holds. In this context, we discussed the connection between the non-ideal work estimate and the entropy production in the TPM scheme, and extended previous results for the relation between dissipation and irreversibility.

These results are of particular relevance for work extraction: When the costs for estimating the extracted work are of the order of the extracted work itself the usefulness of the procedure is dramatically limited. Conceptually, our results can be seen as a constructive resolution of the perceived shortcomings of the TPM discussed in~\cite{DeffnerPazZurek2016}. It might also be interesting to consider work estimates as well as Jarzynski's equality and Crooks' theorem in more general contexts, such as including feedback control strategies~\cite{FunoUedaSagawa2018, PottsSamuelsson2018}.

Our results about the validity of the Jarzynski equality and Crooks' fluctuation theorem for the non-ideal TPM scheme are in agreement with very recent results reported in Ref.~\protect\cite{ItoTalknerVenkatesh2019}, which appeared during the final stages of preparing this manuscript.


\emph{Acknowledgements}.
T.D. acknowledges support from the Brazilian agency CNPq INCT-IQ through the project (465469/2014-0) and from the Austrian Science Fund (FWF) through the project P 31339-N27.
Y.G., M.H. and N.F. acknowledge support from the Austrian Science Fund (FWF) through the START project Y879-N2, the joint Czech-Austrian project MultiQUEST (I 3053-N27 and GF17-33780L) and the project P 31339-N27.


\bibliographystyle{apsrev4-1fixed_with_article_titles_full_names}
\bibliography{bibfile}


\newpage
\hypertarget{sec:appendix}
\onecolumngrid
\appendix
\renewcommand{\thesubsection}{A.\arabic{section}.\alph{subsection}}
\renewcommand{\thesection}{A.\arabic{section}}
\setcounter{equation}{0}
\numberwithin{equation}{section}
\setcounter{figure}{0}
\renewcommand{\thefigure}{A.\arabic{figure}}

\section*{Appendices}

In these appendices, we provide more details on the mathematical model for non-ideal measurements from Ref.~\cite{GuryanovaFriisHuber2018} and its application for work estimation described in this paper. In Appendix~\ref{appendix:Ideal and Non-Ideal Measurements}, we describe the properties of the important class of \emph{unbiased maximally correlated} (UMC) measurements, as well as of minimal energy UMC measurements and illustrate them for the case of $3$-dim quantum system.  In Appendix~\ref{sec:ttpm}, we give a detailed derivation of the joint probability for the measurement outcomes in the TPM scheme~\cite{TalknerLutzHaenggi2007, CampisiHaenggiTalkner2011} using non-ideal measurements. We then derive the corresponding work estimate in Appendix~\ref{sec:work estimation nonid tpm} and the change in average energy throughout the estimation process in Appendix~\ref{sec:change of average energy}. Finally, we discuss the consequences for fluctuation theorems: In Appendix~\ref{sec:Jarjar for non-ideal measurements}, we discuss the modification of Jarzynski's relation, while Appendix~\ref{sec:non ideal crooks} explains why Crooks' theorem generally no longer holds in the presence of non-ideal measurements. In Appendix~\ref{appsec:driven atom} we present a physical model as an example of the methods discussed in this work.


\section{Ideal and Non-Ideal Measurements}\label{appendix:Ideal and Non-Ideal Measurements}

We consider a measurement of a system described by an unknown quantum state $\rho\Sys\in\M{D}(\M{H}\Sys)$, where $\M{D}(\M{H}\Sys)$ represents the set of density-matrices over the Hilbert space $\M{H}\Sys$. We model the measurement as an interaction between the system and a measurement apparatus (pointer) described by a quantum system with Hilbert space $\M{H}\Poi$. We consider the pointer to be initially described by the thermal state $\tau\Poi(\beta\Poi) = \exp(-\beta\Poi H\Poi)/\mathcal{Z}\Poi $ at ambient temperature $T\Poi=(k\subtiny{0}{0}{\mathrm{B}}\beta\Poi)^{-1}$ and with Hamiltonian $H\Poi = \sum_{i} E_i\suptiny{0}{0}{({P})} \ketbra{E_i\suptiny{0}{0}{({P})}}{E_i\suptiny{0}{0}{({P})}}$. This ensures that the initial state of the pointer does not contain any extractable work with respect to an environment at temperature $T\Poi$. Alternatively, one can consider the temperature $T\Poi$ to be lower than the environment temperature, assuming that work has been invested to prepare the pointer by cooling it down to $T\Poi$. System and pointer are correlated by a unitary $U_{\mathrm{meas}}$ on the joint space $\M{H}\Sys\otimes\M{H}\Poi$ such that all work supplied to the joint system can be identified with the overall change in average energy due to the unitary transformation, resulting in a post-measurement state
\begin{align}
   \tilde{\rho}\SP & := U_{\mathrm{meas}}(\rho\Sys\otimes\tau\Poi) U_{\mathrm{meas}}^{\dagger}.
\end{align}
Within the pointer Hilbert space, different subspaces are assigned to represent the different measurement outcomes. More specifically, if the system of dimension $d\Sys$ is to be measured in the basis $\{\ket{n}\}_{n=0,\ldots,d\Sys-1}$, then a set of orthogonal projectors $\{\Pi_{n}\}_{n=0,\ldots,d\Sys-1}$ on the pointer Hilbert space is chosen to represent these outcomes, such that $\sum_{n}\Pi_{n}=\mathds{1}\Poi$ and $\Pi_{m}\Pi_{n}=\delta_{mn}\Pi_{n}$ and $\tr(\Pi_n)\d\Poi/d\Sys$ for all $n$. The general setup is illustrated in Fig.~\ref{fig:non_ideal}.


\begin{figure}[ht!]
\begin{center}
\includegraphics[width=0.2\columnwidth,trim={0cm 0mm 0cm 0mm}]{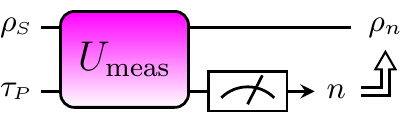}
\caption{Circuit representing non-ideal measurements. The system in an initial state $\rho\Sys$ interacts with the pointer originally in a thermal state $\tau\Poi$ by means of unitary $U_{\mathrm{meas}}$. The probability for obtaining a measurement outcome $n$ is $p_{n}$, and the post-measurement system state conditioned on having obtained outcome $n$ is $\rho_{n}$.}
\label{fig:non_ideal}
\end{center}
\end{figure}
\vspace*{-2mm}


\subsection{Unbiased Measurements}\label{appsec:unbiased measurements}

As explained in Sec.~\ref{sec:framework} (following Ref.~\cite{GuryanovaFriisHuber2018}), ideal measurements have three characteristic properties, they are (\ref{item:faithful main}) \emph{faithful} (perfect correlation between the pointer and system), (\ref{item:unbiased main}) \emph{unbiased} (pointer exactly reproduces the system statistics in the measured basis), and (\ref{item:non-invasive main}) \emph{non-invasive} (system diagonal is undisturbed by the interaction). However, satisfying property (\ref{item:faithful main}) requires the preparation of pure pointer states, which is not possible with finite resources according to the third law of thermodynamics. At the same time, this implies that realistic non-ideal measurements cannot satisfy both properties (\ref{item:unbiased main}) and (\ref{item:non-invasive main}) simultaneously, see Ref.~\cite[Appendix A.3]{GuryanovaFriisHuber2018}. We hence focus on non-ideal measurement procedures that are \emph{unbiased}, i.e., which satisfy
\begin{align}
    \tr\bigl(\mathds{1}\Sys\otimes\Pi_{n}\tilde{\rho}\SP\bigr)    &=\,\tr\bigl(\ket{n}\!\!\bra{n}\Sys\,\rho\Sys\bigr)\ \forall\,n\ \ \forall \,
    \rho\Sys\,.
    \label{eq:unbiased}
\end{align}
Provided that the pointer cannot be prepared in a pure state (i.e., that the measurement uses finite resources according to the third law of thermodynamics), the corresponding unitary $U_{\mathrm{meas}}$ can be separated into two consecutive unitaries $\tilde{U}$ and $V$, such that $U_{\mathrm{meas}}=V\tilde{U}$, where
\begin{align}
    \tilde{U}    &=\sum\limits_{n=0}^{d\Sys-1}\ket{n}\!\!\bra{n}\Sys\otimes \tilde{U}\suptiny{0}{0}{(n)},\qquad\text{and}\qquad
    V   =\sum\limits_{m,n=0}^{d\Sys-1}\sum\limits_{i=1}^{d\Poi/d\Sys}
    \ket{m}\!\!\bra{n}\Sys\otimes
    \ket{\tilde{\psi}_{i}\suptiny{0}{0}{(n)}}\!\!\bra{\tilde{\psi}_{i}\suptiny{0}{0}{(m)}\!}\Poi.
    \label{eq:lemma 2 from other paper}
\end{align}
Here, $\{\ket{\tilde{\psi}_{i}\suptiny{0}{0}{(n)}}\}_{i,n}$ is an orthonormal basis of $\mathcal{H}\Poi$ such that $\Pi_{n}=\sum_{i} \ket{\tilde{\psi}_{i}\suptiny{0}{0}{(n)}}\!\!\bra{\tilde{\psi}_{i}\suptiny{0}{0}{(n)}}$ and the $\tilde{U}\suptiny{0}{0}{(n)}$ are arbitrary unitaries on $\mathcal{H}\Poi$, and we have assumed that the dimension $d\Poi$ of the pointer is an integer multiple of the system dimension $d\Sys$ (or is truncated to such a dimension). The intuition behind this decomposition is as follows. The operation $V$ ensures unbiasedness by mapping the subspace corresponding to the system state $\ket{n}\Sys$ to the subspace corresponding to the pointer outcome $n$ (i.e., $\Pi_{n}$), that is, $V$ can be understood as a type of swap between these subspaces. The operation $\tilde{U}$, meanwhile, adds additional freedom by allowing unitary transformations within each of the subspaces for the states $\ket{n}\Sys$. To illustrate this transformation, let us consider an example of a $3$-dimensional system with non-degenerate Hamiltonian, and a suitable pointer of dimension $d\Poi=3\lambda$ for $\lambda\in\mathbb{N}$. The initial state $\rho\Sys\otimes\tau\Poi$ of system and pointer can be expressed in matrix form as
\begin{align}
    \rho\SP &=\,
    \left(\begin{array}{c|c|c}
        \tikzmarkin[ver=style green, line width=0mm,]{a1}\raisebox{2pt}{\protect\footnotesize{$p_0\,\tau\Poi$}}\tikzmarkend{a1}  & \cdot & \cdot \\
        \hline
        \cdot & \tikzmarkin[ver=style orange, line width=0mm,]{a2}\raisebox{2pt}{\protect\footnotesize{\hspace*{1pt}$p_1\,\tau\Poi$\hspace*{1pt}}}\tikzmarkend{a2}  & \cdot \\
        \hline
        \cdot  & \cdot & \tikzmarkin[ver=style cyan, line width=0mm,]{a33}\raisebox{2pt}{\protect\footnotesize{$p_2\,\tau\Poi$}}\tikzmarkend{a33}
    \end{array}\right),\nonumber\\[-5mm]
    &\phantom{=\,( }
    \begin{array}{ccc}
            \hspace*{9mm} &  &  \\[-2mm]
        $\upbracefill$ & \hspace*{-1.8mm}$\upbracefill$ & \hspace*{-1.5mm}$\upbracefill$  \\[-2mm]
        \hspace*{1.7mm}\text{\protect\scriptsize{$\ket{0}\Sys$}} & \hspace*{1.7mm}\text{\protect\scriptsize{$\ket{1}\Sys$}} & \hspace*{1.7mm}\text{\protect\scriptsize{$\ket{2}\Sys$}}
    \end{array}
    \label{eq:instate}
\end{align}
where $p_{i}=\bra{i}\rho\Sys\ket{i}$ and the dots indicate potentially nonzero off-diagonal elements of $\rho\Sys$ that are not shown here to keep the example simple, but could be included explicitly if desired. An unbiased measurement procedure realized by a unitary interaction with the pointer then leads to a final joint state of the form
\vspace*{-5mm}
\begin{align}
   &
  \begin{array}{ccccccccc}
  \hspace*{5mm} & \hspace*{5mm} & \hspace*{5mm} &
  \hspace*{5mm} & \hspace*{5mm} & \hspace*{5mm} &
  \hspace*{5mm} & \hspace*{5mm} & \hspace*{5mm} \\
  \hspace*{3.5mm}\text{\protect\scriptsize{$\Pi_{0}$}} & \hspace*{0mm}\text{\protect\scriptsize{$\Pi_{1}$}} & \hspace*{0mm}\text{\protect\scriptsize{$\Pi_{2}$}} &
  \hspace*{0mm}\text{\protect\scriptsize{$\Pi_{0}$}} & \hspace*{0mm}\text{\protect\scriptsize{$\Pi_{1}$}} & \hspace*{0mm}\text{\protect\scriptsize{$\Pi_{2}$}} &
  \hspace*{0mm}\text{\protect\scriptsize{$\Pi_{0}$}} & \hspace*{0mm}\text{\protect\scriptsize{$\Pi_{1}$}} & \hspace*{0mm}\text{\protect\scriptsize{$\Pi_{2}$}} \\[-1.5mm]
  \hspace*{2.0mm}$\downbracefill$ & \hspace*{-1.3mm}$\downbracefill$ & \hspace*{-1.3mm}$\downbracefill$ &
  \hspace*{-1.3mm}$\downbracefill$ & \hspace*{-1.3mm}$\downbracefill$ & \hspace*{-1.3mm}$\downbracefill$ &
  \hspace*{-1.3mm}$\downbracefill$ & \hspace*{-1.3mm}$\downbracefill$ & \hspace*{-1.3mm}$\downbracefill$
  \end{array}
  \nonumber\\[-1mm]
{\normalsize \tilde{\rho}\SP}=  &
  \left(\begin{array}{c|c|c|c|c|c|c|c|c}
    \tikzmarkin[ver=style green, line width=0mm,]{g00}
        \raisebox{2pt}{\!\protect\scriptsize{\hspace*{1mm}$\Gamma_{00}$\hspace*{1mm}}}
    \tikzmarkend{g00}
    & \cdot & \cdot &
    \tikzmarkin[ver=style green, line width=0mm,]{g01}
        \phantom{\raisebox{2pt}{\!\protect\scriptsize{\hspace*{1mm}$\Gamma_{00}$\hspace*{1mm}}}}
    \tikzmarkend{g01}
    & \cdot & \cdot &
    \tikzmarkin[ver=style green, line width=0mm,]{g02}
        \phantom{\raisebox{2pt}{\!\protect\scriptsize{\hspace*{1mm}$\Gamma_{00}$\hspace*{1mm}}}}
    \tikzmarkend{g02}
    & \cdot & \cdot \\
    \hline
    \cdot &
    \tikzmarkin[ver=style orange, line width=0mm,]{o00}
        \raisebox{2pt}{\!\protect\scriptsize{\hspace*{1mm}$\Gamma_{01}$\hspace*{1mm}}}
    \tikzmarkend{o00}
    & \cdot & \cdot &
    \tikzmarkin[ver=style orange, line width=0mm,]{o01}
        \phantom{\raisebox{2pt}{\!\protect\scriptsize{\hspace*{1mm}$\Gamma_{01}$\hspace*{1mm}}}}
    \tikzmarkend{o01}
    & \cdot & \cdot &
    \tikzmarkin[ver=style orange, line width=0mm,]{o02}
        \phantom{\raisebox{2pt}{\!\protect\scriptsize{\hspace*{1mm}$\Gamma_{01}$\hspace*{1mm}}}}
    \tikzmarkend{o02}
    & \cdot \\
    \hline
    \cdot & \cdot &
    \tikzmarkin[ver=style cyan, line width=0mm,]{c00}
        \raisebox{2pt}{\!\protect\scriptsize{\hspace*{1mm}$\Gamma_{02}$\hspace*{1mm}}}
    \tikzmarkend{c00}
    & \cdot & \cdot &
    \tikzmarkin[ver=style cyan, line width=0mm,]{c01}
        \phantom{\raisebox{2pt}{\!\protect\scriptsize{\hspace*{1mm}$\Gamma_{02}$\hspace*{1mm}}}}
    \tikzmarkend{c01}
    & \cdot & \cdot &
    \tikzmarkin[ver=style cyan, line width=0mm,]{c02}
        \phantom{\raisebox{2pt}{\!\protect\scriptsize{\hspace*{1mm}$\Gamma_{02}$\hspace*{1mm}}}}
    \tikzmarkend{c02}
    \\
    \hline
    \tikzmarkin[ver=style green, line width=0mm,]{g10}
        \phantom{\raisebox{2pt}{\!\protect\scriptsize{\hspace*{1mm}$\Gamma_{10}$\hspace*{1mm}}}}
    \tikzmarkend{g10}
    & \cdot & \cdot &
    \tikzmarkin[ver=style green, line width=0mm,]{g11}
        \raisebox{2pt}{\!\protect\scriptsize{\hspace*{1mm}$\Gamma_{10}$\hspace*{1mm}}}
    \tikzmarkend{g11}
    & \cdot & \cdot &
    \tikzmarkin[ver=style green, line width=0mm,]{g12}
        \phantom{\raisebox{2pt}{\!\protect\scriptsize{\hspace*{1mm}$\Gamma_{10}$\hspace*{1mm}}}}
    \tikzmarkend{g12}
    & \cdot & \cdot \\
    \hline
    \cdot &
    \tikzmarkin[ver=style orange, line width=0mm,]{o10}
        \phantom{\raisebox{2pt}{\!\protect\scriptsize{\hspace*{1mm}$\Gamma_{11}$\hspace*{1mm}}}}
    \tikzmarkend{o10}
    & \cdot & \cdot &
    \tikzmarkin[ver=style orange, line width=0mm,]{o11}
        \raisebox{2pt}{\!\protect\scriptsize{\hspace*{1mm}$\Gamma_{11}$\hspace*{1mm}}}
    \tikzmarkend{o11}
    & \cdot & \cdot &
    \tikzmarkin[ver=style orange, line width=0mm,]{o12}
        \phantom{\raisebox{2pt}{\!\protect\scriptsize{\hspace*{1mm}$\Gamma_{11}$\hspace*{1mm}}}}
    \tikzmarkend{o12}
    & \cdot \\
    \hline
    \cdot & \cdot &
    \tikzmarkin[ver=style cyan, line width=0mm,]{c10}
        \phantom{\raisebox{2pt}{\!\protect\scriptsize{\hspace*{1mm}$\Gamma_{12}$\hspace*{1mm}}}}
    \tikzmarkend{c10}
    & \cdot & \cdot &
    \tikzmarkin[ver=style cyan, line width=0mm,]{c11}
        \raisebox{2pt}{\!\protect\scriptsize{\hspace*{1mm}$\Gamma_{12}$\hspace*{1mm}}}
    \tikzmarkend{c11}
    & \cdot & \cdot &
    \tikzmarkin[ver=style cyan, line width=0mm,]{c12}
        \phantom{\raisebox{2pt}{\!\protect\scriptsize{\hspace*{1mm}$\Gamma_{12}$\hspace*{1mm}}}}
    \tikzmarkend{c12}
    \\
    \hline
    \tikzmarkin[ver=style green, line width=0mm,]{g20}
        \phantom{\raisebox{2pt}{\!\protect\scriptsize{\hspace*{1mm}$\Gamma_{20}$\hspace*{1mm}}}}
    \tikzmarkend{g20}
    & \cdot & \cdot &
    \tikzmarkin[ver=style green, line width=0mm,]{g21}
        \phantom{\raisebox{2pt}{\!\protect\scriptsize{\hspace*{1mm}$\Gamma_{20}$\hspace*{1mm}}}}
    \tikzmarkend{g21}
    & \cdot & \cdot &
    \tikzmarkin[ver=style green, line width=0mm,]{g22}
        \raisebox{2pt}{\!\protect\scriptsize{\hspace*{1mm}$\Gamma_{20}$\hspace*{1mm}}}
    \tikzmarkend{g22}
    & \cdot & \cdot \\
    \hline
    \cdot &
    \tikzmarkin[ver=style orange, line width=0mm,]{o20}
        \phantom{\raisebox{2pt}{\!\protect\scriptsize{\hspace*{1mm}$\Gamma_{21}$\hspace*{1mm}}}}
    \tikzmarkend{o20}
    & \cdot & \cdot &
    \tikzmarkin[ver=style orange, line width=0mm,]{o21}
        \phantom{\raisebox{2pt}{\!\protect\scriptsize{\hspace*{1mm}$\Gamma_{21}$\hspace*{1mm}}}}
    \tikzmarkend{o21}
    & \cdot & \cdot &
    \tikzmarkin[ver=style orange, line width=0mm,]{o22}
        \raisebox{2pt}{\!\protect\scriptsize{\hspace*{1mm}$\Gamma_{21}$\hspace*{1mm}}}
    \tikzmarkend{o22}
    & \cdot \\
    \hline
    \cdot & \cdot &
    \tikzmarkin[ver=style cyan, line width=0mm,]{c20}
        \phantom{\raisebox{2pt}{\!\protect\scriptsize{\hspace*{1mm}$\Gamma_{22}$\hspace*{1mm}}}}
    \tikzmarkend{c20}
    & \cdot & \cdot &
    \tikzmarkin[ver=style cyan, line width=0mm,]{c21}
        \phantom{\raisebox{2pt}{\!\protect\scriptsize{\hspace*{1mm}$\Gamma_{22}$\hspace*{1mm}}}}
    \tikzmarkend{c21}
    & \cdot & \cdot &
    \tikzmarkin[ver=style cyan, line width=0mm,]{c22}
        \raisebox{2pt}{\!\protect\scriptsize{\hspace*{1mm}$\Gamma_{22}$\hspace*{1mm}}}
    \tikzmarkend{c22}
  \end{array}\right),\nonumber\\[-5.5mm]
  &\,
  \begin{array}{ccc}
  \hspace*{24.0mm} & \hspace*{21.5mm} & \hspace*{21.5mm}\\
  \hspace*{2.0mm}$\upbracefill$ & \hspace*{-1.3mm}$\upbracefill$ & \hspace*{-1.3mm}$\upbracefill$\\[-0.5mm]
  \hspace*{4.5mm}\text{\protect\scriptsize{$\ket{0}\Sys$}} & \hspace*{0.7mm}\text{\protect\scriptsize{$\ket{1}\Sys$}} & \hspace*{0.7mm}\text{\protect\scriptsize{$\ket{2}\Sys$}}
  \end{array}
  \nonumber
\end{align}
\vspace*{-9mm}
\begin{align}
\label{eq:appendrow}
\end{align}
where the $\Gamma_{ij}$ are the $(d\Poi/d\Sys)\times(d\Poi/d\Sys)$ matrices corresponding to $\ket{i}\!\!\bra{i}\Sys\otimes\Pi_{j}\,\tilde{\rho}\SP \ket{i}\!\!\bra{i}\Sys\otimes\Pi_{j}$. The colour coding corresponds to the projections on to the subspaces for fixed outcomes. That is, the operators $\mathds{1}\Sys\otimes\Pi_{n}\,\tilde{\rho}\SP \mathds{1}\Sys\otimes\Pi_{n}$, can be written as
\begin{align}
    \mathds{1}\Sys\otimes\Pi_{n}\,\tilde{\rho}\SP \mathds{1}\Sys\otimes\Pi_{n}
    &=\,
    \sum\limits_{m,m'=0}^{d\Sys-1}\sum\limits_{i,j=1}^{d\Poi/d\Sys}
    \bra{n}\rho\Sys\ket{n}\,
    \bra{\tilde{\psi}_{i}\suptiny{0}{0}{(m)}}
    \tilde{U}\suptiny{0}{0}{(n)}\tau\Poi \tilde{U}\suptiny{0}{0}{(n)}{}^{\dagger}
    \ket{\tilde{\psi}_{j}\suptiny{0}{0}{(m')}}
    \
    \ket{m}\!\!\bra{m'}\Sys\otimes
    \ket{\tilde{\psi}_{i}\suptiny{0}{0}{(n)}}\!\!\bra{\tilde{\psi}_{j}\suptiny{0}{0}{(n)}\!}\Poi\nonumber\\[1.5mm]
    &\mathrel{\hat=}
    \,p_{n}\,\tilde{U}\suptiny{0}{0}{(n)}\tau\Poi \tilde{U}\suptiny{0}{0}{(n)}{}^{\dagger}.
    \label{eq:equivalence}
\end{align}
In other words, the $\mathds{1}\Sys\otimes\Pi_{n}\,\tilde{\rho}\SP \mathds{1}\Sys\otimes\Pi_{n}$ for $n\in\{0,1,\ldots,d\Sys-1\}$ have rank $d\Poi$ and can be understood as $d\Poi\times d\Poi$ matrices with the same spectra as $p_{n}\tau\Poi$. They are in this sense unitarily equivalent [via application of the unitaries $\tilde{U}\suptiny{0}{0}{(n)}$ from Eq.~(\ref{eq:lemma 2 from other paper})] to $p_{n}\tau\Poi$, such that the trace of these operators gives $p_{n}$, as required by unbiasedness in Eq.~(\ref{eq:unbiased}). In terms of the indicated diagonal blocks of each subspace for fixed $\Pi_{n}$ (fixed colour), we have $\sum_{i=0}^{2}\tr(\Gamma_{in})=p_{n} \ \forall\,n$. More specifically, since the measurement has to be unbiased independently of the initial system state $\rho\Sys$, one has to have
\begin{align}
    \tr\Poi\bigl(\mathds{1}\Sys\otimes\Pi_{n}\,\tilde{\rho}\SP \mathds{1}\Sys\otimes\Pi_{n}\bigr)
    &=\,p_{n}\,\rho_{n},
    \label{eq:unbiased conditional post meas state}
\end{align}
where $\rho_{n}\in\M{D}(\M{H}\Sys)$ is the post-measurement system state conditioned on observing the pointer outcome $n$. The conditional states $\rho_{n}$ correspond to coarse-grainings of the operators $\tilde{U}\suptiny{0}{0}{(n)}\tau\Poi \tilde{U}\suptiny{0}{0}{(n)}{}^{\dagger}$ in the sense of the equivalence ``$\mathrel{\hat=}$" of Eq.~(\ref{eq:equivalence}). The states $\rho_{n}$ are hence independent of $\rho\Sys$. We write them as $\rho_{n}    = \sum\limits_{l,l\pr}q_{ll\pr|n}\,\ket{l}\!\!\bra{l\pr}$ with $\sum_{l}q_{ll|n}=1\ \forall\,n$ and
\begin{align}
    q_{ll\pr|n} &=\,
    \sum\limits_{i=1}^{d\Poi/d\Sys}
    \bra{\tilde{\psi}_{i}\suptiny{0}{0}{(l)}}
    \tilde{U}\suptiny{0}{0}{(n)}\tau\Poi \tilde{U}\suptiny{0}{0}{(n)}{}^{\dagger}
    \ket{\tilde{\psi}_{i}\suptiny{0}{0}{(l')}}.
\end{align}


\subsection{Unbiased Maximally Correlated Measurements}\label{appsec:unbiased max corr measurements}

Within the set of unbiased measurement procedures, one may then wish to select those measurements which maximise the correlation measure $C(\tilde{\rho}\SP)$ that is used in property (\ref{item:faithful main}) to define faithful measurements (for which $C=1$). For unbiased measurements that are realised unitarily from an initially thermal pointer state, the maximal value $C=C_{\mathrm{max}}$ for unbiased measurements is obtained when the $d\Poi/d\Sys$ largest eigenvalues of $\tau\Poi$ appear as the eigenvalues of the matrices $\Gamma_{nn}$ for each $n$. This requires the off-diagonal elements in $\tilde{\rho}\SP$ connecting the correlated subspaces defined by the projectors $\ket{n}\!\!\bra{n}\otimes\Pi_{n}$ with the uncorrelated subspaces defined by the projectors $\ket{m}\!\!\bra{m}\otimes\Pi_{n}$ for $m\neq n$ to vanish, resulting in a joint state of the form
\vspace*{-5mm}
\begin{align}
   &
    \begin{array}{ccccccccc}
  \hspace*{5mm} & \hspace*{5mm} & \hspace*{5mm} &
  \hspace*{5mm} & \hspace*{5mm} & \hspace*{5mm} &
  \hspace*{5mm} & \hspace*{5mm} & \hspace*{5mm} \\
  \hspace*{3.5mm}\text{\protect\scriptsize{$\Pi_{0}$}} & \hspace*{0mm}\text{\protect\scriptsize{$\Pi_{1}$}} & \hspace*{0mm}\text{\protect\scriptsize{$\Pi_{2}$}} &
  \hspace*{0mm}\text{\protect\scriptsize{$\Pi_{0}$}} & \hspace*{0mm}\text{\protect\scriptsize{$\Pi_{1}$}} & \hspace*{0mm}\text{\protect\scriptsize{$\Pi_{2}$}} &
  \hspace*{0mm}\text{\protect\scriptsize{$\Pi_{0}$}} & \hspace*{0mm}\text{\protect\scriptsize{$\Pi_{1}$}} & \hspace*{0mm}\text{\protect\scriptsize{$\Pi_{2}$}} \\[-1.5mm]
  \hspace*{2.0mm}$\downbracefill$& \hspace*{-1.3mm}$\downbracefill$& \hspace*{-1.3mm}$\downbracefill$&
  \hspace*{-1.3mm}$\downbracefill$& \hspace*{-1.3mm}$\downbracefill$& \hspace*{-1.3mm}$\downbracefill$&
  \hspace*{-1.3mm}$\downbracefill$& \hspace*{-1.3mm}$\downbracefill$& \hspace*{-1.3mm}$\downbracefill$
  \end{array}
  \nonumber\\[-1mm]
{\normalsize \tilde{\rho}\SP}=  &
  \left(\begin{array}{c|c|c|c|c|c|c|c|c}
    \tikzmarkin[ver=style green, line width=0mm,]{gg00}
        \raisebox{2pt}{\!\protect\scriptsize{\hspace*{1mm}$\Gamma_{00}$\hspace*{1mm}}}
    \tikzmarkend{gg00}
    & \cdot & \cdot &
    0
    & \cdot & \cdot &
    0
    & \cdot & \cdot \\
    \hline
    \cdot &
    \tikzmarkin[ver=style orange, line width=0mm,]{oo00}
        \raisebox{2pt}{\!\protect\scriptsize{\hspace*{1mm}$\Gamma_{01}$\hspace*{1mm}}}
    \tikzmarkend{oo00}
    & \cdot & \cdot &
    0
    & \cdot & \cdot &
    \tikzmarkin[ver=style orange, line width=0mm,]{oo02}
        \phantom{\raisebox{2pt}{\!\protect\scriptsize{\hspace*{1mm}$\Gamma_{01}$\hspace*{1mm}}}}
    \tikzmarkend{oo02}
    & \cdot \\
    \hline
    \cdot & \cdot &
    \tikzmarkin[ver=style cyan, line width=0mm,]{cc00}
        \raisebox{2pt}{\!\protect\scriptsize{\hspace*{1mm}$\Gamma_{02}$\hspace*{1mm}}}
    \tikzmarkend{cc00}
    & \cdot & \cdot &
    \tikzmarkin[ver=style cyan, line width=0mm,]{cc01}
        \phantom{\raisebox{2pt}{\!\protect\scriptsize{\hspace*{1mm}$\Gamma_{02}$\hspace*{1mm}}}}
    \tikzmarkend{cc01}
    & \cdot & \cdot &
    0
    \\
    \hline
    0
    & \cdot & \cdot &
    \tikzmarkin[ver=style green, line width=0mm,]{gg11}
        \raisebox{2pt}{\!\protect\scriptsize{\hspace*{1mm}$\Gamma_{10}$\hspace*{1mm}}}
    \tikzmarkend{gg11}
    & \cdot & \cdot &
    \tikzmarkin[ver=style green, line width=0mm,]{gg12}
        \phantom{\raisebox{2pt}{\!\protect\scriptsize{\hspace*{1mm}$\Gamma_{10}$\hspace*{1mm}}}}
    \tikzmarkend{gg12}
    & \cdot & \cdot \\
    \hline
    \cdot &
    0
    & \cdot & \cdot &
    \tikzmarkin[ver=style orange, line width=0mm,]{oo11}
        \raisebox{2pt}{\!\protect\scriptsize{\hspace*{1mm}$\Gamma_{11}$\hspace*{1mm}}}
    \tikzmarkend{oo11}
    & \cdot & \cdot &
    0
    & \cdot \\
    \hline
    \cdot & \cdot &
    \tikzmarkin[ver=style cyan, line width=0mm,]{cc10}
        \phantom{\raisebox{2pt}{\!\protect\scriptsize{\hspace*{1mm}$\Gamma_{12}$\hspace*{1mm}}}}
    \tikzmarkend{cc10}
    & \cdot & \cdot &
    \tikzmarkin[ver=style cyan, line width=0mm,]{cc11}
        \raisebox{2pt}{\!\protect\scriptsize{\hspace*{1mm}$\Gamma_{12}$\hspace*{1mm}}}
    \tikzmarkend{cc11}
    & \cdot & \cdot &
    0
    \\
    \hline
    0
    & \cdot & \cdot &
    \tikzmarkin[ver=style green, line width=0mm,]{gg21}
        \phantom{\raisebox{2pt}{\!\protect\scriptsize{\hspace*{1mm}$\Gamma_{20}$\hspace*{1mm}}}}
    \tikzmarkend{gg21}
    & \cdot & \cdot &
    \tikzmarkin[ver=style green, line width=0mm,]{gg22}
        \raisebox{2pt}{\!\protect\scriptsize{\hspace*{1mm}$\Gamma_{20}$\hspace*{1mm}}}
    \tikzmarkend{gg22}
    & \cdot & \cdot \\
    \hline
    \cdot &
    \tikzmarkin[ver=style orange, line width=0mm,]{oo20}
        \phantom{\raisebox{2pt}{\!\protect\scriptsize{\hspace*{1mm}$\Gamma_{21}$\hspace*{1mm}}}}
    \tikzmarkend{oo20}
    & \cdot & \cdot &
    0
    & \cdot & \cdot &
    \tikzmarkin[ver=style orange, line width=0mm,]{oo22}
        \raisebox{2pt}{\!\protect\scriptsize{\hspace*{1mm}$\Gamma_{21}$\hspace*{1mm}}}
    \tikzmarkend{oo22}
    & \cdot \\
    \hline
    \cdot & \cdot &
    0
    & \cdot & \cdot &
    0
    & \cdot & \cdot &
    \tikzmarkin[ver=style cyan, line width=0mm,]{cc22}
        \raisebox{2pt}{\!\protect\scriptsize{\hspace*{1mm}$\Gamma_{22}$\hspace*{1mm}}}
    \tikzmarkend{cc22}
  \end{array}\right),\nonumber\\[-5.5mm]
  &\,
  \begin{array}{ccc}
  \hspace*{24.0mm} & \hspace*{21.5mm} & \hspace*{21.5mm}\\
  \hspace*{2.0mm}$\upbracefill$ & \hspace*{-1.3mm}$\upbracefill$ & \hspace*{-1.3mm}$\upbracefill$\\[-0.5mm]
  \hspace*{4.5mm}\text{\protect\scriptsize{$\ket{0}\Sys$}} & \hspace*{0.7mm}\text{\protect\scriptsize{$\ket{1}\Sys$}} & \hspace*{0.7mm}\text{\protect\scriptsize{$\ket{2}\Sys$}}
  \end{array}
  \nonumber
\end{align}
\vspace*{-9mm}
\begin{align}
\label{eq:unbiased max corr state}
\end{align}
where the remaining blocks in the correlated subspace now satisfy $\tr(\Gamma_{nn})=p_{n}C_{\mathrm{max}}$, see Ref.~\cite[Eq.~(5)]{GuryanovaFriisHuber2018}. From Eq.~(\ref{eq:unbiased conditional post meas state}) we can then conclude that the coefficients of the conditional states for unbiased maximally correlated measurements (UMC) satisfy
\begin{align}
    q_{nn|n} &=\,C_{\mathrm{max}}\,=\, \sum_{i=0}^{d\Poi/d\Sys-1} \frac{ \exp\left(-\beta\Poi E_i\suptiny{0}{0}{({P})}\right)}{\mathcal{Z}\Poi}\qquad \forall\ n,
    \label{eq:Cmax appendix}
\end{align}
where we have assumed that the eigenvalues of the pointer Hamiltonian are ordered non-decreasingly, i.e., $E_i\suptiny{0}{0}{({P})}\geq E_j\suptiny{0}{0}{({P})}$ for $i\geq j$,
while some of the off-diagonals of $\rho_{n}$ are more restricted, $q_{nl|n}=q_{ln|n}=0 \forall l\neq n$. In particular, this last condition implies that the conditional post-measurement state of the system for UMC measurements takes the block-diagonal form
\begin{align}
    \rho_{n}    &=\,C_{\mathrm{max}}\,\ket{n}\!\!\bra{n}
    \,+\,(1-C_{\mathrm{max}})\,\rho_{n}\suptiny{0}{0}{\mathrm{error}},
\end{align}
where $\rho_{n}\suptiny{0}{0}{\mathrm{error}}$ is density operator on the Hilbert space spanned by the vectors $\{\ket{m}\Sys\}$ for $m\in\{0,1,\ldots,d\Sys-1\}/n$ such that $\bra{n}\rho_{n}\suptiny{0}{0}{\mathrm{error}}\ket{l}=\bra{n}\rho_{l}\suptiny{0}{0}{\mathrm{error}}\ket{n}=0\,\forall\,l$.

The maximal correlation value $C_{\mathrm{max}}$ can further be used to quantify the distance between the conditional post-measurement state $\rho_{n}$ of UMC measurements and the pure state $\ket{n}$. Using the trace distance $D(X,Y)=\frac{1}{2}\N{X-Y}_1 \equiv \frac{1}{2} \tr{ \sqrt{(X-Y)(X-Y)^{\dagger}}}$ between two operators $X$ and $Y$, and the block-diagonal structure of $\rho_{n}$ we can calculate
\begin{align}
D(\rho_{n}, \ket{n}\!\!\bra{n})
    &=\,\frac{1}{2}
    \tr\sqrt{\bigl[\rho_{n}-\ket{n}\!\!\bra{n}\bigr]^{2}}
    \,=\,\frac{1}{2}
    \tr\sqrt{
        \bigl[
            (C_{\mathrm{max}}-1)\ket{n}\!\!\bra{n}
            +(1-C_{\mathrm{max}})\rho_{n}\suptiny{0}{0}{\mathrm{error}}
        \bigr]^{2}}
    \nonumber\\[1mm]
    & =\,\frac{1}{2}
    \tr\sqrt{
            (C_{\mathrm{max}}-1)^{2}\ket{n}\!\!\bra{n}
            +(1-C_{\mathrm{max}})^{2}
            (\rho_{n}\suptiny{0}{0}{\mathrm{error}})^{2}
        }
    \,=\,
    \frac{1}{2}\bigl[
    (1-C_{\mathrm{max}})\tr(\ket{n}\!\!\bra{n})
            +(1-C_{\mathrm{max}})\tr(\rho_{n}\suptiny{0}{0}{\mathrm{error}})\bigr]
    \,\nonumber\\[1mm]
    &=\,1 - C_{\mathrm{max}}.
    \label{eq:cmax.dist}
\end{align}


\subsection{Minimal Energy UMC Measurements}\label{appsec:min energy unbiased max corr measurements}

One can then further restrict the set of considered unitaries $U_{\mathrm{meas}}$ by demanding minimal energy consumption. That is, that the measurement implemented by $U_{\mathrm{meas}}$ achieves the algebraic maximum of correlations \textit{and} spends the least amount of energy as compared with all other unitary operations achieving $C_{\mathrm{max}}$, i.e.,
\begin{align}
    \Delta E_{\mathrm{meas}} &=\, \min_{U_{\mathrm{meas}}}\tr\bigl[(H\Sys + H\Poi)(\tilde{\rho}\SP \,-\, \rho\Sys \otimes \tau\Poi)\bigr]\quad \mathrm{s.t.}\quad C(\tilde{\rho}\SP) \,=\, C_{\mathrm{max}}.
    \label{eq:min-energy}
\end{align}
A requirement in order to spend the minimum amount of energy in the measurement process is not to waste energy on creating coherences~\cite{LostaglioJenningsRudolph2015}. Conversely, any coherence with respect to energy eigenstate with different energies would imply that there exists a unitary transformation that shifts probability from higher energies to lower energies. In other words, such states would not be passive~\cite{PuszWoronowicz1978}. For the sake of illustrating the effect on the form of the final state for our example, let us assume that both the system and pointer Hamiltonians are non-degenerate. Then, the final state for a minimal energy UMC measurement must be of the form
\vspace*{-5mm}
\begin{align}
   &
   \begin{array}{ccccccccc}
  \hspace*{5mm} & \hspace*{5mm} & \hspace*{5mm} &
  \hspace*{5mm} & \hspace*{5mm} & \hspace*{5mm} &
  \hspace*{5mm} & \hspace*{5mm} & \hspace*{5mm} \\
  \hspace*{3.5mm}\text{\protect\scriptsize{$\Pi_{0}$}} & \hspace*{0mm}\text{\protect\scriptsize{$\Pi_{1}$}} & \hspace*{0mm}\text{\protect\scriptsize{$\Pi_{2}$}} &
  \hspace*{0mm}\text{\protect\scriptsize{$\Pi_{0}$}} & \hspace*{0mm}\text{\protect\scriptsize{$\Pi_{1}$}} & \hspace*{0mm}\text{\protect\scriptsize{$\Pi_{2}$}} &
  \hspace*{0mm}\text{\protect\scriptsize{$\Pi_{0}$}} & \hspace*{0mm}\text{\protect\scriptsize{$\Pi_{1}$}} & \hspace*{0mm}\text{\protect\scriptsize{$\Pi_{2}$}} \\[-1.5mm]
  \hspace*{2.0mm}$\downbracefill$& \hspace*{-1.3mm}$\downbracefill$& \hspace*{-1.3mm}$\downbracefill$&
  \hspace*{-1.3mm}$\downbracefill$& \hspace*{-1.3mm}$\downbracefill$& \hspace*{-1.3mm}$\downbracefill$&
  \hspace*{-1.3mm}$\downbracefill$& \hspace*{-1.3mm}$\downbracefill$& \hspace*{-1.3mm}$\downbracefill$
  \end{array}
  \nonumber\\[-1mm]
{\normalsize \tilde{\rho}\SP}=  &
  \left(\begin{array}{c|c|c|c|c|c|c|c|c}
    \tikzmarkin[ver=style green, line width=0mm,]{ggg00}
        \raisebox{2pt}{\!\protect\scriptsize{\hspace*{1mm}$\Gamma_{00}$\hspace*{1mm}}}
    \tikzmarkend{ggg00}
    & \cdot & \cdot &
    0
    & \cdot & \cdot &
    0
    & \cdot & \cdot \\
    \hline
    \cdot &
    \tikzmarkin[ver=style orange, line width=0mm,]{ooo00}
        \raisebox{2pt}{\!\protect\scriptsize{\hspace*{1mm}$\Gamma_{01}$\hspace*{1mm}}}
    \tikzmarkend{ooo00}
    & \cdot & \cdot &
    0
    & \cdot & \cdot &
    0
    & \cdot \\
    \hline
    \cdot & \cdot &
    \tikzmarkin[ver=style cyan, line width=0mm,]{ccc00}
        \raisebox{2pt}{\!\protect\scriptsize{\hspace*{1mm}$\Gamma_{02}$\hspace*{1mm}}}
    \tikzmarkend{ccc00}
    & \cdot & \cdot &
    0
    & \cdot & \cdot &
    0
    \\
    \hline
    0
    & \cdot & \cdot &
    \tikzmarkin[ver=style green, line width=0mm,]{ggg11}
        \raisebox{2pt}{\!\protect\scriptsize{\hspace*{1mm}$\Gamma_{10}$\hspace*{1mm}}}
    \tikzmarkend{ggg11}
    & \cdot & \cdot &
    0
    & \cdot & \cdot \\
    \hline
    \cdot &
    0
    & \cdot & \cdot &
    \tikzmarkin[ver=style orange, line width=0mm,]{ooo11}
        \raisebox{2pt}{\!\protect\scriptsize{\hspace*{1mm}$\Gamma_{11}$\hspace*{1mm}}}
    \tikzmarkend{ooo11}
    & \cdot & \cdot &
    0
    & \cdot \\
    \hline
    \cdot & \cdot &
    0
    & \cdot & \cdot &
    \tikzmarkin[ver=style cyan, line width=0mm,]{ccc11}
        \raisebox{2pt}{\!\protect\scriptsize{\hspace*{1mm}$\Gamma_{12}$\hspace*{1mm}}}
    \tikzmarkend{ccc11}
    & \cdot & \cdot &
    0
    \\
    \hline
    0
    & \cdot & \cdot &
    0
    & \cdot & \cdot &
    \tikzmarkin[ver=style green, line width=0mm,]{ggg22}
        \raisebox{2pt}{\!\protect\scriptsize{\hspace*{1mm}$\Gamma_{20}$\hspace*{1mm}}}
    \tikzmarkend{ggg22}
    & \cdot & \cdot \\
    \hline
    \cdot &
    0
    & \cdot & \cdot &
    0
    & \cdot & \cdot &
    \tikzmarkin[ver=style orange, line width=0mm,]{ooo22}
        \raisebox{2pt}{\!\protect\scriptsize{\hspace*{1mm}$\Gamma_{21}$\hspace*{1mm}}}
    \tikzmarkend{ooo22}
    & \cdot \\
    \hline
    \cdot & \cdot &
    0
    & \cdot & \cdot &
    0
    & \cdot & \cdot &
    \tikzmarkin[ver=style cyan, line width=0mm,]{ccc22}
        \raisebox{2pt}{\!\protect\scriptsize{\hspace*{1mm}$\Gamma_{22}$\hspace*{1mm}}}
    \tikzmarkend{ccc22}
  \end{array}\right),\nonumber\\[-5.5mm]
  &\,
  \begin{array}{ccc}
  \hspace*{24.0mm} & \hspace*{21.5mm} & \hspace*{21.5mm}\\
  \hspace*{2.0mm}$\upbracefill$ & \hspace*{-1.3mm}$\upbracefill$ & \hspace*{-1.3mm}$\upbracefill$\\[-0.5mm]
  \hspace*{4.5mm}\text{\protect\scriptsize{$\ket{0}\Sys$}} & \hspace*{0.7mm}\text{\protect\scriptsize{$\ket{1}\Sys$}} & \hspace*{0.7mm}\text{\protect\scriptsize{$\ket{2}\Sys$}}
  \end{array}
  \nonumber
\end{align}
\vspace*{-9mm}
\begin{align}
\label{eq:min energy unbiased max corr state}
\end{align}
where all remaining diagonal blocks $\Gamma_{mn}$ must now be diagonal matrices themselves, and $\tr(\Gamma_{nn})= p_n C_{\mathrm{max}}\ \forall\,n$ as before. In addition, the energy minimisation imposes constraints on (I) the assignment of the remaining eigenvalues of $\tau\Poi$ to the matrices $\Gamma_{mn}$ for $m\neq n$ in the uncorrelated subspaces, and (II) on the ordering of the eigenvalues within each specific block. What is crucial here is the observation that (I) corresponds to arranging the eigenvalues of $\tau\Poi$ in descending order and splitting the resulting list into three (in general $d\Sys$) sets $a\suptiny{0}{0}{(0)}$, $a\suptiny{0}{0}{(1)}$, and $a\suptiny{0}{0}{(2)}$ that we can interpret as diagonal $(d\Poi/d\Sys)$ matrices with diagonal elements
\begin{align}
    a\suptiny{0}{0}{(n)}_{k}    &=\,\tau_{n\,\frac{d\Poi}{d\Sys}+k}
\end{align}
for $k=1,\ldots,d\Poi/d\Sys$ and $n=0,\ldots,d\Sys-1$. Minimising the energy while maintaining a UMC measurement then means that the set $a\suptiny{0}{0}{(0)}$ of largest entries is assigned to the correlated subspaces, $\Gamma_{nn}=p_{n}\,a\suptiny{0}{0}{(0)}\,\forall\,n$, and the other $a\suptiny{0}{0}{(i)}$ with $i>0$ are assigned to the blocks $\Gamma_{mn}$ with $m\neq n$ such that the lower energies are combined with the larger weights (higher values of $i$).

At this point, it becomes useful to switch to a reduced description of the final state that captures only the diagonal blocks $\Gamma_{ij}$. We therefore define a quantity we refer to as the correlation matrix $\Gamma=(\Gamma_{ij})$ in this context, where, as before, the $\Gamma_{ij}$ are the $(d\Poi/d\Sys)\times(d\Poi/d\Sys)$ matrices corresponding to $\ket{i}\!\!\bra{i}\Sys\otimes\Pi_{j}\,\tilde{\rho}\SP \ket{i}\!\!\bra{i}\Sys\otimes\Pi_{j}$. With this notation, the correlation matrix of the final state for a minimal energy UMC measurement takes the form
\begin{align}
\Gamma &=
\,\left[\begin{array}{c|c|c}
        \tikzmarkin[ver=style green, line width=0mm,]{yd1}\raisebox{2pt}{\protect\footnotesize{
        $\Gamma_{00}$}}\tikzmarkend{yd1}  &
        \tikzmarkin[ver=style orange, line width=0mm,]{ydd2}\raisebox{2pt}{\protect\footnotesize{
        $\Gamma_{01}$}}\tikzmarkend{ydd2}  &
        \tikzmarkin[ver=style cyan, line width=0mm,]{yw1}\raisebox{2pt}{\protect\footnotesize{
        $\Gamma_{02}$}}\tikzmarkend{yw1}  \\
        \hline
        \tikzmarkin[ver=style green, line width=0mm,]{ydd1}\raisebox{2pt}{\protect\footnotesize{$\Gamma_{10}$}}\tikzmarkend{ydd1} & \tikzmarkin[ver=style orange, line width=0mm,]{yd2}\raisebox{2pt}{\protect\footnotesize{$\Gamma_{11}$}}\tikzmarkend{yd2}  & \tikzmarkin[ver=style cyan, line width=0mm,]{yw3}\raisebox{2pt}{\protect\footnotesize{$\Gamma_{12}$}}\tikzmarkend{yw3} \\
        \hline
        \tikzmarkin[ver=style green, line width=0mm,]{yd33}\raisebox{2pt}{\protect\footnotesize{$\Gamma_{20}$}}\tikzmarkend{yd33}  &  \tikzmarkin[ver=style orange, line width=0mm,]{yd44}\raisebox{2pt}{\protect\footnotesize{$\Gamma_{21}$}}\tikzmarkend{yd44} &  \tikzmarkin[ver=style cyan, line width=0mm,]{yd5}\raisebox{2pt}{\protect\footnotesize{$\Gamma_{22}$}}\tikzmarkend{yd5}
    \end{array}\right]
    \,=\,
    \left[\begin{array}{c|c|c}
        \tikzmarkin[ver=style green, line width=0mm,]{xd1}\raisebox{2pt}{\protect\footnotesize{
        $p_{0}\,a\suptiny{0}{0}{(0)}$}}\tikzmarkend{xd1}  &
        \tikzmarkin[ver=style orange, line width=0mm,]{xdd2}\raisebox{2pt}{\protect\footnotesize{
        $p_{1}\,a\suptiny{0}{0}{(1)}$}}\tikzmarkend{xdd2}  &
        \tikzmarkin[ver=style cyan, line width=0mm,]{xw1}\raisebox{2pt}{\protect\footnotesize{
        $p_{2}\,a\suptiny{0}{0}{(1)}$}}\tikzmarkend{xw1}  \\
        \hline
        \tikzmarkin[ver=style green, line width=0mm,]{xdd1}\raisebox{2pt}{\protect\footnotesize{$p_{0}\,a\suptiny{0}{0}{(1)}$}}\tikzmarkend{xdd1} & \tikzmarkin[ver=style orange, line width=0mm,]{xd2}\raisebox{2pt}{\protect\footnotesize{$p_{1}\,a\suptiny{0}{0}{(0)}$}}\tikzmarkend{xd2}  & \tikzmarkin[ver=style cyan, line width=0mm,]{xw3}\raisebox{2pt}{\protect\footnotesize{$p_{2}\,a\suptiny{0}{0}{(2)}$}}\tikzmarkend{xw3} \\
        \hline
        \tikzmarkin[ver=style green, line width=0mm,]{xd33}\raisebox{2pt}{\protect\footnotesize{$p_{0}\,a\suptiny{0}{0}{(2)}$}}\tikzmarkend{xd33}  &  \tikzmarkin[ver=style orange, line width=0mm,]{xd44}\raisebox{2pt}{\protect\footnotesize{$p_{1}\,a\suptiny{0}{0}{(2)}$}}\tikzmarkend{xd44} &  \tikzmarkin[ver=style cyan, line width=0mm,]{xd5}\raisebox{2pt}{\protect\footnotesize{$p_{2}\,a\suptiny{0}{0}{(0)}$}}\tikzmarkend{xd5}
    \end{array}\right].
    \label{eq:newrep}
\end{align}
Alternatively, we may write the correlation matrix elements $\Gamma_{ij}$ as $\Gamma_{ij}=p_{j}\,a\suptiny{0}{0}{(\tilde{\pi}_{ij})}$ where $\tilde{\pi}=(\tilde{\pi}_{ij})$ is a $d\Sys\times d\Sys$ matrix with entries $\tilde{\pi}_{ij}\in\{0,1,\ldots,d\Sys-1\}$. Unbiasedness requires that each element of $\{0,1,\ldots,d\Sys-1\}$ appears exactly once in each column of $\tilde{\pi}$, that is,  $\{\tilde{\pi}_{ij}\}_{i=0,1,\ldots,d\Sys}=\{0,1,\ldots,d\Sys-1\}$. Achieving maximal correlation $C_{\mathrm{max}}$ for an unbiased measurement further fixes the diagonal of $\tilde{\pi}$, i.e., $\tilde{\pi}_{ii}=0\,\forall\,i$. Finally, minimal energy of UMC measurements implies that the off-diagonal elements of $\tilde{\pi}$ are arranged such that the lowest values fill up the top rows (for lowest row index $i$) of $\tilde{\pi}$ first, that is, for a fixed $j$ $\tilde{\pi}_{ij}=i+1$ if $i<j$ and $\tilde{\pi}_{ij}=j$ if $i>j$.
More details on minimal energy UMC measurements can be found in Ref.~\cite[Appendices A.8 and A.9]{GuryanovaFriisHuber2018}.


\subsection{Minimally Invasive UMC Measurements}\label{appsec:min invasive UMC measurements}

As we have discussed in Sec.~\ref{sec:framework} (following Ref.~\cite{GuryanovaFriisHuber2018}), non-ideal measurement procedures cannot be both unbiased and non-invasive. However, while one cannot construct an unbiased measurement that is non-invasive for all initial system states $\rho\Sys$, one can indeed construct unbiased measurements that leave (the diagonal of) certain system states invariant. One case that is of specific interest here (and goes beyond what is considered in Ref.~\cite{GuryanovaFriisHuber2018}) is the case of UMC measurements, which leave the maximally mixed state $\rho\Sys=\mathds{1}\Sys/d\Sys$ invariant. For this maximum entropy state, all initial probabilities are the same, $p_{n}=1/d\Sys\,\forall\,n$. At the same time, we observe that the diagonal elements $\bra{n}\tr\Poi(\tilde{\rho}\SP)\ket{n}$ of the post-measurement system state are obtained by taking the trace of the sum of the elements of the $n$th row of the correlation matrix $\Gamma$.

We thus observe that such a map, which we call \emph{minimally invasive UMC} measurements, can be realized if each row of $\Gamma$ features each superscript index $i$ of $a\suptiny{0}{0}{(i)}$ exactly once. Together with the unbiasedness requirement we see that unbiased measurements are minimally invasive if and only if $\tilde{\pi}$ is a Latin square. That is, each row and each column of $\tilde{\pi}$ features each element once and only once. It is interesting to note that any minimal energy UMC measurement can thus be turned into a minimally invasive UMC measurement, and vice versa, by rearranging the entries of the $\Gamma$ within its columns only. This implies that the additional energy cost for moving away from the energy minimum depends only on the spectrum of the system Hamiltonian, but not on the pointer Hamiltonian. At the same time, this can be done in such a way that the energy is only minimally increased with respect to the minimal energy UMC measurements, obtaining a \emph{minimal energy minimally invasive UMC measurement}. For the example in $d\Sys=3$ discussed above, the correlation matrix for such a measurement takes the form
\begin{align}
\Gamma &=
\,\left[\begin{array}{c|c|c}
        \tikzmarkin[ver=style green, line width=0mm,]{zyd1}\raisebox{2pt}{\protect\footnotesize{
        $\Gamma_{00}$}}\tikzmarkend{zyd1}  &
        \tikzmarkin[ver=style orange, line width=0mm,]{zydd2}\raisebox{2pt}{\protect\footnotesize{
        $\Gamma_{01}$}}\tikzmarkend{zydd2}  &
        \tikzmarkin[ver=style cyan, line width=0mm,]{zyw1}\raisebox{2pt}{\protect\footnotesize{
        $\Gamma_{02}$}}\tikzmarkend{zyw1}  \\
        \hline
        \tikzmarkin[ver=style green, line width=0mm,]{zydd1}\raisebox{2pt}{\protect\footnotesize{$\Gamma_{10}$}}\tikzmarkend{zydd1} & \tikzmarkin[ver=style orange, line width=0mm,]{zyd2}\raisebox{2pt}{\protect\footnotesize{$\Gamma_{11}$}}\tikzmarkend{zyd2}  & \tikzmarkin[ver=style cyan, line width=0mm,]{zyw3}\raisebox{2pt}{\protect\footnotesize{$\Gamma_{12}$}}\tikzmarkend{zyw3} \\
        \hline
        \tikzmarkin[ver=style green, line width=0mm,]{zyd33}\raisebox{2pt}{\protect\footnotesize{$\Gamma_{20}$}}\tikzmarkend{zyd33}  &  \tikzmarkin[ver=style orange, line width=0mm,]{zyd44}\raisebox{2pt}{\protect\footnotesize{$\Gamma_{21}$}}\tikzmarkend{zyd44} &  \tikzmarkin[ver=style cyan, line width=0mm,]{zyd5}\raisebox{2pt}{\protect\footnotesize{$\Gamma_{22}$}}\tikzmarkend{zyd5}
    \end{array}\right]
    \,=\,
    \left[\begin{array}{c|c|c}
        \tikzmarkin[ver=style green, line width=0mm,]{zxd1}\raisebox{2pt}{\protect\footnotesize{
        $p_{0}\,a\suptiny{0}{0}{(0)}$}}\tikzmarkend{zxd1}  &
        \tikzmarkin[ver=style orange, line width=0mm,]{zxdd2}\raisebox{2pt}{\protect\footnotesize{
        $p_{1}\,a\suptiny{0}{0}{(2)}$}}\tikzmarkend{zxdd2}  &
        \tikzmarkin[ver=style cyan, line width=0mm,]{zxw1}\raisebox{2pt}{\protect\footnotesize{
        $p_{2}\,a\suptiny{0}{0}{(1)}$}}\tikzmarkend{zxw1}  \\
        \hline
        \tikzmarkin[ver=style green, line width=0mm,]{zxdd1}\raisebox{2pt}{\protect\footnotesize{$p_{0}\,a\suptiny{0}{0}{(1)}$}}\tikzmarkend{zxdd1} & \tikzmarkin[ver=style orange, line width=0mm,]{zxd2}\raisebox{2pt}{\protect\footnotesize{$p_{1}\,a\suptiny{0}{0}{(0)}$}}\tikzmarkend{zxd2}  & \tikzmarkin[ver=style cyan, line width=0mm,]{zxw3}\raisebox{2pt}{\protect\footnotesize{$p_{2}\,a\suptiny{0}{0}{(2)}$}}\tikzmarkend{zxw3} \\
        \hline
        \tikzmarkin[ver=style green, line width=0mm,]{zxd33}\raisebox{2pt}{\protect\footnotesize{$p_{0}\,a\suptiny{0}{0}{(2)}$}}\tikzmarkend{zxd33}  &  \tikzmarkin[ver=style orange, line width=0mm,]{zxd44}\raisebox{2pt}{\protect\footnotesize{$p_{1}\,a\suptiny{0}{0}{(1)}$}}\tikzmarkend{zxd44} &  \tikzmarkin[ver=style cyan, line width=0mm,]{zxd5}\raisebox{2pt}{\protect\footnotesize{$p_{2}\,a\suptiny{0}{0}{(0)}$}}\tikzmarkend{zxd5}
    \end{array}\right].
    \label{eq:newrep min invasive}
\end{align}
A property of minimally invasive UMC measurements that follows directly from the form described above is that the conditional probabilities $q_{ll|n}$ for the system to be left in the $l$th energy eigenstate given a pointer outcome $n$ form a doubly stochastic matrix, i.e.,
\begin{align}
    \sum_{l} q_{ll|n} & = \sum_{n} q_{ll|n} \,=\, 1.
   \label{eq.non-inv}
\end{align}

As we will see in Appendix~\ref{sec:Jarjar for non-ideal measurements}, minimally invasive UMC measurements allow satisfying the Jarzynski relation.


\section{The Non-Ideal TPM Scheme}\label{sec:ttpm}

The estimation of the work performed/extracted during the process $\left(\tau\suptiny{0}{0}{(0)}, H\suptiny{0}{0}{(0)}\right) \xrightarrow{\phantom{b}\Lambda \phantom{b}}\left( \rho\suptiny{0}{0}{(\mathrm{f})} , H\suptiny{0}{0}{(\mathrm{f})}\right)$
can be calculated by means of the so called two projective measurement (TPM) process~\cite{TalknerLutzHaenggi2007, CampisiHaenggiTalkner2011},  which consists of three steps. In the first, one performs a projective measurement on $\tau\suptiny{0}{0}{(0)}$ in the eigenbasis $\{\ket{E_n\suptiny{0}{0}{(0)}}\}_{n=0,\ldots,d\Sys-1}$ of the initial Hamiltonian $H\suptiny{0}{0}{(0)}$, obtaining an outcome $n$. Then one lets the resulting post-measurement system state evolve under the action of $U_{\Lambda}$. Finally, a second projective measurement is performed in the eigenbasis $\{\ket{E_n\suptiny{0}{0}{(\mathrm{f})}}\}_{n=0,\ldots,d\Sys-1}$ of the final Hamiltonian $H\suptiny{0}{0}{(\mathrm{f})}$, resulting in an outcome $m$. The aim of the protocol is to estimate the work performed (or extracted) by the process $\Lambda$ based on the joint probability distribution $p(n,m)$, which itself is estimated by repeating the procedure sufficiently many times. For ideal projective measurements, the joint probability distribution can be written as~\cite{CampisiHaenggiTalkner2011}
\begin{align}
    p(n,m) &=\, p_{n}\suptiny{0}{0}{(0)} \, p_{n \rightarrow m},
    \label{pideal}
\end{align}
where $p_n\suptiny{0}{0}{(0)} = \frac{1}{\mathcal{Z}\suptiny{0}{0}{(0)}} \exp(-\beta E_n\suptiny{0}{0}{(0)})$ are the diagonal elements of the initial thermal state $\tau\suptiny{0}{0}{(0)}$ of the system, $\mathcal{Z}\suptiny{0}{0}{(0)}=\sum_{n}\exp(-\beta E_n\suptiny{0}{0}{(0)})$ is the partition function w.r.t. the initial Hamiltonian and inverse temperature $\beta$. Note that the inverse system temperature $\beta$ need not match the initial inverse temperature $\beta\Poi$ of the pointer, which enters $C_{\mathrm{max}}$ from Eq.~(\ref{eq:Cmax appendix}). The symbol $p_{\raisebox{-1pt}{\scriptsize{$n\!\rightarrow\!m$}}}=|\bra{E_{m}\suptiny{0}{0}{(\mathrm{f})}}U_{\Lambda}\ket{E_{n}\suptiny{0}{0}{(0)}}|^{2}$ denotes the transition probability from an initial energy eigenstate $\ket{E_n\suptiny{0}{0}{(0)}}$ to a final energy eigenstate $\ket{E_m\suptiny{0}{0}{(\mathrm{f})}}$. The work distribution, i.e., the probability density for performing (or extracting) the work $W$ given the process $\Lambda$ is defined as
\begin{align}
    P(W) & = \sum_{m,n} p(n,m) \,\delta\ch{E_m\suptiny{0}{0}{(\mathrm{f})} - E_n\suptiny{0}{0}{(0)} - W}.
    \label{eq:pw.ideal}
\end{align}
On average the work spent or extracted by the process $\Lambda$ is then obtained by integration, i.e.,
\begin{align}
    \mean{W} &=\, \int P(W)\, W\, dW,
    \label{eq:meanw.ideal}
\end{align}
where the integral is taken over all possible values of work. The average work can be written in terms of the joint probability $p(n,m)$ by inserting Eq.~\eqref{eq:pw.ideal} into Eq.~\eqref{eq:meanw.ideal}, resulting in the expression
\begin{align}
    \mean{W} &=\, \sum_{m,n} p(n,m) \ch{E_{m}\suptiny{0}{0}{(\mathrm{f})} - E_{n}\suptiny{0}{0}{(0)}}.
\end{align}

Let us now consider the TPM scheme when the ideal projective measurements are replaced by non-ideal measurements, more specifically, minimal energy UMC measurements as described in Appendix~\ref{appendix:Ideal and Non-Ideal Measurements}. We discuss each step of the process in detail below.


\subsection{First Measurement}

First, note that the initial system state for the TPM scheme is a Gibbs equilibrium state at the ambient temperature. That is, here the system state to be measured during the first measurement is $\rho\Sys=\tau\suptiny{0}{0}{(0)}(\beta)$, given by
\begin{align}
    \tau\suptiny{0}{0}{(0)}(\beta) &=\,   \frac{1}{\mathcal{Z}\suptiny{0}{0}{(0)}} \exp(-\beta H\suptiny{0}{0}{(0)}) \,=\,\frac{1}{\mathcal{Z}\suptiny{0}{0}{(0)}} \sum_{n} \exp(-\beta E_n\suptiny{0}{0}{(0)})\ketbra{E_n\suptiny{0}{0}{(0)}}{E_n\suptiny{0}{0}{(0)}},
\end{align}
with $H\suptiny{0}{0}{(0)} = \sum_n E_n\suptiny{0}{0}{(0)}\ketbra{E_n\suptiny{0}{0}{(0)}}{E_n\suptiny{0}{0}{(0)}}$ at time $t_0$. Then, we assume that the first non-ideal measurement is performed in the eigenbasis of $H\suptiny{0}{0}{(0)}$. We assume that this measurement, though non-ideal, is unbiased [property~(\ref{item:unbiased main}) described in Sec.~\ref{sec:ideal measurements}], such that a measurement result $n$ is obtained with probability $p_{n}\suptiny{0}{0}{(0)}=\bra{E_n\suptiny{0}{0}{(0)}}\tau\suptiny{0}{0}{(0)}\ket{E_n\suptiny{0}{0}{(0)}}\,=\,\exp(-\beta E_n\suptiny{0}{0}{(0)})/\mathcal{Z}\suptiny{0}{0}{(0)}$. Moreover, from here on, we restrict our investigation to non-ideal unbiased measurements that achieve maximal correlation $C_{\mathrm{max}}$ (UMC measurements) and either have \emph{minimal energy} or are \emph{minimal energy minimally invasive measurements}, as described in Appendices~\ref{appsec:min energy unbiased max corr measurements} and~\ref{appsec:min invasive UMC measurements}, respectively, such that the post-measurement state of the system is diagonal w.r.t. the measurement basis, i.e.,
\begin{align}
    \rho_{n}\suptiny{0}{0}{(0)} &= \sum\limits_{l} q\suptiny{0}{0}{(0)}_{ll|n}  \ketbra{E_l\suptiny{0}{0}{(0)}}{E_l\suptiny{0}{0}{(0)}}
    \quad \forall \,n\,,
    \label{rhon}
\end{align}
where $q\suptiny{0}{0}{(0)}_{ll|n}$ is the the probability to find the post-measurement system in the state $\ket{E_l\suptiny{0}{0}{(0)}}$ given the measurement result $n$. In particular, the probability to correctly guess the energy eigenstate $\ket{E_n\suptiny{0}{0}{(0)}}$, when the pointer shows $n$ is given by $q\suptiny{0}{0}{(0)}_{nn|n}$, which has the value
\begin{align}
    q\suptiny{0}{0}{(0)}_{nn|n} &=\,C_{\mathrm{max}}
    \label{eq.cmax}
\end{align}
for UMC measurements. The conditional probabilities $q\suptiny{0}{0}{(0)}_{ll|n}$ are functions of the inverse temperature $\beta$, and the pointer energies (since the initial pointer state is also a thermal state at ambient temperature $T=1/k\subtiny{0}{0}{\mathrm{B}} \beta$), but because the measurement is unbiased, the conditional state $\rho_{n}\suptiny{0}{0}{(0)}$ is independent of the system state $\tau\suptiny{0}{0}{(0)}$ before the measurement.


\subsection{Evolution and Second Measurement}

After the initial non-ideal measurement, the post-measurement conditional state $\rho_{n}\suptiny{0}{0}{(0)}$ is evolving according to the unitary $U_{\Lambda}$ corresponding to the process $\Lambda$. The unitary $U_{\Lambda}$ can be written as time evolution from an initial time $t_{0}$ to a final time $t_{\mathrm{f}}$,
$U_{\Lambda} =  \mathcal{T}_{+}\exp\left(-i\int_{t_0}^{t_{\mathrm{f}}}\! H(\lambda_{t})\right)\,dt$,
where $\mathcal{T}_{+}$ denotes time-ordering, and $\lambda_{t}$ is a control parameter for the time-dependent Hamiltonian such that $H(\lambda_{t_{0}})=H\suptiny{0}{0}{(0)}$ and $H(\lambda_{t_{\mathrm{f}}})=H\suptiny{0}{0}{(\mathrm{f})}$. The time-evolved state prior to the second measurement is thus
\begin{align}
    \rho_{n}\suptiny{0}{0}{\Lambda} &=\, U_{\Lambda} \rho_n\suptiny{0}{0}{(0)} U^{\dagger}_{\Lambda} \,=\,
    \sum\limits_{l} q\suptiny{0}{0}{(0)}_{ll|n}
    U_{\Lambda}\ketbra{E_{l}\suptiny{0}{0}{(0)}}{E_{l}\suptiny{0}{0}{(0)}} U_{\Lambda}^{\dagger}\,.
    \label{ev1}
\end{align}

Finally, a second non-ideal measurement is performed on $\rho_{n}\suptiny{0}{0}{\Lambda}$ w.r.t. the eigenbasis $\{\ket{E_m\suptiny{0}{0}{(\mathrm{f})}}\}$ of $H\suptiny{0}{0}{(\mathrm{f})}$. Once again, since the measurement is unbiased, the probability to obtain any particular outcome $m$ only depends on the specific system state prior to the measurement. However, in this case this state is $\rho_{n}\suptiny{0}{0}{\Lambda}$ from Eq.~(\ref{ev1}) and hence depends on the first outcome $n$. The conditional probability to obtain outcome $m$ in the second measurement given outcome $n$ in the first measurement is thus
\begin{align}
    p(m|n)  &=\,\bra{E_m\suptiny{0}{0}{(\mathrm{f})}}\rho_{n}\suptiny{0}{0}{\Lambda}\ket{E_m\suptiny{0}{0}{(\mathrm{f})}}
    \,=\,
    \sum\limits_{l} q\suptiny{0}{0}{(0)}_{ll|n}
    \bra{E_m\suptiny{0}{0}{(\mathrm{f})}} U_{\Lambda} \ketbra{E_{l}\suptiny{0}{0}{(0)}}{E_{l}\suptiny{0}{0}{(0)}} U_{\Lambda}^{\dagger} \ket{E_m\suptiny{0}{0}{(\mathrm{f})}}
    \,=\,
    \sum\limits_{l}  q\suptiny{0}{-1}{(0)}_{ll|n}\,p_{\raisebox{-1pt}{\scriptsize{$l\!\rightarrow\!m$}}},
    \label{eq:cond prob sec meas}
\end{align}
where $p_{\raisebox{-1pt}{\scriptsize{$l\!\rightarrow\!m$}}} = |\bra{E_{m}\suptiny{0}{0}{(\mathrm{f})}}U_{\Lambda}\ket{E_{l}\suptiny{0}{0}{(0)}}|^{2}$. Consequently, the joint probability $p(n,m)$ is obtained by multiplying with the probability to obtain outcome $n$ in the first measurement, i.e.,
\begin{align}
    p(n,m)  &=\,p(m|n) \,p_{n}\suptiny{0}{0}{(0)}\,=\,
    p_{n}\suptiny{0}{0}{(0)}\sum\limits_{l}  q\suptiny{0}{-1}{(0)}_{ll|n}\,p_{\raisebox{-1pt}{\scriptsize{$l\!\rightarrow\!m$}}}.
    \label{eq: joint prob}
\end{align}
Meanwhile, the post-measurement state after the second measurement, conditioned on having obtained outcome $n$ and $m$ in the first and second measurement, respectively, is
\begin{align}
    \rho\suptiny{0}{-1}{(\mathrm{f})}_{m|n}
    &= \sum\limits_{k} q\suptiny{0}{0}{(\mathrm{f})}_{kk|m}  \ketbra{E_k\suptiny{0}{0}{(\mathrm{f})}}{E_k\suptiny{0}{0}{(\mathrm{f})}}
    \quad \forall \,m\,.
    \label{eq:cond state 2nd meas}
\end{align}
As in the first measurement, the conditional post-measurement state of the system is independent of the pre-measurement system state. In this case, this implies that $\rho\suptiny{0}{-1}{(\mathrm{f})}_{m|n}$ is indeed independent of the outcome $n$ of the first measurement.


\section{Work Estimation in the Non-Ideal TPM Scheme}\label{sec:work estimation nonid tpm}

\subsection{Non-Ideal Work Distribution and Estimate}\label{sec:work nonid tpm} 

To estimate the work in the TPM scheme based on non-ideal measurements (as before, we assume minimal energy UMC measurements as described in Appendix~\ref{appendix:Ideal and Non-Ideal Measurements}), some energy must be spent on the measurement processes, where the precision in the estimate is directly dependent on the energy spent. Considering the joint probability in Eq.~\eqref{eq: joint prob} and the definition in Eq.~\eqref{eq:pw.ideal}, the probability distribution for inferring the work value $W$ is
\begin{align}
    P(W) &= \sum_{m,n} p_n\suptiny{0}{0}{(0)} \sum_l  q_{ll|n}\suptiny{0}{0}{(0)} |\bra{E_m\suptiny{0}{0}{(\mathrm{f})}}U_{\Lambda}\ket{E_l\suptiny{0}{0}{(0)}}|^{2}\, \delta\ch{(E_m\suptiny{0}{0}{(\mathrm{f})} - E_n\suptiny{0}{0}{(0)}) - W}.
\end{align}
We can now calculate the average work,
\begin{align}\label{nworknice}
\mean{W}_{\mathrm{non-id}} &= \int dW P(W) W 
= \sum_{m,n} p_n\suptiny{0}{0}{(0)} \sum_l  q_{ll|n}\suptiny{0}{0}{(0)} |\bra{E_m\suptiny{0}{0}{(\mathrm{f})}}U_{\Lambda}\ket{E_l\suptiny{0}{0}{(0)}}|^2 \ch{E_m\suptiny{0}{0}{(\mathrm{f})} - E_n\suptiny{0}{0}{(0)} },\\[1mm]
&=  \sum_{m,n}  q_{nn|n}\suptiny{0}{0}{(0)} \,p_n\suptiny{0}{0}{(0)} |\bra{E_m\suptiny{0}{0}{(\mathrm{f})}}U_{\Lambda}\ket{E_n\suptiny{0}{0}{(0)}}|^2 \ch{E_m\suptiny{0}{0}{(\mathrm{f})} - E_n\suptiny{0}{0}{(0)} } 
+ \sum_{m,n} \sum_{l\neq n } q_{ll|n}\suptiny{0}{0}{(0)} \,p_n\suptiny{0}{0}{(0)} |\bra{E_m\suptiny{0}{0}{(\mathrm{f})}}U_{\Lambda}\ket{E_l\suptiny{0}{0}{(0)}}|^2 \ch{E_m\suptiny{0}{0}{(\mathrm{f})} - E_n\suptiny{0}{0}{(0)} }.
\nonumber
\end{align}
Substituting for $C_{\text{max}}$ in Eq.~\eqref{eq.cmax}, we arrive at
\begin{align}
\mean{W}_{\mathrm{non-id}}  &= C_{\text{max}}  \sum_{m,n} p_n\suptiny{0}{0}{(0)} |\bra{E_m\suptiny{0}{0}{(\mathrm{f})}}U_{\Lambda}\ket{E_n\suptiny{0}{0}{(0)}}|^2 \ch{E_m\suptiny{0}{0}{(\mathrm{f})} - E_n\suptiny{0}{0}{(0)} } + \sum_{m,n} \sum_{l\neq n } q_{ll|n}\suptiny{0}{0}{(0)}\, p_n\suptiny{0}{0}{(0)} |\bra{E_m\suptiny{0}{0}{(\mathrm{f})}}U_{\Lambda}\ket{E_l\suptiny{0}{0}{(0)}}|^2 \ch{E_m\suptiny{0}{0}{(\mathrm{f})} - E_n\suptiny{0}{0}{(0)} }.
\end{align}
Since $\mean{W}_{\Lambda}= \sum_{m,n} p_{n}\suptiny{0}{0}{(0)} |\bra{E_m\suptiny{0}{0}{(\mathrm{f})}}U_{\Lambda}\ket{E_{n}\suptiny{0}{0}{(0)}}|^2 \ch{E_m\suptiny{0}{0}{(\mathrm{f})} - E_n\suptiny{0}{0}{(0)} }$ we can write the non-ideal work estimate as
\begin{align}\label{eq.non-ideal work}
    \mean{W}_{\mathrm{non-id}}  &= C_{\mathrm{max}}\,\mean{W}_{\Lambda}
    \,+\, \sum_{m,n} \sum_{l\neq n} q_{ll|n}\suptiny{0}{0}{(0)}\, p_n\suptiny{0}{0}{(0)} |\bra{E_m\suptiny{0}{0}{(\mathrm{f})}}U_{\Lambda}\ket{E_l\suptiny{0}{0}{(0)}}|^2 \ch{E_m\suptiny{0}{0}{(\mathrm{f})} - E_n\suptiny{0}{0}{(0)}}.
\end{align}


\subsection{Ideal versus Non-Ideal Work Estimate}\label{sec:id vs nonid tpm}

Before moving on, let us discuss more about the difference between the work estimates based on ideal and non-ideal (UMC) measurements. Noting that $\mean{W}_{\Lambda}=\Delta E_{\Lambda}= \tr(\U{\tau\suptiny{0}{0}{(0)}}\, H\suptiny{0}{0}{(\mathrm{f})}) - \tr(\tau\suptiny{0}{0}{(0)}H\suptiny{0}{0}{(0)})$ and $\tr(\tau\suptiny{0}{0}{(0)}H\suptiny{0}{0}{(0)})=\sum_{n}p_n\suptiny{0}{0}{(0)}E_n\suptiny{0}{0}{(0)}$, we can use the expression of $\mean{W}_{\mathrm{non-id}}$ from Eq.~\eqref{nworknice} to write
\begin{align}
    \mean{W}_{\mathrm{non-id}} - \mean{W}_{\Lambda} &= \sum_{m,n} p_n\suptiny{0}{0}{(0)} \sum_l  q_{ll|n}\suptiny{0}{0}{(0)}\,  |\bra{E_m\suptiny{0}{0}{(\mathrm{f})}}U\ket{E_l\suptiny{0}{0}{(0)}}|^2E_m\suptiny{0}{0}{(\mathrm{f})}  \,-\,   \tr(\U{\tau\suptiny{0}{0}{(0)}} \,H\suptiny{0}{0}{(\mathrm{f})}).
    \label{eq:something1}
\end{align}
The first term on the right-hand side of Eq.~(\ref{eq:something1}) can be rewritten as
\begin{align}
    \sum_{m,n} p_n\suptiny{0}{0}{(0)} \sum_l  q_{ll|n}\suptiny{0}{0}{(0)} |\bra{E_m\suptiny{0}{0}{(\mathrm{f})}}U_{\Lambda}\ket{E_l\suptiny{0}{0}{(0)}}|^2E_m\suptiny{0}{0}{(\mathrm{f})}    &= \tr\Bigl( \sum_{n} p_n\suptiny{0}{0}{(0)} \sum_l  q_{ll|n}\suptiny{0}{0}{(0)}\U{\ketbra{E_l\suptiny{0}{0}{(0)}}{E_l\suptiny{0}{0}{(0)}}}\,H\suptiny{0}{0}{(\mathrm{f})}   \Bigr)
    \,=\,
    \tr\Bigl( U_{\Lambda}\,\tilde \rho_S\suptiny{0}{0}{(0)}\,
    U_{\Lambda}^{\dagger}\,H\suptiny{0}{0}{(\mathrm{f})}\Bigr),
\end{align}
where we have denoted the unconditional post-measurement state after the first measurement as $\tilde \rho \suptiny{0}{0}{(0)} := \sum_n p_{n}\suptiny{0}{0}{(0)}  \rho_{n}\suptiny{0}{0}{(0)}$ and we have used $\rho_{n}\suptiny{0}{0}{(0)} =  \sum_l q\suptiny{0}{-1}{(0)}_{ll|n}\ketbra{E_{l}\suptiny{0}{0}{(0)}}{E_{l}\suptiny{0}{0}{(0)}}$. With this, we can rewrite Eq.~(\ref{eq:something1}) as
\begin{align}\label{eq:work_dif}
    \mean{W}_{\mathrm{non-id}} - \mean{W}_{\Lambda} &=\, \tr\ch{\U{\bigl(\tilde{\rho}\suptiny{0}{0}{(0)} - \tau\suptiny{0}{0}{(0)}\bigr)} \,H\suptiny{0}{0}{(\mathrm{f})}}.
\end{align}
If $T\rightarrow 0$, then $\tilde{\rho}\suptiny{0}{0}{(0)}=  \tau\suptiny{0}{0}{(0)}$ for any process $\Lambda$, which means that $\mean{W}_{\mathrm{non-id}} =\mean{W}_{\Lambda}$, i.e., one obtains ideal projective measurements. On the other hand if $T$ is nonzero, $\mean{W}_{\mathrm{non-id}} =\mean{W}_{\Lambda}$ can be achieved only for  specific processes ${\Lambda}$. For example, for the unitary $U_{\Lambda}$ of the form
\begin{equation}
U_{\Lambda} = \frac{1}{\sqrt{d\Sys}}\sum_{j,k}e^{-\frac{2\pi i}{d\Sys}j\nr k}\ketbra{E_{k}\suptiny{0}{0}{(0)}}{E_{j}\suptiny{0}{0}{(0)}},
\end{equation}
and $[H\suptiny{0}{0}{(0)},H\suptiny{0}{0}{(\mathrm{f})}]=0$ we have $|\bra{E_m\suptiny{0}{0}{(\mathrm{f})}}U_{\Lambda}\ket{E_l\suptiny{0}{0}{(0)}}|^2 = \frac{1}{{d_{\mathrm{S}}}}$. It is then simple to check that
\begin{equation}\label{eq: condition w non-id equal w ideal}
    \tr\ch{\U{\tilde{\rho}\suptiny{0}{0}{(0)}}\,H\suptiny{0}{0}{(\mathrm{f})}} =  \tr\ch{\U{\tau\suptiny{0}{0}{(0)}} \,H\suptiny{0}{0}{(\mathrm{f})}} \, = \, \frac{1}{d\Sys}\sum_{m}E_{m}\suptiny{0}{0}{(\mathrm{f})}.
\end{equation}
Consequently, one finds $\mean{W}_{\mathrm{non-id}}=\mean{W}_{\Lambda}$. The intuition behind this example is the following. The disturbances due to non-ideal UMC measurements do not generate coherences (off-diagonal elements) between states with different energies in the initial thermal state. Consequently, the unconditional state upon which the process $\Lambda$ acts is diagonal in the energy eigenbasis. For any such state, the process chosen in our example results in a state whose diagonal elements (w.r.t. the energy eigenbasis) are all equal to $1/d\Sys$. As a result, the average energy at the end of the protocol is independent of the disturbance induced by the first measurement. We thus see that there exist processes such that  $\mean{W}_{\Lambda}$ can be precisely estimated independently of the temperature of the pointer.\\

Let us now bound the deviation of the non-ideal work estimate $\mean{W}_{\mathrm{non-id}}$ from the ideal estimate $\mean{W}_{\Lambda}$. Starting from the expression in Eq.~\eqref{eq:work_dif}, we can use H{\"o}lder's inequality $|\tr(XY^{\dagger})|\leq \N{X}_{1}\N{Y}_{\infty}$ to write
\begin{align}
    |\mean{W}_{\mathrm{non-id}} - \mean{W}_{\Lambda}| &\leq\,
    \N{\U{\bigl(\tilde{\rho}\suptiny{0}{0}{(0)} - \tau\suptiny{0}{0}{(0)}\bigr)}}_{1}\,
    \N{H\suptiny{0}{0}{(\mathrm{f})}}_{\infty}
    \,=\,
    \N{\tilde{\rho}\suptiny{0}{0}{(0)} - \tau\suptiny{0}{0}{(0)}}_{1}\,
    \N{H\suptiny{0}{0}{(\mathrm{f})}}_{\infty},
\end{align}
where in last step we have applied the trace invariance under unitary. For details about the properties of the trace distance we refer the reader to, e.g., Ref.~\cite{Watrous2018}. Inserting $\tilde{\rho}\suptiny{0}{0}{(0)} := \sum_n p_{n}\suptiny{0}{0}{(0)}  \rho_{n}\suptiny{0}{0}{(0)}$ and $\tau\suptiny{0}{0}{(0)}=\sum_n p_{n}\suptiny{0}{0}{(0)}\ket{E_n\suptiny{0}{0}{(0)}}\!\!\bra{E_n\suptiny{0}{0}{(0)}}$, we can further write
\begin{align}
    \N{\tilde{\rho}\suptiny{0}{0}{(0)} - \tau\suptiny{0}{0}{(0)}}_{1} &=\,
    \bigl\Vert\sum\limits_{n}
    p_{n}\suptiny{0}{0}{(0)}
    \bigl(\rho_{n}\suptiny{0}{0}{(0)}
    -\ket{E_n\suptiny{0}{0}{(0)}}\!\!\bra{E_n\suptiny{0}{0}{(0)}}\bigr)
    \bigr\Vert_{1}
    \,\leq\,
    \sum\limits_{n}
    \bigl\Vert
    p_{n}\suptiny{0}{0}{(0)}
    \bigl(\rho_{n}\suptiny{0}{0}{(0)}
    -\ket{E_n\suptiny{0}{0}{(0)}}\!\!\bra{E_n\suptiny{0}{0}{(0)}}\bigr)
    \bigr\Vert_{1}
    \,=\,
    \sum\limits_{n} p_{n}\suptiny{0}{0}{(0)}
    \bigl\Vert
    \rho_{n}\suptiny{0}{0}{(0)}
    -\ket{E_n\suptiny{0}{0}{(0)}}\!\!\bra{E_n\suptiny{0}{0}{(0)}}
    \bigr\Vert_{1},
\end{align}
where we have used the triangle inequality in the second step. Finally, we recall that the $1$-norm coincides with the trace norm for the operators we consider here, and we insert from Eq.~\eqref{eq:cmax.dist} before using $\sum_{n} p_{n}\suptiny{0}{0}{(0)}=1$ to arrive at
\begin{align}
    |\mean{W}_{\mathrm{non-id}} - \mean{W}_{\Lambda}| &\leq\,
    (1-C_{\mathrm{max}})\,
    \N{H\suptiny{0}{0}{(\mathrm{f})}}_{\infty}.
\end{align}


\section{Energy Variation in Non-Ideal Work Estimation}\label{sec:change of average energy}

In the scenario in which ideal projective measurements are not assumed, the change in the average energy of the system in the TPM process is also different from the work estimate. Once again, we therefore consider non-ideal minimal energy UMC measurements as described in Appendix~\ref{appendix:Ideal and Non-Ideal Measurements}. To express the work estimate in this case, let us first define the difference between the energy $\Delta E_{\mathrm{non-id}}$ at the beginning and at the end of the TPM process
\begin{align}
    \Delta E_{\mathrm{non-id}}    \,=\, \sum\limits_{k} p_{k}\suptiny{0}{-1}{(\mathrm{f})}\, E_k\suptiny{0}{0}{(\mathrm{f})}
    - \sum\limits_{n} p_{n}\suptiny{0}{0}{(0)} E_{n}\suptiny{0}{0}{(0)},
\end{align}
where $p_{k}\suptiny{0}{-1}{(\mathrm{f})}$ is the probability to find the post-measurement system after the second measurement in the eigenstate $\ket{E_k\suptiny{0}{0}{(\mathrm{f})}}$ of $H\suptiny{0}{0}{(\mathrm{f})}$. This probability can be expressed as
\begin{align}
    p_{k}\suptiny{0}{-1}{(\mathrm{f})} &=\sum\limits_{m,n}
    q\suptiny{0}{-1}{(\mathrm{f})}_{kk|m}\,p(m|n) \,p_{n}\suptiny{0}{0}{(0)},
\end{align}
where $q\suptiny{0}{-1}{(\mathrm{f})}_{kk|m}$ is the conditional probability to find the system in the final energy eigenstate $\ket{E_k\suptiny{0}{0}{(\mathrm{f})}}$ given the measurement result $m$.
The conditional probability  $p(m|n) = \bra{E_m\suptiny{0}{0}{(\mathrm{f})}} \U{\,\rho_{n}\suptiny{0}{0}{(0)}\,}\ket{E_m\suptiny{0}{0}{(\mathrm{f})}}$ is the probability
to obtain outcome {$m$} in the second measurement given that the result of the first measurement was $n$, {as in Eq.~\eqref{ev1}}. Collecting all these expressions, we can rewrite the average energy difference between the initial system state and the system state after the non-ideal TPM scheme as
\begin{align}
    \Delta E_{\mathrm{non-id}}    &=\,\sum_{m,n,k}
    q\suptiny{0}{-1}{(\mathrm{f})}_{kk|m}\,p(m|n)\,p_{n}\suptiny{0}{0}{(0)}\,
    \bigl(E_{k}\suptiny{0}{0}{(\mathrm{f})}-E_{n}\suptiny{0}{0}{(0)}\bigr)
    \, =\, \sum\limits_{m,n} p_n\suptiny{0}{0}{(0)} \sum_{k,l} q\suptiny{0}{-1}{(0)}_{ll|n}
    q\suptiny{0}{-1}{(\mathrm{f})}_{kk|m}  |\bra{E_{m}\suptiny{0}{0}{(\mathrm{f})}} U_{\Lambda}\ket{E_{l}\suptiny{0}{0}{(0)}}|^{2}
    \bigl(E_{k}\suptiny{0}{0}{(\mathrm{f})}-E_{n}\suptiny{0}{0}{(0)}\bigr).
\end{align}
{We then split this expression into several sums, where the first collects all terms for which $n=l$ and $m=k$, i.e.,}
\begin{align}
    \Delta E_{\mathrm{non-id}}    &=\,
    \sum\limits_{m,n} p_{n}\suptiny{0}{0}{(0)}
    \bigl[
        q\suptiny{0}{-1}{(0)}_{nn|n}\,
        |\bra{E_{m}\suptiny{0}{0}{(\mathrm{f})}} U_{\Lambda}\ket{E_{n}\suptiny{0}{0}{(0)}}|^{2}
        \,+\,
        \sum\limits_{l\neq n} q\suptiny{0}{-1}{(0)}_{ll|n}\,
        |\bra{E_{m}\suptiny{0}{0}{(\mathrm{f})}} U_{\Lambda}\ket{E_{l}\suptiny{0}{0}{(0)}}|^{2}
    \bigr]
    \bigl[
        q\suptiny{0}{-1}{(\mathrm{f})}_{mm|m}\,
        \bigl(E_{m}\suptiny{0}{0}{(\mathrm{f})}\!-\!E_{n}\suptiny{0}{0}{(0)}\bigr)
        \,+\,
        \sum\limits_{k\neq m} q\suptiny{0}{-1}{(\mathrm{f})}_{kk|m}\,
        \bigl(E_{k}\suptiny{0}{0}{(\mathrm{f})}\!-\!E_{n}\suptiny{0}{0}{(0)}\bigr)
    \bigr]\nonumber\\[1mm]
    &=\,
    C_{\mathrm{max}}^{2}\,\mean{W}_{\Lambda}\,
    +\,C_{\mathrm{max}}\sum\limits_{m,n} p_{n}\suptiny{0}{0}{(0)}
    \bigl[
        \sum\limits_{k\neq m}q\suptiny{0}{-1}{(\mathrm{f})}_{kk|m}
        p_{\raisebox{-1pt}{\scriptsize{$n\!\rightarrow\!m$}}}
        \bigl(E_{k}\suptiny{0}{0}{(\mathrm{f})}\!-\!E_{n}\suptiny{0}{0}{(0)}\bigr)
    +
        \sum\limits_{l\neq n}q\suptiny{0}{-1}{(0)}_{ll|n}
        p_{\raisebox{-1pt}{\scriptsize{$l\!\rightarrow\!m$}}}
        \bigl(E_{m}\suptiny{0}{0}{(\mathrm{f})}\!-\!E_{n}\suptiny{0}{0}{(0)}\bigr)
    \bigr]\nonumber\\[1mm]
    &\ \ +\!\!\sum\limits_{\substack{m,n\\ l\neq n\\ k\neq m}}
        p_{n}\suptiny{0}{0}{(0)}\,
        q\suptiny{0}{-1}{(0)}_{ll|n}\,
        p_{\raisebox{-1pt}{\scriptsize{$l\!\rightarrow\!m$}}}\,
        q\suptiny{0}{-1}{(\mathrm{f})}_{kk|m}\,
        \bigl(E_{k}\suptiny{0}{0}{(\mathrm{f})}\!-\!E_{n}\suptiny{0}{0}{(0)}\bigr)\,.
\end{align}
where we have used the shorthand $p_{\raisebox{-1pt}{\scriptsize{$l\!\rightarrow\!m$}}}
=|\bra{E_{m}\suptiny{0}{0}{(\mathrm{f})}} U_{\Lambda}\ket{E_{l}\suptiny{0}{0}{(0)}}|^{2}$ and we note that $q\suptiny{0}{-1}{(0)}_{nn|n} = q\suptiny{0}{-1}{(\mathrm{f})}_{mm|m}=C_{\mathrm{max}}$ for all $n$ and $m$, since we are considering UMC measurements. For ideal projective measurements, $C_{\mathrm{max}}\rightarrow 1$ and consequently  $q\suptiny{0}{-1}{(0)}_{ll|n} = \delta_{l,n}$, and therefore one can notice that  $\Delta E_{\mathrm{non-id}}(C_{\mathrm{max}}\rightarrow 1) =\mean{W}_{\Lambda}$.


\section{Jarzynski Equality for Non-Ideal Projective Measurements}\label{sec:Jarjar for non-ideal measurements}

\subsection{Characteristic Function and Jarzynski Equality}

In~\cite{CampisiHaenggiTalkner2011} it was shown that by taking the Fourier transform of the work probability distribution $P(W)$ one can define a following characteristic function, which we parametrise by $u$
\begin{align}
G(u) &= \int \text{d}W P(W) \exp{(iuW)} \,=\, \sum_{n,m} \exp{\left(iu(E_m\suptiny{0}{0}{(\mathrm{f})} - E_n\suptiny{0}{0}{(0)})\right)}p(n,m).
\end{align}
We can calculate this function explicitly for the non-ideal projective measurement by substituting for the probabilities $p(n,m)$ from Eq.~\eqref{eq: joint prob}:
\begin{align}
G(u) &=\sum_{m,n}\sum_l p_n\suptiny{0}{0}{(0)}          q_{ll|n}\suptiny{0}{0}{(0)}|\bra{E_m\suptiny{0}{0}{(\mathrm{f})}}U_{{\Lambda}}\ket{E_l\suptiny{0}{0}{(0)}}|^2\\
    &= \sum_{m}\bra{E_m\suptiny{0}{0}{(\mathrm{f})}} \exp\ch{iuH\suptiny{0}{0}{(\mathrm{f})}} \sum_{n,l} \exp{\ch{-iuE_n\suptiny{0}{0}{(0)}}}p_n\suptiny{0}{0}{(0)} q_{ll|n}\suptiny{0}{0}{(0)}U_{{\Lambda}}\ket{E_l\suptiny{0}{0}{(0)}}\bra{E_l\suptiny{0}{0}{(0)}} U^{\dagger}_{{\Lambda}} \ket{E_m\suptiny{0}{0}{(\mathrm{f})}}\\
     &= \sum_{m}\bra{E_m\suptiny{0}{0}{(\mathrm{f})}} \exp\ch{iuH\suptiny{0}{0}{(\mathrm{f})}} \sum_{n}  \exp{\ch{-iuE_n\suptiny{0}{0}{(0)}}} p_n\suptiny{0}{0}{(0)} U_{{\Lambda}}\rho_n\suptiny{0}{0}{(0)} U_{{\Lambda}}^{\dagger} \ket{E_m\suptiny{0}{0}{(\mathrm{f})}},
\end{align}
where $\sum_l q_{ll|n}\suptiny{0}{0}{(0)} \ket{E_l\suptiny{0}{0}{(0)}}\bra{E_l\suptiny{0}{0}{(0)}} = \rho_n\suptiny{0}{0}{(0)}$ is the post measurement state of the pointer that indicates outcome $n$ with probability $p_n\suptiny{0}{0}{(0)} = \exp\ch{-\beta E_n\suptiny{0}{0}{(0)} }/\mathcal{Z}\suptiny{0}{0}{(0)}$. Here, we have again restricted our analysis to minimal energy UMC measurements as described in Appendix~\ref{appendix:Ideal and Non-Ideal Measurements}.

Let us then write
\begin{align}
\sigma(u) &= \sum_n  \exp{\ch{-iuE_n\suptiny{0}{0}{(0)}}} p_n\suptiny{0}{0}{(0)} \rho_n\suptiny{0}{0}{(0)} = \frac{1}{\mathcal{Z}\suptiny{0}{0}{(0)}} \sum_n  \exp{\ch{-(iu +\beta) E_n\suptiny{0}{0}{(0)} }} \rho_n\suptiny{0}{0}{(0)},
\end{align}
which we then use to {obtain the characteristic function}
\begin{equation}
G(u) = \tr\ch{ \exp\ch{iuH\suptiny{0}{0}{(\mathrm{f})}} \U{\sigma(u) } }.
\end{equation}
To calculate the work average $\mean{\exp(-\beta W)}$, we further calculate $G(u = i\beta)$
\begin{align}
\mean{\exp(-\beta W)}  & = G(u = i\beta)= \tr\left[ \exp\left(-\beta H\suptiny{0}{0}{(\mathrm{f})}\right) \U{\sigma(i\beta)} \right],
\end{align}
where
\begin{align}
\sigma(i\beta) &= \frac{1}{\mathcal{Z}\suptiny{0}{0}{(0)}} \sum_n \exp\ch{-(-\beta +\beta)E_n\suptiny{0}{0}{(0)}}\sum_l q_{ll|n}\suptiny{0}{0}{(0)}\, \ketbra{E_l\suptiny{0}{0}{(0)}}{E_l\suptiny{0}{0}{(0)}}\,=\, \frac{1}{\mathcal{Z}\suptiny{0}{0}{(0)}} \sum_l  \ch{\sum_n q_{ll|n}\suptiny{0}{0}{(0)}}\,  \ketbra{E_l\suptiny{0}{0}{(0)}}{E_l\suptiny{0}{0}{(0)}}.
\end{align}
Consequently, Jarzynski's equality is satisfied if  $\sum_n q_{ll|n}\suptiny{0}{0}{(0)} = 1$, i.e.,
\begin{align}
\mean{\exp(-\beta W)}  = \exp\ch{-\beta \Delta F},
\end{align}
where $\Delta F$ is the free energy $\Delta F =\frac{1}{\mathcal{Z}\suptiny{0}{0}{(0)}} \tr\ch{\exp(-\beta H\suptiny{0}{0}{(\mathrm{f})}) } =  \frac{\mathcal{Z}\suptiny{0}{0}{(\mathrm{f})}}{\mathcal{Z}\suptiny{0}{0}{(0)}}$. The condition  $\sum_n q_{ll|n}\suptiny{0}{0}{(0)} = 1$ is not generally met for all unbiased measurements, and in particular not by any minimal energy UMC measurements beyond dimension\footnote{For the special case $d\Sys=2$, the unbiasedness condition already leaves no choice but to make the matrix $\tilde{\pi}$ a Latin square, and hence the measurement can be a minimal energy UMC measurement that is also minimally invasive.} $d\Sys=2$. However, minimally invasive UMC measurements discussed in Appendix~\ref{appsec:min invasive UMC measurements} satisfy exactly the desired condition, thus allowing to satisfy the Jarzynski equality, while providing the correct (unbiased) measurement statistics and achieving maximal correlation $C_{\mathrm{max}}$ between the pointer outcomes and post-measurement system states.


\section{Crook's Relation in the Presence of Non-Ideal Measurements}\label{sec:non ideal crooks}

\subsection{Backward Process for Non-Ideal TPM Scheme}

Crook's theorem~\cite{Crooks1999} quantifies the relation between the probability of observing a work value during a realisation of the two projective measurement scheme for a given process $\Lambda$ with the probability of observing the same amount of work for the time-reversed process $\tilde{\Lambda}$. The time-reverse is defined in an operational sense, meaning that the sequence of external operations (driving, measurements and so forth) used to bring the system out-of-equilibrium
during the original process is inverted in the time-reversed process. A prerequisite for obtaining Crook's relation (as well as the Jarzynski equalit~\cite{Jarzynski1997}), is that both
the forward and backward processes start with the system in equilibrium at some given inverse temperature $\beta$. Therefore, the first step in studying whether Crook's relation holds in the non-ideal projective measurement setting (more specifically, restrict our analysis to minimal energy UMC measurements described in Appendix~\ref{appendix:Ideal and Non-Ideal Measurements}), is to define a meaningful time-reversed (backward) process. This is achieved in three steps.
\begin{enumerate}
\item{We start with a system with Hamiltonian
    $H\suptiny{0}{-1}{(\mathrm{f})} = \sum_m E\suptiny{0}{-1}{(\mathrm{f})}_m \ket{E\suptiny{0}{-1}{(\mathrm{f})}_m}\bra{E\suptiny{0}{-1}{(\mathrm{f})}_m}$ that is in equilibrium at inverse temperature $\beta$. The initial state of the backward process, which is to be transformed according to
    \begin{equation}
        \left(\tau\suptiny{0}{0}{(\mathrm{f})}, H\suptiny{0}{0}{(\mathrm{f})}\right) \xrightarrow{\phantom{b}\tilde{\Lambda} \phantom{b}}\left( \rho_{\mathrm{B}}\suptiny{0}{0}{(0)} , H\suptiny{0}{0}{(0)}\right)\,,
    \end{equation}
    reads $\tau\suptiny{0}{0}{(\mathrm{f})}= e^{-\beta H\suptiny{0}{-1}{(\mathrm{f})}}/\mathcal{Z}\suptiny{0}{0}{(\mathrm{f})} = \sum_m p\suptiny{0}{-1}{(\mathrm{f})}_m \ket{E\suptiny{0}{-1}{(\mathrm{f})}_m}\bra{E\suptiny{0}{-1}{(\mathrm{f})}_m}$. Prior to this transformation, the first non-ideal measurement is performed in the eigenbasis $\{ \ket{E\suptiny{0}{-1}{(\mathrm{f})}_m} \}_{m}$ of $H\suptiny{0}{-1}{(\mathrm{f})}$. Given that outcome $m$ is obtained, which occurs with probability $p\suptiny{0}{-1}{(\mathrm{f})}_m$, the post-measurement state reads
    \begin{equation}
        \rho_{m}\suptiny{0}{-1}{(\mathrm{f})}=\sum_{k}q\suptiny{0}{-1}{(\mathrm{f})}_{kk|m}\,\ket{E_{k}\suptiny{0}{-1}{(\mathrm{f})}}\!\!\bra{E_{k}\suptiny{0}{-1}{(\mathrm{f})}}.
    \end{equation}
}
\item{The second step of the backward process consists of driving the system according to the time-reversed           protocol parameterized by $\{ \tilde{\lambda}_t = \lambda_{t_\mathrm{f} - t}; 0 \leq t \leq t_\mathrm{f} \}$,     where $\{\lambda_{t}; 0 \leq t \leq t_\mathrm{f} \}$ is a parameterizsation of the forward process. In the        time-reversed protocol, the system thus evolves according to
    \begin{align}
        \rho_{\mathrm{B}}\suptiny{0}{0}{(0)} = U_{{\tilde{\Lambda}}}\, \rho_{\mathrm{B}}\suptiny{0}{0}{(\mathrm{f})}  \, U_{{\tilde{\Lambda}}}^{\dagger},
        \qquad U_{{\tilde{\Lambda}}}(t_\mathrm{f},0) =  \mathcal{T}_{+}\exp\left(- i \int_{0}^{t_\mathrm{f}}\! H(\tilde{\lambda}_t)\, \text{d}t\right),
    \end{align}
    where $\rho_{\mathrm{B}}\suptiny{0}{0}{(\mathrm{f})} = \sum_n p_n\suptiny{0}{0}{(\mathrm{f})} \rho_n \suptiny{0}{0}{(\mathrm{f})}$ is the average post-measurement state after the first measurement in the backward process.
}
\item{The last step consists of the second non-ideal measurement in the eigenbasis
    $\{\ket{E_{n}\suptiny{0}{-1}{(0)}}\}_{n}$ of $H\suptiny{0}{-1}{(0)}$. We are then interested in determining the joint probability $P_{\mathrm{B}}(m,n)$ to obtain outcome $m$ in the first and outcome $n$ in the second non-ideal measurement, before and after the backward process, respectively. Given outcome $m$ in the first measurement, the probability for the system to be in an energy eigenstate $\ket{E_{k}\suptiny{0}{-1}{(\mathrm{f})}}$ is $q\suptiny{0}{-1}{(\mathrm{f})}_{kk|m}$, and the probability for the backward process to further map the system from the state $\ket{E_{k}\suptiny{0}{-1}{(\mathrm{f})}}$ to $\ket{E_{n}\suptiny{0}{0}{(0)}}$ is $\tilde p_{\raisebox{-1pt}{\scriptsize{$k\!\rightarrow\!n$}}} =|\bra{E_{n}\suptiny{0}{0}{(0)}}U_{{\tilde{\Lambda}}}\ket{E_{k}\suptiny{0}{0}{(\mathrm{f})}}|^2$. Consequently, we obtain the joint probability
    \begin{align}\label{eq:pmn_back}
        P_{\mathrm{B}}(n,m) = \sum_{k}
        \tilde p_{\raisebox{-1pt}{\scriptsize{$k\!\rightarrow\!n$}}}\,
        q\suptiny{0}{-1}{(\mathrm{f})}_{kk|m}\,p_{m}\suptiny{0}{0}{(\mathrm{f})}.
    \end{align}
    The micro-reversibility principle for non-autonomous systems implies $U_{{\tilde{\Lambda}}} = \Theta\, U_{{\Lambda}}^{\dagger} \Theta^{\dagger}$~\cite{CampisiHaenggiTalkner2011}, where $\Theta$ is the anti-unitary time-reversal operator satisfying $\Theta i = - i \Theta$ and $\Theta \Theta^\dagger = \Theta^\dagger \Theta = \mathds{1}$. Micro-reversibility, together with a time-reversal symmetric Hamiltonian, $\Theta H\suptiny{0}{-1}{(k)} \Theta^\dagger = H\suptiny{0}{-1}{(k)}$ for $k= 0, \mathrm{f}$, leads to the relationship  $|\bra{E_{n}\suptiny{0}{0}{(0)}}U_{{\tilde{\Lambda}}}\ket{E_{k}\suptiny{0}{0}{(\mathrm{f})}}|^2 = |\bra{E_{k}\suptiny{0}{0}{(\mathrm{f})}}U_{{\Lambda}}\ket{E_{n}\suptiny{0}{0}{(0)}}|^2$, that is $\tilde p_{\raisebox{-1pt}{\scriptsize{$k\!\rightarrow\!n$}}} =
    p_{\raisebox{-1pt}{\scriptsize{$n\!\rightarrow\!k$}}}$.
}
\end{enumerate}

For non-ideal projective measurements, Crook's relation can be satisfied if we make some changes to the probability distributions, as we shall explain now. To begin, let us consider the work probabilities of the forward and backward process, respectively, i.e.,
\begin{align}
    P_{\mathrm{F}}(W) = \sum_{m,n} P_{\mathrm{F}}(n,m) \delta\ch{(E_m\suptiny{0}{-1}{(\mathrm{f})} - E\suptiny{0}{-1}{(\mathrm{0})}_n) - W},\\\label{workprobs}
    P_{\mathrm{B}}(-W) = \sum_{m,n} P_{\mathrm{B}}(n,m) \delta\ch{(E\suptiny{0}{-1}{(\mathrm{0})}_n - E_m\suptiny{0}{-1}{(\mathrm{f})})+W},
\end{align}
where $P_{\mathrm{F}}(n,m)$ is given in Eq.~\eqref{eq: joint prob}. We then assume that the measurement apparatus in the forward and backward process operates in the same way. More specifically, this means that the pointer is prepared in the same initial state and the same unitary $U_{\mathrm{meas}}$ is used to couple system and pointer. In this case, we have $q\suptiny{0}{-1}{(0)}_{mm|n} = q\suptiny{0}{-1}{(\mathrm{f})}_{mm|n} \equiv q_{mm|n}$.
As in Ref.~\cite{SagawaUeda2012}, we write the ratio between
the joint probabilities of a given transition between $E_{n}\suptiny{0}{0}{(0)}$ to $E_{m}\suptiny{0}{0}{(\mathrm{f})}$ in the forward and backward process as
\begin{equation}\label{local:crooks}
    e^{-\sigma(n,m)} \,:=\, \frac{P_{\mathrm{B}}(n,m)}{P_{\mathrm{F}}(n,m)}.
\end{equation}
The average of the quantity $\sigma(n,m)$ defined by this ratio can be seen to be precisely the relative entropy between the probability of the forwards and backwards processes
\begin{align}
    \mean{\sigma} &= \sum_{m,n} P_{\mathrm{F}}(m,n) \sigma(m,n)\,=\,  \sum_{m,n}P_{\mathrm{F}}(m,n) \log\left(  \frac{P_{\mathrm{F}}(m,n)}{P_{\mathrm{B}}(m,n)}\right)\, =\, D(P_{\mathrm{F}}||P_{\mathrm{B}}),
\end{align}
where the relative entropy of two random variables $Q$ and $P$ is defined $D(P||Q) := \sum_{x}P(x) \log\left(  \frac{P(x)}{Q(x)}\right)$.

In order to express the relation between performing or extracting the same amount of work
in the forward and backward process, one can substitute \eqref{local:crooks} into \eqref{workprobs}
\begin{equation} P_{\mathrm{B}}(-W) = \sum_{m,n} e^{-\sigma(n,m)}P_{\mathrm{F}}(n,m) \, \delta\ch{(E\suptiny{0}{-1}{(\mathrm{0})}_n - E_m\suptiny{0}{-1}{(\mathrm{f})})+W}.  \end{equation}
It is straightforward to see that
\begin{align} \exp{\left(-\sigma(n,m)\right)} &= \frac{p_m\suptiny{0}{-1}{(\mathrm{f})}}{p_n\suptiny{0}{-1}{(0)}}
\frac{P_{\mathrm{B}}(n|m)}{P_{\mathrm{F}}(m|n)}
    \,=\, \exp{\left(-\beta(E_m\suptiny{0}{-1}{(\mathrm{f})} - E\suptiny{0}{-1}{(\mathrm{0})}_n - \Delta F)\right)}\, \frac{P_{\mathrm{B}}(n|m)}{P_{\mathrm{F}}(m|n)}
\end{align}
where $\Delta F = \beta^{-1}\log(\mathcal{Z}\suptiny{0}{0}{(0)}/\mathcal{Z}\suptiny{0}{0}{(\mathrm{f})})$ is the equilibrium free energy difference, and the conditional probability to find the state with energy $n$ ($m$) after the backward (forward) process if initially it was $m$ ($n$) is ${P_{\mathrm{B}}(n|m) = \sum_{k}p_{\raisebox{-1pt}{\scriptsize{$k\!\rightarrow\!n$}}}q_{kk|m}}$ (${P_{\mathrm{F}}(m|n)}=\sum_{l}p_{\raisebox{-1pt}{\scriptsize{$l\!\rightarrow\!m$}}}q_{ll|n}$).
Therefore, a Crook's-like relation for non-ideal projective measurements can be written as
\begin{align}\label{eq:crooks.sec:ttpm}
    P_{\mathrm{B}}(-W) = e^{-\beta(W - \Delta F)} {\tilde{P}_{\mathrm{F}}}(W),
\end{align}
with
\begin{align}
{\tilde{P}_{\mathrm{F}}}(W) = \sum_{m,n} \exp{\left(-\gamma(m,n)\right)}  P_{\mathrm{F}}(m,n) \delta\ch{(E_m\suptiny{0}{-1}{(\mathrm{f})} - E\suptiny{0}{-1}{(0)}_n) - W},
\end{align}
where $ \gamma(m,n) =\log{\left({P_{\mathrm{F}}(m|n)}{/P_{\mathrm{B}}(n|m)}\right)}$. If ideal projective measurements are assumed in the estimation process (in other words, if an infinite amount of resources is available) no disturbance is created by the measurements on the system in the forward and backward processes, which implies that $q_{mm'|n} =\delta_{m,m'} \delta_{m,n}$, and $\gamma(m,n)=0$.
Therefore Eq.~\eqref{eq:crooks.sec:ttpm} results in the well known Crook's relation
\begin{equation}\label{eq:crooks}
    P_{\mathrm{B}}(-W) = e^{-\beta(W - \Delta F)} {P}_{\mathrm{F}}(W).
\end{equation}


\subsection{Irreversibility and Dissipation} \label{app:irre}

Consider a thermal state $\tau\suptiny{0}{0}{(0)}(\beta)$ of a system with Hamiltonian $H\suptiny{0}{0}{(0)}$ at time $t_0$ that is driven out of equilibrium by means of the unitary
\begin{equation} \label{eq:unitary}
    U_{\Lambda}(t_\mathrm{f},0) =  \mathcal{T}_{+}\exp\left(- i \int_{0}^{t_\mathrm{f}}\! H(\lambda_t)\, \text{d}t\right).
\end{equation}
At time $t_\mathrm{f}$, the system is in the  (out-of-equilibrium) state $\rho\suptiny{0}{0}{(\mathrm{f})} = \U{\tau\suptiny{0}{0}{(0)}}$, and the Hamiltonian of the system is $H\suptiny{0}{0}{(\mathrm{f})}$. Then the system is coupled again to a thermal bath at temperature $T=1/\beta$, and left to thermalise to the equilibrium state $\tau\suptiny{0}{0}{(\mathrm{f})}(\beta)$.

The work dissipated in the driving process, defined as $\mean{W}_\Lambda - \Delta F$, is the extra amount of energy that is transferred to the bath in the final thermalisation step, leading to entropy production. This energy cannot be reversibly recovered by reversing the protocol. A closed-form expression for this amount of work presented in Ref.~\cite{ParrondoVanDenBroeckKawai2009} is
\begin{equation}\label{eq:diss.ideal}
    \mean{W}_\Lambda - \Delta F = k\subtiny{0}{0}{\mathrm{B}} T D(\rho_{\mathrm{F}}(t)|| \Theta^\dagger \rho_{\mathrm{B}}(t_\mathrm{f} - t) \Theta),
\end{equation}
where $D(\rho_{\mathrm{F}}(t)||\Theta^\dagger \rho_{\mathrm{B}}(t_\mathrm{f} - t) \Theta)$ is the relative entropy between the state $\rho_{\mathrm{F}}(t)= U_{\Lambda}(t,0) \tau\suptiny{0}{0}{(0)} U_{\Lambda}^\dagger(t,0)$ out of equilibrium at time $t$ and $\rho_{\mathrm{B}}(t_\mathrm{f} - t)= U_{\tilde \Lambda}(t_\mathrm{f} - t,0) \tau\suptiny{0}{0}{(\mathrm{f})} U_{\tilde \Lambda}^\dagger(t_\mathrm{f} - t,0)$ is the state of the system in the backward process. Here $U_{\tilde \Lambda}(t_\mathrm{f}-t,0)$ is the unitary evolution generated by the time-reversed protocol $\tilde \Lambda$. If the process is reversible, $\rho_{\mathrm{F}}(t) = \Theta^\dagger \rho_{\mathrm{B}}(t_\mathrm{f} - t) \Theta$ for any $0\leq t \leq t_\mathrm{f}$, and there is no work being dissipated in the process. In particular, the equilibrium state is reached at time $t_\mathrm{f}$, that is, $\rho\suptiny{0}{0}{(\mathrm{f})} = \tau\suptiny{0}{0}{(\mathrm{f})}$.

Now, consider non-ideal projective measurements used for work estimation, more specifically, minimal energy UMC measurements described in Appendix~\ref{appendix:Ideal and Non-Ideal Measurements}. Here, we see that more energy is dissipated in the process because some entropy is produced during the measurement process.
\begin{prop}
For work estimation based on non-ideal minimal energy UMC measurements measurements, the work dissipated in the driving process is
\begin{align}\label{eq:proof61}
    \mean{W}_{\mathrm{non-id}} - \Delta F  =  k\subtiny{0}{0}{\mathrm{B}} T \left[D(\tilde{\rho}\suptiny{0}{0}{(\mathrm{f})}|| \tau\suptiny{0}{0}{(\mathrm{f})})
 +\Delta S_0\right],
\end{align}
where $\tilde{\rho}\suptiny{0}{0}{(\mathrm{f})} = \U{\ch{\sum_n p_n\suptiny{0}{0}{(0)}  \sum_l  q_{ll|n}\suptiny{0}{0}{(0)} \ketbra{E_l\suptiny{0}{0}{(0)}}{E_l\suptiny{0}{0}{(0)}}}}$ and  $\Delta S_0 = S(\tilde{\rho}\suptiny{0}{0}{(0)})- S(\tau\suptiny{0}{0}{(0)})$ is the entropy change in the system due to the first non-ideal measurement.
\end{prop}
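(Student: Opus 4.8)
The plan is to reduce both sides to a common set of quantities --- average energies, partition functions, and von~Neumann entropies --- and then match them. The only non-trivial physical input is that the work \emph{estimate} $\mean{W}_{\mathrm{non-id}}$ equals the average energy change of the \emph{unconditional} state driven by $U_{\Lambda}$; everything else is thermodynamic bookkeeping. So the first step I would carry out is to establish
\begin{align}
    \mean{W}_{\mathrm{non-id}} = \tr\bigl(\tilde{\rho}\suptiny{0}{0}{(\mathrm{f})} H\suptiny{0}{0}{(\mathrm{f})}\bigr) - \tr\bigl(\tau\suptiny{0}{0}{(0)} H\suptiny{0}{0}{(0)}\bigr).
\end{align}
Starting from Eq.~\eqref{nworknice}, I sum the transition probabilities $p_{\raisebox{-1pt}{\scriptsize{$l\!\rightarrow\!m$}}}$ over the complete final eigenbasis. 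The $E_m\suptiny{0}{0}{(\mathrm{f})}$-term collapses, by unitarity of $U_{\Lambda}$, to $\tr(\tilde{\rho}\suptiny{0}{0}{(0)} U_\Lambda^\dagger H\suptiny{0}{0}{(\mathrm{f})} U_\Lambda) = \tr(\tilde{\rho}\suptiny{0}{0}{(\mathrm{f})} H\suptiny{0}{0}{(\mathrm{f})})$, while the $E_n\suptiny{0}{0}{(0)}$-term collapses, using $\sum_m p_{\raisebox{-1pt}{\scriptsize{$l\!\rightarrow\!m$}}} = 1$ together with normalisation $\sum_l q\suptiny{0}{0}{(0)}_{ll|n} = 1$, to $\sum_n p_n\suptiny{0}{0}{(0)} E_n\suptiny{0}{0}{(0)} = \tr(\tau\suptiny{0}{0}{(0)} H\suptiny{0}{0}{(0)})$.

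Next I would expand the relative entropy. Writing $\log\tau\suptiny{0}{0}{(\mathrm{f})} = -\beta H\suptiny{0}{0}{(\mathrm{f})} - \log\mathcal{Z}\suptiny{0}{0}{(\mathrm{f})}$ and $\tr(\tilde{\rho}\suptiny{0}{0}{(\mathrm{f})}\log\tilde{\rho}\suptiny{0}{0}{(\mathrm{f})}) = -S(\tilde{\rho}\suptiny{0}{0}{(\mathrm{f})})$ gives
\begin{align}
    D(\tilde{\rho}\suptiny{0}{0}{(\mathrm{f})}||\tau\suptiny{0}{0}{(\mathrm{f})}) = -S(\tilde{\rho}\suptiny{0}{0}{(\mathrm{f})}) + \beta\tr\bigl(\tilde{\rho}\suptiny{0}{0}{(\mathrm{f})} H\suptiny{0}{0}{(\mathrm{f})}\bigr) + \log\mathcal{Z}\suptiny{0}{0}{(\mathrm{f})}.
\end{align}
Since $\tilde{\rho}\suptiny{0}{0}{(\mathrm{f})} = \U{\tilde{\rho}\suptiny{0}{0}{(0)}}$ and the von~Neumann entropy is invariant under unitaries, $S(\tilde{\rho}\suptiny{0}{0}{(\mathrm{f})}) = S(\tilde{\rho}\suptiny{0}{0}{(0)})$. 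I would then eliminate this entropy using $\Delta S_0 = S(\tilde{\rho}\suptiny{0}{0}{(0)}) - S(\tau\suptiny{0}{0}{(0)})$ together with the thermal identity $S(\tau\suptiny{0}{0}{(0)}) = \beta\tr(\tau\suptiny{0}{0}{(0)} H\suptiny{0}{0}{(0)}) + \log\mathcal{Z}\suptiny{0}{0}{(0)}$, substituting $S(\tilde{\rho}\suptiny{0}{0}{(0)}) = \Delta S_0 + \beta\tr(\tau\suptiny{0}{0}{(0)} H\suptiny{0}{0}{(0)}) + \log\mathcal{Z}\suptiny{0}{0}{(0)}$.

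Finally, collecting the two partition functions into $\beta\Delta F = \log\mathcal{Z}\suptiny{0}{0}{(0)} - \log\mathcal{Z}\suptiny{0}{0}{(\mathrm{f})}$ and inserting the first-step identity, all energy terms reorganise into $\beta\mean{W}_{\mathrm{non-id}}$ and everything combines to $D(\tilde{\rho}\suptiny{0}{0}{(\mathrm{f})}||\tau\suptiny{0}{0}{(\mathrm{f})}) = -\Delta S_0 + \beta(\mean{W}_{\mathrm{non-id}} - \Delta F)$, which is the claim after multiplying through by $k\subtiny{0}{0}{\mathrm{B}} T$ and rearranging. I do not anticipate a genuine analytical obstacle here, as the argument is an exact algebraic identity. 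The one point that requires care is the first step: one must distinguish the work \emph{estimate} $\mean{W}_{\mathrm{non-id}}$, which depends on the back-action only through the first-measurement coefficients $q\suptiny{0}{0}{(0)}_{ll|n}$ and assigns the nominal energies $E_m\suptiny{0}{0}{(\mathrm{f})}$ to the second outcome, from the true energy change $\Delta E_{\mathrm{non-id}}$ in Eq.~\eqref{eq:av energy change nonideal measurements}, which also carries the second-measurement factors $q\suptiny{0}{0}{(\mathrm{f})}_{kk|m}$. It is precisely the absence of these second-measurement factors, combined with normalisation $\sum_l q\suptiny{0}{0}{(0)}_{ll|n} = 1$ (valid for any unbiased measurement, not only minimally invasive ones), that allows the thermal initial-energy term to be recovered and the relative-entropy identity to close.
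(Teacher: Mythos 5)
Your proof is correct and takes essentially the same route as the paper's: both rest on the Gibbs-state identities, the normalisations $\sum_l q^{(0)}_{ll|n}=1$ and $\sum_m p_{\,l\rightarrow m}=1$, and the unitary invariance of the von~Neumann entropy, the only difference being that you first isolate the operator identity $\mean{W}_{\mathrm{non-id}}=\tr\bigl(\tilde{\rho}^{(\mathrm{f})}H^{(\mathrm{f})}\bigr)-\tr\bigl(\tau^{(0)}H^{(0)}\bigr)$ and then expand the relative entropy, whereas the paper performs the same bookkeeping with log-probabilities inside the index sums before recognising the traces. Your closing observation is also accurate: only the normalisation of the first-measurement coefficients (not double stochasticity) enters, which is exactly why this dissipation identity holds for any unbiased UMC scheme even though Jarzynski's equality requires minimal invasiveness.
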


\begin{proof}
For the TPM scheme with non-ideal measurements, the work estimate is as in Eq.~\eqref{nworknice}, i.e.,
\begin{align}
    \mean{W}_{\mathrm{non-id}} &= \int dW P(W) W \,=\, \sum_{m,n} p_n\suptiny{0}{0}{(0)} \sum_l  q_{ll|n}\suptiny{0}{0}{(0)} p_{\raisebox{-1pt}{\scriptsize{$l\!\rightarrow\!m$}}}\ch{E_m\suptiny{0}{0}{(\mathrm{f})} - E_n\suptiny{0}{0}{(0)}},
\end{align}
where $p_{\raisebox{-1pt}{\scriptsize{$l\!\rightarrow\!m$}}} = |\bra{E_{m}\suptiny{0}{0}{(\mathrm{f})}}U_{{\Lambda}}\ket{E_{l}\suptiny{0}{0}{(0)}}|^2$.
We then note that the thermal states with respect to the initial and final Hamiltonians are
\begin{subequations}
    \begin{align}
        \tau\suptiny{0}{0}{(0)} &= \exp(-\beta H\suptiny{0}{0}{(0)})/\mathcal{Z}\suptiny{0}{0}{(0)} = \sum_n p_n\suptiny{0}{0}{(0)}\ketbra{E_n\suptiny{0}{0}{(0)}}{E_n\suptiny{0}{0}{(0)}},\\
        \tau\suptiny{0}{0}{(\mathrm{f})}  &= \exp(-\beta H\suptiny{0}{0}{(\mathrm{f})})/\mathcal{Z}\suptiny{0}{0}{(\mathrm{f})}=  \sum_m p_m\suptiny{0}{0}{(\mathrm{f})}\ketbra{E_m\suptiny{0}{0}{(\mathrm{f})}}{E_m\suptiny{0}{0}{(\mathrm{f})}},
    \end{align}
\end{subequations}
where the partition functions are $\mathcal{Z}\suptiny{0}{0}{(0)} = \tr\ch{\exp(-\beta H\suptiny{0}{0}{(0)})}$ and $\mathcal{Z}\suptiny{0}{0}{(\mathrm{f})} = \tr\ch{\exp(-\beta H\suptiny{0}{0}{(\mathrm{f})})}$, and the probabilities for the individual energy eigenstates are $p_n\suptiny{0}{0}{(0)}=\exp(-\beta E_n\suptiny{0}{0}{(0)})/\mathcal{Z}\suptiny{0}{0}{(0)}$ and $p_m\suptiny{0}{0}{(\mathrm{f})}=\exp(-\beta E_m\suptiny{0}{0}{(\mathrm{f})})/\mathcal{Z}\suptiny{0}{0}{(\mathrm{f})}$. The logarithms of the probabilities above are $\log p_n\suptiny{0}{0}{(0)}=-\beta E_n\suptiny{0}{0}{(0)} - \log \mathcal{Z}\suptiny{0}{0}{(0)}$ and $\log p_m\suptiny{0}{0}{(\mathrm{f})}=-\beta E_m\suptiny{0}{0}{(\mathrm{f})} - \log \mathcal{Z}\suptiny{0}{0}{(\mathrm{f})}$. With this, we can rewrite the factor $\ch{E_m\suptiny{0}{0}{(\mathrm{f})} - E_n\suptiny{0}{0}{(0)}}$ in ${\mean{W}_{\mathrm{non-id}}}$ and obtain
\begin{align}
    \beta \mean{W}_{\mathrm{non-id}}    &=\, \sum_{m,n} p_n\suptiny{0}{0}{(0)} \sum_l p_{\raisebox{-1pt}{\scriptsize{$l\!\rightarrow\!m$}}}\,  q_{ll|n}\suptiny{0}{0}{(0)} \ch{ (-\log p_m\suptiny{0}{0}{(\mathrm{f})} - \log \mathcal{Z}\suptiny{0}{0}{(\mathrm{f})}) - (-\log p_n\suptiny{0}{0}{(0)}  - \log \mathcal{Z}\suptiny{0}{0}{(0)}) }.
\end{align}
We then note that $\sum_{m}\sum_{l}p_{\raisebox{-1pt}{\scriptsize{$l\!\rightarrow\!m$}}}\,  q_{ll|n}\suptiny{0}{0}{(0)}=1$ and identify the free energy $\Delta F = \frac{1}{\beta} ( \log \mathcal{Z}\suptiny{0}{0}{(0)}- \log \mathcal{Z}\suptiny{0}{0}{(\mathrm{f})})$, as well as the initial thermal state entropy $S(\tau\suptiny{0}{0}{(0)}) = - \sum_n p_{n}\suptiny{0}{0}{(0)} \log p_{n}\suptiny{0}{0}{(0)}$ to write
\begin{align}
    \beta\left ({\mean{W}_{\mathrm{non-id}}} - \Delta F \right) &= - S(\tau\suptiny{0}{0}{(0)}) - \sum_{m,n} p_n\suptiny{0}{0}{(0)} \sum_l  p_{\raisebox{-1pt}{\scriptsize{$l\!\rightarrow\!m$}}}\, q_{ll|n}\suptiny{0}{0}{(0)} \log p_m\suptiny{0}{0}{(\mathrm{f})}= - S(\tau\suptiny{0}{0}{(0)}) -
    \tr\ch{\U{\ch{\sum_n  p_{n}\suptiny{0}{0}{(0)}
    \sum_l  q_{ll|n}\suptiny{0}{0}{(0)} \ketbra{E_{l}\suptiny{0}{0}{0}}{E_{l}\suptiny{0}{0}{0}}} }\log \tau\suptiny{0}{0}{(\mathrm{f})}}\nonumber\\[1mm]
    &=- S(\tau\suptiny{0}{0}{(0)}) - \tr\left(\U{\tilde\rho\suptiny{0}{0}{(0)}}\, \log \tau\suptiny{0}{0}{(\mathrm{f})}\right),
\end{align}
where we have recognised the conditional post-measurement state $\rho_{n}\suptiny{0}{0}{(0)}=\sum_l  q_{ll|n}\suptiny{0}{0}{(0)} \ketbra{E_{l}\suptiny{0}{0}{0}}{E_{l}\suptiny{0}{0}{0}}$, and we have denoted the unconditional post-measurement state as $\tilde{\rho} \suptiny{0}{0}{(0)}= \sum_n p_n\suptiny{0}{0}{(0)} \rho_{n}\suptiny{0}{0}{(0)}$. We can then add and subtract the entropy of $\tilde\rho\suptiny{0}{0}{(\mathrm{f})} =  \U{\tilde\rho\suptiny{0}{0}{(0)}}$ and use the invariance of the von~Neumann entropy under unitaries, in particular, $S(\tilde\rho\suptiny{0}{0}{(\mathrm{f})})=S(\tilde\rho\suptiny{0}{0}{(0)})$ to arrive at
\begin{align}
 \beta\left ({\mean{W}_{\mathrm{non-id}}} - \Delta F \right) = D(\tilde\rho\suptiny{0}{0}{(\mathrm{f})}|| \tau\suptiny{0}{0}{(\mathrm{f})})+S(\tilde\rho\suptiny{0}{0}{(0)})- S(\tau\suptiny{0}{0}{(0)}) \,=\,
 D(\tilde\rho\suptiny{0}{0}{(\mathrm{f})}|| \tau\suptiny{0}{0}{(\mathrm{f})})+\Delta S_{0}.
\end{align}
\end{proof}

This result can be seen as a version of Eq.~\eqref{eq:diss.ideal} applicable to non-ideal projective measurement when taking $t= t_\mathrm{f}$, which expresses the amount of irreversible work in the process $\Lambda$ calculated by means of the non-ideal TPM scheme. $\Delta S_0$ represents the entropy change in the system due to the initial non-ideal measurement process, which may be either positive or negative in general. However, for unital measurements, the entropy of the pointer does not change~\cite{ManzanoHorowitzParrondo2015} and, as a consequence, the second law implies $\Delta S_0 \geq 0$, which can now be interpreted as the entropy production in the measurement process.

\begin{prop}
Consider a system of dimension $d\Sys$, and an interaction between system and pointer such that their correlations are maximal, $C(\tilde{\rho}\SP)=(C_{max})$, for instance, any UMC measurement. Then the entropy produced in the measurement process satisfies
\begin{equation}\label{eq.bla}
\Delta S_0 \leq (1-C_{max}) \log(d\Sys-1) + H_2(C_{max}),
\end{equation}
where $H_2(x) = -x\log x - (1-x)\log(1-x)$ is the binary entropy of the random variable $x$ with $0\leq x \leq 1$.
\end{prop}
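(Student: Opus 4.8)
The plan is to rewrite the produced entropy $\Delta S_0 = S(\tilde{\rho}\suptiny{0}{0}{(0)}) - S(\tau\suptiny{0}{0}{(0)})$ as the entropy gain of a convex mixture of conditional states and then bound it by the \emph{average} conditional entropy, which for UMC measurements has a simple closed form. The unconditional post-measurement state is $\tilde{\rho}\suptiny{0}{0}{(0)} = \sum_n p_n\suptiny{0}{0}{(0)}\rho_n\suptiny{0}{0}{(0)}$, while $\tau\suptiny{0}{0}{(0)}$ is diagonal in the energy eigenbasis with the very same weights $p_n\suptiny{0}{0}{(0)}$, so that $S(\tau\suptiny{0}{0}{(0)}) = H(\{p_n\suptiny{0}{0}{(0)}\})$ is just the Shannon entropy of the outcome distribution. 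The first step is therefore to invoke the standard ``mixing'' upper bound on the entropy of a convex combination of states (see, e.g.,~\cite{Watrous2018}), $S(\sum_n p_n\rho_n) \le H(\{p_n\}) + \sum_n p_n S(\rho_n)$. Applied here, the Shannon terms cancel and one obtains
\begin{align}
\Delta S_0 \,\le\, \sum_n p_n\suptiny{0}{0}{(0)}\, S(\rho_n\suptiny{0}{0}{(0)}).
\end{align}

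Next I would exploit the block structure of the conditional states for UMC measurements. From the constraints $q\suptiny{0}{0}{(0)}_{nn|n}=C_{\mathrm{max}}$ and $q\suptiny{0}{0}{(0)}_{nl|n}=q\suptiny{0}{0}{(0)}_{ln|n}=0$ for $l\neq n$ established in Appendix~\ref{appsec:unbiased max corr measurements}, the vector $\ket{E_n\suptiny{0}{0}{(0)}}$ is an exact eigenvector of $\rho_n\suptiny{0}{0}{(0)}$ with eigenvalue $C_{\mathrm{max}}$, and the remaining weight $1-C_{\mathrm{max}}$ is carried by a normalised density operator $\rho_n\suptiny{0}{0}{\mathrm{error}}$ supported on the $(d\Sys-1)$-dimensional orthogonal complement. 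The spectrum of $\rho_n\suptiny{0}{0}{(0)}$ is thus $\{C_{\mathrm{max}}\}\cup\{(1-C_{\mathrm{max}})\mu_k\}_k$, with $\{\mu_k\}$ the spectrum of $\rho_n\suptiny{0}{0}{\mathrm{error}}$, and a direct evaluation of $-\sum\lambda\log\lambda$ factorises as
\begin{align}
S(\rho_n\suptiny{0}{0}{(0)}) \,=\, H_2(C_{\mathrm{max}}) \,+\, (1-C_{\mathrm{max}})\,S(\rho_n\suptiny{0}{0}{\mathrm{error}}).
\end{align}
Bounding $S(\rho_n\suptiny{0}{0}{\mathrm{error}}) \le \log(d\Sys-1)$ by the dimension of its support, and using $\sum_n p_n\suptiny{0}{0}{(0)}=1$ together with the fact that the previous bound is uniform in $n$, then delivers the claimed inequality~\eqref{eq.bla}.

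The one genuinely nontrivial ingredient is the mixing bound in the first step; everything else is a spectral computation. I would either cite it as a standard property of the von~Neumann entropy or, for a self-contained argument, note that it is equivalent to the non-negativity of the defect $H(\{p_n\suptiny{0}{0}{(0)}\}) - \sum_n p_n\suptiny{0}{0}{(0)} D(\rho_n\suptiny{0}{0}{(0)}\|\hspace*{1pt}\tilde{\rho}\suptiny{0}{0}{(0)}) \ge 0$, i.e.\ to $S(X|B)\ge 0$ for the classical--quantum state $\sum_n p_n\suptiny{0}{0}{(0)}\ketbra{n}{n}\otimes\rho_n\suptiny{0}{0}{(0)}$, which holds because that state is separable across the $X\!:\!B$ cut. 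A minor point to address is that in degenerate energy subspaces $\rho_n\suptiny{0}{0}{(0)}$ may retain off-diagonal entries within the error block; this does not affect the argument, since only $q\suptiny{0}{0}{(0)}_{nn|n}=C_{\mathrm{max}}$ and the vanishing of the $n$-th row and column off-diagonals are needed to guarantee that $\ket{E_n\suptiny{0}{0}{(0)}}$ splits off as an eigenvector with eigenvalue $C_{\mathrm{max}}$, leaving the rest confined to a $(d\Sys-1)$-dimensional space.
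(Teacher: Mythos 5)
Your proof is correct, but it takes a genuinely different route from the paper's. The paper proves this proposition by applying the Fannes--Audenaert continuity bound to the pair $(\tilde{\rho}\suptiny{0}{0}{(0)},\tau\suptiny{0}{0}{(0)})$ and then bounding their trace distance via convexity, $D(\tilde{\rho}\suptiny{0}{0}{(0)},\tau\suptiny{0}{0}{(0)})\leq \sum_n p_n\suptiny{0}{0}{(0)} D(\rho_n\suptiny{0}{0}{(0)},\ketbra{E_n\suptiny{0}{0}{(0)}}{E_n\suptiny{0}{0}{(0)}}) = 1-C_{\mathrm{max}}$, using Eq.~\eqref{eq:cmax.dist}; you instead use the mixing bound $S(\sum_n p_n\rho_n)\leq H(\{p_n\})+\sum_n p_n S(\rho_n)$ (so that $S(\tau\suptiny{0}{0}{(0)})=H(\{p_n\suptiny{0}{0}{(0)}\})$ cancels) followed by an exact spectral evaluation $S(\rho_n\suptiny{0}{0}{(0)})=H_2(C_{\mathrm{max}})+(1-C_{\mathrm{max}})S(\rho_n\suptiny{0}{0}{\mathrm{error}})$ and the dimension bound $S(\rho_n\suptiny{0}{0}{\mathrm{error}})\leq\log(d\Sys-1)$. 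Each approach has merits. The paper's argument is shorter given the cited continuity estimate as a black box and would work for any measurement for which the trace distance to the ideal post-measurement ensemble can be controlled, not only those with the exact UMC spectral structure; however, it silently requires monotonicity of $T\mapsto T\log(d\Sys-1)+H_2(T)$ when replacing the actual trace distance by its upper bound $1-C_{\mathrm{max}}$, which holds only for $T\leq 1-1/d\Sys$ (satisfied here since $C_{\mathrm{max}}\geq 1/d\Sys$ for UMC measurements, but unremarked in the paper). Your argument is more elementary --- it needs no entropy-continuity theorem, only the mixing bound, which you correctly justify via $S(X|B)\geq 0$ for classical--quantum states --- and it yields the sharper intermediate inequality $\Delta S_0 \leq H_2(C_{\mathrm{max}})+(1-C_{\mathrm{max}})\sum_n p_n\suptiny{0}{0}{(0)} S(\rho_n\suptiny{0}{0}{\mathrm{error}})$, which makes transparent that the stated bound is saturated only when the error states are maximally mixed. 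Your handling of degenerate subspaces (off-diagonals confined to the error block do not disturb the splitting-off of $\ket{E_n\suptiny{0}{0}{(0)}}$ as an eigenvector with eigenvalue $C_{\mathrm{max}}$) is also sound.
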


\begin{proof}
This inequality comes directly from the Fannes-Audenaert inequality in Refs.~\cite{Fannes1973, Audenaert2007}
\begin{equation}
S(\tilde\rho\suptiny{0}{0}{(0)})- S(\tau_0) \leq D(\tilde\rho\suptiny{0}{0}{0)}, \tau_0) \log(d\Sys-1) + H_2(D(\tilde\rho\suptiny{0}{0}{(0)}, \tau_0\suptiny{0}{0}{(0)})),
\end{equation}
where $D(\tilde\rho\suptiny{0}{0}{(0)}, \tau\suptiny{0}{0}{(0)}) = \frac{1}{2} \N{\tilde\rho\suptiny{0}{0}{(0)} - \tau\suptiny{0}{0}{(0)}}_1$ is the trace distance. To obtain Eq.~\eqref{eq.bla}, the trace distance must be  $D(\tilde\rho\suptiny{0}{0}{(0)}, \tau\suptiny{0}{0}{(0)}) \leq 1$. Since the trace distance is convex and monotonic:
\begin{align}
D(\tilde\rho\suptiny{0}{0}{(0)}, \tau\suptiny{0}{0}{(0)}) &= \frac{1}{2} \N{\tilde\rho\suptiny{0}{0}{(0)} - \tau\suptiny{0}{0}{(0)}}_1
     \,\leq\, \frac{1}{2}  \sum_n p_n\suptiny{0}{0}{(0)}  \N{\rho_n\suptiny{0}{0}{(0)}-\ketbra{E_n\suptiny{0}{0}{(0)}}{E_n\suptiny{0}{0}{(0)}}}_1
    \,=\, 1 - C_{max},
\end{align}
where the last equality comes from Eq.~\eqref{eq:cmax.dist}.
\end{proof}

\begin{prop}
Assuming non-ideal (minimal energy UMC) projective measurements for work estimation, the work dissipated in the driving process can be written as
\begin{align} \label{eq:fbproof}
    {\mean{W}_{\mathrm{non-id}}} - \Delta F  =  k\subtiny{0}{0}{\mathrm{B}} T \left[ \Delta S_0 + \Delta D_\mathrm{f} +D(\rho_{\mathrm{F}}(t)||\hspace*{1pt} \Theta^\dagger \rho_{\mathrm{B}}(t_\mathrm{f} - t) \Theta) \right],
\end{align}
where $\rho_{\mathrm{F}}(t) = U_\Lambda(t,0) \tilde \rho \suptiny{0}{0}{(\mathrm{f})} U^\dagger_\Lambda(t,0)$ is the system state at intermediate time $0 \leq t \leq t_\mathrm{f}$ in the forward process, $\rho_{\mathrm{B}}(t_\mathrm{f}-t) =  U_{\tilde \Lambda}(t_\mathrm{f}-t,0) \rho_{\mathrm{B}} \suptiny{0}{0}{(\mathrm{f})} U^\dagger_{\tilde \Lambda}(t_\mathrm{f}-t,0)$ is the (inverted) state of the system at the same instant of time in the backward process, and we introduced the correction term
\begin{align}
    \Delta D_{\mathrm{f}}
    &= D(\tilde{\rho}\suptiny{0}{0}{(\mathrm{f})}||\hspace*{1pt} \tau\suptiny{0}{0}{(\mathrm{f})}) - D(\tilde{\rho}\suptiny{0}{0}{(\mathrm{f})}||\hspace*{1pt} \rho_{\mathrm{B}}\suptiny{0}{0}{(\mathrm{f})}) = \tr[\tilde{\rho}\suptiny{0}{0}{(\mathrm{f})} \left(\log \rho_{\mathrm{B}}\suptiny{0}{0}{(\mathrm{f})} -  \log \tau\suptiny{0}{0}{(\mathrm{f})} \right)].
\end{align}

\begin{proof}
We start from Eq.~\eqref{eq:proof61}, adding and subtracting the quantity $D(\tilde{\rho}\suptiny{0}{0}{(\mathrm{f})}||\hspace*{1pt} \rho_{\mathrm{B}}\suptiny{0}{0}{(\mathrm{f})})$ to arrive at
\begin{align}\label{eq:p1}
    {\mean{W}_{\mathrm{non-id}}} - \Delta F  =  k\subtiny{0}{0}{\mathrm{B}} T \left[ \Delta S_0 + \Delta D_\mathrm{f} + D(\tilde{\rho}\suptiny{0}{0}{(\mathrm{f})}||\hspace*{1pt} \rho_{\mathrm{B}}\suptiny{0}{0}{(\mathrm{f})}) \right].
\end{align}
Then we use the properties of the quantum relative entropy and the unitary evolution~\eqref{eq:unitary} to write
\begin{align} \label{eq:p2}
    D(\tilde{\rho}\suptiny{0}{0}{(\mathrm{f})}||\hspace*{1pt} \rho_{\mathrm{B}}\suptiny{0}{0}{(\mathrm{f})}) &=
    D(U_\Lambda(t_\mathrm{f},t) U_\Lambda(t,0) \tilde{\rho}\suptiny{0}{0}{(0)} U^\dagger_\Lambda(t,0) U^\dagger_\Lambda(t_\mathrm{f},t) ||\hspace*{1pt} \rho_{\mathrm{B}}\suptiny{0}{0}{(\mathrm{f})}) \nonumber \\
    &= D(U_\Lambda(t,0) \tilde{\rho}\suptiny{0}{0}{(0)} U^\dagger_\Lambda(t,0) ||\hspace*{1pt} U^\dagger_\Lambda(t_\mathrm{f},t) \rho_{\mathrm{B}}\suptiny{0}{0}{(\mathrm{f})}U_\Lambda(t_\mathrm{f},t)) \nonumber \\
    &=D(U_\Lambda(t,0) \tilde{\rho}\suptiny{0}{0}{(0)} U^\dagger_\Lambda(t,0) ||\hspace*{1pt} \Theta^\dagger U_{\tilde{\Lambda}}(t_\mathrm{f}-t,0) \rho_{\mathrm{B}}\suptiny{0}{0}{(\mathrm{f})}U_{\tilde \Lambda}^\dagger(t_\mathrm{f}- t, 0)\Theta) = D(\rho_{\mathrm{F}}(t)||\hspace*{1pt} \Theta^\dagger \rho_{\mathrm{B}}(t_\mathrm{f} - t) \Theta),
\end{align}
where we have used the micro-reversibility principle for non-autonomous systems~\cite{CampisiHaenggiTalkner2011} in the last line , i.e., $U^\dagger_\Lambda(t_\mathrm{f},t) = \Theta^\dagger U_{\tilde{\Lambda}}(t_\mathrm{f}-t,0) \Theta$, and we have identified the expressions for $\rho_{\mathrm{F}}(t)$ and $\rho_{\mathrm{B}}(t_\mathrm{f}-t)$. Insering Eq.~\eqref{eq:p2} into Eq.~\eqref{eq:p1}, we directly obtain Eq.~\eqref{eq:fbproof}.
\end{proof}
\end{prop}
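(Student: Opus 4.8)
The plan is to obtain Eq.~\eqref{eq:fbproof} as a refinement of the dissipation identity proved in the preceding proposition, Eq.~\eqref{eq:proof61}, which already gives $\mean{W}_{\mathrm{non-id}}-\Delta F = k\subtiny{0}{0}{\mathrm{B}} T\,[\Delta S_0 + D(\tilde{\rho}\suptiny{0}{0}{(\mathrm{f})}||\hspace*{1pt}\tau\suptiny{0}{0}{(\mathrm{f})})]$. The remaining task is purely to recast the single relative entropy $D(\tilde{\rho}\suptiny{0}{0}{(\mathrm{f})}||\hspace*{1pt}\tau\suptiny{0}{0}{(\mathrm{f})})$, evaluated at the final time, as a snapshot relative entropy taken at an arbitrary intermediate time $t$, in direct analogy with the ideal relation Eq.~\eqref{eq:diss.ideal}.

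First I would insert the backward reference state $\rho_{\mathrm{B}}\suptiny{0}{0}{(\mathrm{f})}$ (the averaged post-first-measurement state of the backward process) by adding and subtracting $D(\tilde{\rho}\suptiny{0}{0}{(\mathrm{f})}||\hspace*{1pt}\rho_{\mathrm{B}}\suptiny{0}{0}{(\mathrm{f})})$. The combination $D(\tilde{\rho}\suptiny{0}{0}{(\mathrm{f})}||\hspace*{1pt}\tau\suptiny{0}{0}{(\mathrm{f})}) - D(\tilde{\rho}\suptiny{0}{0}{(\mathrm{f})}||\hspace*{1pt}\rho_{\mathrm{B}}\suptiny{0}{0}{(\mathrm{f})})$ is precisely the correction term $\Delta D_{\mathrm{f}}$ of the statement: the entropy contribution $\tr[\tilde{\rho}\suptiny{0}{0}{(\mathrm{f})}\log\tilde{\rho}\suptiny{0}{0}{(\mathrm{f})}]$ cancels and leaves $\tr[\tilde{\rho}\suptiny{0}{0}{(\mathrm{f})}(\log\rho_{\mathrm{B}}\suptiny{0}{0}{(\mathrm{f})}-\log\tau\suptiny{0}{0}{(\mathrm{f})})]$. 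This isolates the leftover piece $D(\tilde{\rho}\suptiny{0}{0}{(\mathrm{f})}||\hspace*{1pt}\rho_{\mathrm{B}}\suptiny{0}{0}{(\mathrm{f})})$, which is the only object that still needs to be rewritten.

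The core step is to convert that leftover relative entropy into the desired time-$t$ form. Using the forward propagator in Eq.~\eqref{eq:unitary}, I would factorise $U_{\Lambda}(t_\mathrm{f},0)=U_{\Lambda}(t_\mathrm{f},t)\,U_{\Lambda}(t,0)$ and write $\tilde{\rho}\suptiny{0}{0}{(\mathrm{f})} = U_{\Lambda}(t_\mathrm{f},t)\,\rho_{\mathrm{F}}(t)\,U_{\Lambda}^\dagger(t_\mathrm{f},t)$ with $\rho_{\mathrm{F}}(t)=U_{\Lambda}(t,0)\tilde{\rho}\suptiny{0}{0}{(0)} U_{\Lambda}^\dagger(t,0)$. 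Invariance of the relative entropy under a common unitary then strips $U_{\Lambda}(t_\mathrm{f},t)$ from both arguments, leaving $D(\rho_{\mathrm{F}}(t)||\hspace*{1pt} U_{\Lambda}^\dagger(t_\mathrm{f},t)\rho_{\mathrm{B}}\suptiny{0}{0}{(\mathrm{f})} U_{\Lambda}(t_\mathrm{f},t))$. Finally I would substitute the micro-reversibility relation $U_{\Lambda}^\dagger(t_\mathrm{f},t)=\Theta^\dagger U_{\tilde{\Lambda}}(t_\mathrm{f}-t,0)\Theta$~\cite{CampisiHaenggiTalkner2011} to turn the conjugated reference into $\Theta^\dagger \rho_{\mathrm{B}}(t_\mathrm{f}-t)\Theta$ with $\rho_{\mathrm{B}}(t_\mathrm{f}-t)=U_{\tilde{\Lambda}}(t_\mathrm{f}-t,0)\rho_{\mathrm{B}}\suptiny{0}{0}{(\mathrm{f})} U_{\tilde{\Lambda}}^\dagger(t_\mathrm{f}-t,0)$. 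Reassembling the pieces reproduces Eq.~\eqref{eq:fbproof}.

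The hard part will be the careful handling of the antiunitary time-reversal $\Theta$. I must justify that the relative entropy is preserved not only under ordinary unitaries but also under conjugation by $\Theta$; this relies on $\Theta$ preserving spectra and inner products, so that $\log(\Theta X\Theta^\dagger)=\Theta(\log X)\Theta^\dagger$ and the traces entering $\tr[\rho(\log\rho-\log\sigma)]$ are unchanged. Keeping track of the daggers and of the defining antilinearity $\Theta i = -i\Theta$ when commuting $\Theta$ past the propagators is the step most prone to sign and ordering errors; the add-and-subtract manipulation and the peeling-off of the common unitary are, by contrast, routine.
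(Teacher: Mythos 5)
Your proposal is correct and follows essentially the same route as the paper's proof: start from the dissipation identity of the preceding proposition, add and subtract $D(\tilde{\rho}\suptiny{0}{0}{(\mathrm{f})}||\hspace*{1pt}\rho_{\mathrm{B}}\suptiny{0}{0}{(\mathrm{f})})$ to produce $\Delta D_{\mathrm{f}}$, factorise $U_{\Lambda}(t_\mathrm{f},0)=U_{\Lambda}(t_\mathrm{f},t)U_{\Lambda}(t,0)$, strip the common unitary by invariance of the relative entropy, and invoke micro-reversibility $U^\dagger_\Lambda(t_\mathrm{f},t) = \Theta^\dagger U_{\tilde{\Lambda}}(t_\mathrm{f}-t,0)\Theta$. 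Your closing remark about justifying invariance under the antiunitary $\Theta$ (together with time-reversal symmetry of the Hamiltonian, so that $\Theta\rho_{\mathrm{B}}\suptiny{0}{0}{(\mathrm{f})}\Theta^\dagger=\rho_{\mathrm{B}}\suptiny{0}{0}{(\mathrm{f})}$) is a point the paper passes over silently, so attending to it is a welcome refinement rather than a deviation.
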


\begin{prop}
In the case of open quantum systems, Eq.~\eqref{eq:fbproof} becomes the inequality
\begin{align} \label{eq:fbproofineq}
 {\mean{W}_{\mathrm{non-id}}} - \Delta F  \geq  k\subtiny{0}{0}{\mathrm{B}} T \left[ \Delta S_0 + \Delta D_\mathrm{f} +D(\rho_{\mathrm{F}}(t)||\hspace*{1pt} \Theta^\dagger \rho_{\mathrm{B}}(t_\mathrm{f}- t) \Theta) \right].
\end{align}

\begin{proof}
Let us consider our system of interest as before along with a thermal bath to which it is coupled. The joint system can be considered to be closed, and Eq.~\eqref{eq:fbproof} hence holds for the joint system, i.e.,
\begin{align} \label{eq:p1b}
 {\mean{W}_{\mathrm{non-id}}} - \Delta F  =  k\subtiny{0}{0}{\mathrm{B}} T \left[ \Delta S_0 + \Delta D_\mathrm{f} +D(\rho^\prime_{\mathrm{F}}(t)||\hspace*{1pt} \Theta^\dagger \rho^\prime_B(t_\mathrm{f}-t)\Theta) \right] \leq k\subtiny{0}{0}{\mathrm{B}} T \left[ \Delta S_0 + \Delta D_\mathrm{f} +D(\rho_{\mathrm{F}}(t)||\hspace*{1pt} \Theta^\dagger \rho_{\mathrm{B}}(t_\mathrm{f}-t)\Theta) \right].
\end{align}
The left-hand side is the same as before as long as work is performed by implementing the protocol $\Lambda$ only involving system degrees of freedom, whereas the primed quantities on the right-hand side correspond to the global state of system and bath. Notice that assuming that non-ideal measurements are only performed on the system implies that we recover the same terms $\Delta S_0$ and $\Delta D_\mathrm{f}$. Finally, applying monotonicity of the relative entropy under the partial trace, i.e., $S(\rho | \sigma )\geq S(\rho^\prime || \sigma^\prime)$ for any $\rho=\tr_{\mathrm{bath}}[\rho^\prime]$ and $\sigma=\tr_{\mathrm{bath}}[\sigma^\prime]$, the last inequality is obtained.
\end{proof}
\end{prop}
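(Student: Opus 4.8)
The plan is to deduce the open-system inequality from the closed-system equality~\eqref{eq:fbproof} by dilating the dynamics to a closed composite and then invoking the data-processing inequality for the quantum relative entropy. First I would purify the open evolution: I would adjoin a thermal bath and regard system-plus-bath as a closed system, choosing a global unitary whose reduced action on the system reproduces the desired open map. The essential physical input is that the protocol injects work only on the system, that $\Delta F$ is bookkept from the system Hamiltonians alone, and that both non-ideal (minimal energy UMC) measurements act only on the system. Granting this, the left-hand side $\mean{W}_{\mathrm{non-id}} - \Delta F$ as well as the measurement-induced quantities $\Delta S_0$ and $\Delta D_\mathrm{f}$ are computed exactly as in the closed case and are left invariant by the embedding.

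Next I would apply the closed-system equality~\eqref{eq:fbproof} to the composite to obtain
\begin{align}
 {\mean{W}_{\mathrm{non-id}}} - \Delta F  =  k\subtiny{0}{0}{\mathrm{B}} T \left[ \Delta S_0 + \Delta D_\mathrm{f} +D(\rho^\prime_{\mathrm{F}}(t)||\hspace*{1pt} \Theta^\dagger \rho^\prime_{\mathrm{B}}(t_\mathrm{f}-t)\Theta) \right],
\end{align}
where $\rho^\prime_{\mathrm{F}}(t)$ and $\rho^\prime_{\mathrm{B}}(t_\mathrm{f}-t)$ are the \emph{global} forward and backward states at the intermediate time $t$, and $\Theta$ is understood to act on the full composite. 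Writing $\rho_{\mathrm{F}}(t)=\tr_{\mathrm{bath}}[\rho^\prime_{\mathrm{F}}(t)]$ and $\rho_{\mathrm{B}}(t_\mathrm{f}-t)=\tr_{\mathrm{bath}}[\rho^\prime_{\mathrm{B}}(t_\mathrm{f}-t)]$ for the reduced system states, I would then invoke monotonicity of the relative entropy under the CPTP partial-trace channel $\mathcal{E}=\tr_{\mathrm{bath}}$, namely $D(\mathcal{E}(\rho)||\mathcal{E}(\sigma))\leq D(\rho||\sigma)$. Since the time reversal factorises across system and bath, the partial trace commutes with the $\Theta$-conjugation, and data processing yields $D(\rho^\prime_{\mathrm{F}}(t)||\Theta^\dagger\rho^\prime_{\mathrm{B}}(t_\mathrm{f}-t)\Theta)\geq D(\rho_{\mathrm{F}}(t)||\Theta^\dagger\rho_{\mathrm{B}}(t_\mathrm{f}-t)\Theta)$. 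Inserting this lower bound into the equality above produces inequality~\eqref{eq:fbproofineq}.

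The hard part, I expect, will not be the relative-entropy estimate (which is textbook data processing) but rather the careful justification that the embedding leaves $\mean{W}_{\mathrm{non-id}}-\Delta F$, $\Delta S_0$ and $\Delta D_\mathrm{f}$ genuinely unchanged. This hinges on confining work injection, free-energy accounting, and the non-ideal measurements to the system alone, so that only the relative-entropy term picks up a bath contribution upon dilation. A secondary technical point is the factorisation of $\Theta$ across system and bath, which is what permits interchanging the partial trace and the time reversal so that the reduced backward state enters the bound in the required form.
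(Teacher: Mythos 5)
Your proposal is correct and follows essentially the same route as the paper's proof: embed the system in a closed system--bath composite, apply the closed-system equality~\eqref{eq:fbproof} to the global states, observe that $\mean{W}_{\mathrm{non-id}}-\Delta F$, $\Delta S_0$ and $\Delta D_\mathrm{f}$ are unchanged because the driving and both non-ideal measurements act on the system alone, and then apply monotonicity of the relative entropy under the partial trace over the bath. Your explicit remark that $\Theta$ factorises across system and bath (so that partial trace and $\Theta$-conjugation commute) is a small technical point the paper leaves implicit, but it does not change the argument.
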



\section{Two-level system driven by a classical field} \label{appsec:driven atom}

To illustrate the formalism presented in this work, let us consider a two-level atom driven out of equilibrium by a classical field, described by the Hamiltonian $H = H\Sys + H\Field$, where $H\Sys = -\tfrac{E\Sys}{2}\sigma_z$ is the atomic Hamiltonian with $\sigma_z=\ket{0}\!\!\bra{0}-\ket{1}\!\!\bra{1}$, and the interaction Hamiltonian is $H\Field = - \mathbf{D}\cdot \mathbf{E}\Field$, where $\mathbf{E}\Field= i \varepsilon\Field \ch{\text{\boldmath$\epsilon$}\Field e^{-i\omega\Field t}e^{-i\phi_0} + \text{\boldmath$\epsilon$}\Field^{*} e^{i\omega\Field t}e^{i\phi_0}}$ is a classical field with real-amplitude $\varepsilon\Field$, polarization vector $\text{\boldmath$\epsilon$}\Field$, phase $\phi_0$ and angular frequency $\omega\Field$. The atomic dipole is $\mathbf{D} = d(\text{\boldmath$\epsilon$}\Sys^{*}\sigma_{+}+ \text{\boldmath$\epsilon$}\Sys\sigma_{-})$, with $d$ being the dipole strength, and the vector $\text{\boldmath$\epsilon$}\Sys$ describes the atomic transition polarization given the energy transition operators $\sigma_{+} = \ketbra{1}{0}$ and $\sigma_{-} = \ketbra{0}{1}$. After applying the Rotating Wave Approximation (RWA), the Hamiltonian describes a Rabi oscillation with angular frequency $\tilde{\Omega}\Field$, i.e.,
\begin{equation}
    H \approx \frac{\tilde{\Omega}\Field}{2} \text{\boldmath$\sigma$}\cdot \mathbf{n},
    \label{eq:Rabi osc Hamiltonian}
\end{equation}
where $\text{\boldmath$\sigma$}= (\sigma_x, \sigma_y, \sigma_z)$ and $\mathbf{n}=(\Delta\Field\mathbf{u}_z+\Omega\Field \mathbf{u}_y)/\tilde{\Omega}\Field$ with corresponding unit vectors $(\mathbf{u}_x,\mathbf{u}_y,\mathbf{u}_z)$, and we use units where $\hbar=1$. The angular frequency is defined in terms of atom and field variables as
\begin{equation}
    \tilde{\Omega}\Field^2 = \Delta\Field^2 + \Omega\Field^2,
\end{equation}
where $\Delta\Field = E\Sys - \omega\Field$ is the difference between the angular frequency of the field and the energy gap of the atom (recall that $\hbar=1$), and $\Omega\Field = 2d \varepsilon\Field \ch{\text{\boldmath$\epsilon$}\Field \cdot \text{\boldmath$\epsilon$}\Sys^{*}} e^{i\phi_0}$ is the classical Rabi frequency of the atom-field interaction. The transformation generated by the Hamiltonian in Eq.~(\ref{eq:Rabi osc Hamiltonian}) is a unitary of the form
\begin{equation}\label{eq:U theta}
    U(\theta) = \exp\ch{-i\frac{\theta}{2}\text{\boldmath$\sigma$}\cdot \mathbf{n} } = \cos\ch{\frac{\theta}{2}}\id - i \sin\ch{\frac{\theta}{2}} \text{\boldmath$\sigma$}\cdot \mathbf{n},
\end{equation}
for $\theta=\tilde{\Omega}\Field t$, and $t$ is the duration of the transformation. For the purpose of illustration, let us restrict our further analysis to the resonant case, where $\Delta\Field \approx 0$. This implies $\mathbf{n}=\mathbf{u}_y$ and hence that the transformation $U(\theta)$ is a rotation around the $y$ axis and $\tilde{\Omega}\Field=\Omega\Field$. Further details about this physical system, and a more complete and general scenario can be found in Ref.~\cite{HarocheRaimond2006}.

Let us now consider a process in which the atom is initially prepared in a thermal state at inverse temperature $\beta\Sys=1/(k\suptiny{0}{0}{\mathrm{B}}T\Sys)$ with respect to the initial system Hamiltonian $H\Sys\suptiny{0}{0}{(0)} = -E\Sys \sigma_z/2$, i.e.,
\begin{equation}
    \rho\suptiny{0}{0}{(0)}\Sys \,=\, \tau\suptiny{0}{0}{(0)}\Sys\,=\,\frac{\exp\ch{-\beta\Sys H\Sys\suptiny{0}{0}{(0)}}}{\mathcal{Z}\suptiny{0}{0}{(0)}\Sys}.
    \label{eq: initial state atom}
\end{equation}

We then consider the TPM scheme with non-ideal measurements to estimate the work that is performed on the system by the transformation $U(\theta)$, where we assume the measurements to be non-ideal but minimal energy UMC measurements (see Appendix~\ref{appsec:min energy unbiased max corr measurements}). After the first measurement, the interaction with the field $(H\Field)$ is instantaneously switched on at $t=0$, and the field evolves until time $t=t_\mathrm{f}$ to an out-of-equilibrium state given by $\rho\Sys\suptiny{0}{0}{(\mathrm{f})}(\theta) = U(\theta) \rho\suptiny{0}{0}{(0)}\Sys  U(\theta)^{\dagger}$, with $\theta=t_\mathrm{f}\Omega\Field$. At the time $t=t_\mathrm{f}$ the interaction with the field $(H\Field)$ is instantaneously switched off, such that $H\suptiny{0}{0}{(\mathrm{f})}\Sys = H\Sys\suptiny{0}{0}{(0)}$, and the second non-ideal measurement is performed.

The work done by the field by means of $U(\theta)$ is the energy difference between the initial and final configurations,
\begin{equation}
    \mean{W}_{\Lambda} \,=\, \tr\ch{\rho\suptiny{0}{0}{(\mathrm{f})}\Sys(\theta)H\suptiny{0}{0}{(\mathrm{f})}\Sys} - \tr\ch{\rho\suptiny{0}{0}{(0)}\Sys H\suptiny{0}{0}{(0)}\Sys}
    \,=\, -\frac{E\Sys}{2} \tr\ch{\sigma_z (\rho\suptiny{0}{0}{(\mathrm{f})}\Sys(\theta) - \rho\suptiny{0}{0}{(0)}\Sys )}
    \,=\,E\Sys\,\sin^{\hspace*{-0.5pt}2}\bigl(\tfrac{\theta}{2}\bigr)\tanh\bigl(\tfrac{\beta\Sys E\Sys}{2}\bigr).
\end{equation}
However, when we estimate this work using non-ideal measurements, we obtain a different value $\mean{W}_{\mathrm{non-id}}$. For instance, let us consider minimal energy UMC measurements using a three-qubit pointer that is prepared in a thermal state
\begin{equation}
    \tau(\beta\Poi) = \ch{\frac{\exp\ch{-\beta\Poi H\Poi} }{\mathcal{Z}\Poi} }^{\otimes 3},
\end{equation}
where $H\Poi = -\tfrac{E\Poi}{2}\sigma_{z}=\tfrac{E\Poi}{2}\bigl(\ketbra{1}{1}-\ketbra{0}{0}\bigr)$ is the Hamiltonian of each single-qubit subsystem of the pointer and $\beta\Poi=1/(k\subtiny{0}{0}{\mathrm{B}} T\Poi)$. As discussed before, the maximum correlation created between system and pointer depends only on the preparation of the measurement apparatus, which in this example is the sum of the $d\Poi/d\Sys = 4$ biggest eigenvalues of the pointer, i.e.,
\begin{equation}
    C_{\mathrm{max}}  = \frac{\ch{1+3\exp\ch{-\beta\Poi E\Poi }}}{\mathcal{Z}\Poi^3}.
\end{equation}
Following the approach developed in Sec.~\ref{sec:work estimation nonid tpm}, the non-ideal work estimate can be written as
\begin{equation}
   \mean{W}_{\mathrm{non-id}} =  \tr\Bigl( U(\theta)\,\tilde{\rho}\Sys\suptiny{0}{0}{(0)}\,
    U(\theta)^{\dagger}\,H\Sys\suptiny{0}{0}{(\mathrm{f})}\Bigr) - \tr\ch{\rho\suptiny{0}{0}{(0)}\Sys H\Sys\suptiny{0}{0}{(0)}},
\end{equation}
where the unconditional post-measurement system state is $\tilde{\rho}\Sys\suptiny{0}{0}{(0)} := \sum_n p_{n}\suptiny{0}{0}{(0)}  \rho_{n}\suptiny{0}{0}{(0)}$. The conditional post-measurement system states are $\rho_{n}\suptiny{0}{0}{(0)} =  \sum_l q\suptiny{0}{-1}{(0)}_{ll|n}\ketbra{{l}}{{l}}$ and $p_{0}\suptiny{0}{0}{(0)} = 1/\mathcal{Z}\Sys$ and $p_{1}\suptiny{0}{0}{(0)} = \exp\ch{-\beta\Sys E\Sys}/\mathcal{Z}\Sys$ and $\mathcal{Z}\Sys=1+\exp\ch{-\beta\Sys E\Sys}$. The probability $q\suptiny{0}{-1}{(0)}_{ll|n}$ to find the post-measurement state $\rho_n\suptiny{0}{0}{(0)}$ in a specific eigenstate $\ket{l}$ of the system, depends only on the amount of correlation created between system and pointer, that is,
 \begin{align}
     q\suptiny{0}{-1}{(0)}_{nn|n} & = C_{\mathrm{max}}  = \frac{1+3\exp\ch{-\beta\Poi E\Poi }}{\mathcal{Z}\Poi^3}\ \ \text{for}\ \ n=0,1, \\
     q\suptiny{0}{-1}{(0)}_{00|1}=q\suptiny{0}{-1}{(0)}_{11|0} & =1- C_{\mathrm{max}}  = \frac{\exp\ch{-3\beta\Poi E\Poi } + 3\exp\ch{-2\beta\Poi E\Poi }}{\mathcal{Z}\Poi^3}.
 \end{align}
Given the initial state of the atom in Eq.~\eqref{eq: initial state atom}, the unconditional  post-measurement system state
evolves as
\begin{equation}
    U(\theta)\,\tilde{\rho}\Sys\suptiny{0}{0}{(0)}\,
    U(\theta)^{\dagger} = \frac{1}{\mathcal{Z}\Sys}\ch{U(\theta)\,\rho_{n=0}\suptiny{0}{0}{(0)}\,
    U(\theta)^{\dagger} + \exp\ch{-\beta\Sys E\Sys}U(\theta)\,\rho_{n=1}\suptiny{0}{0}{(0)}\,U(\theta)^{\dagger} }.
\end{equation}
Figure~\ref{fig:non-ideal work delta}~(a) shows $\mean{W}_{\mathrm{non-id}}$ as a function of  $\theta=\Omega\Field t$ for different temperatures of the pointer, where each of the three pointer qubits is assumed to have an energy gap $E\Poi=E\Sys/10$. The system is taken to be at room temperature $T\Sys = 300K$ initially, with an energy gap in the microwave regime, such that $\beta\Sys E\Sys \approx 1/30$. The pointer is initially at the same temperature as the system, but the pointer can be cooled in order to obtain a better precision in the work estimation as illustrated in Fig.~\ref{fig:non-ideal work delta}~(b).

\begin{figure}[t!]
        (a)\includegraphics[width=0.45\columnwidth,trim={0cm 0mm 0cm 0mm}]{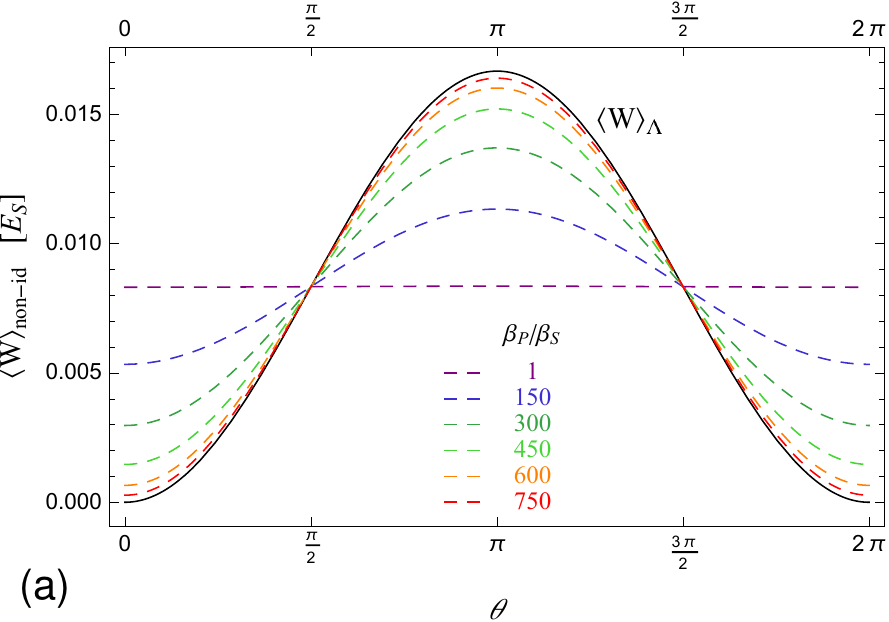}
        \hspace*{2mm}(b)
        \includegraphics[width=0.45\columnwidth,trim={0cm 0mm 0cm 0mm}]{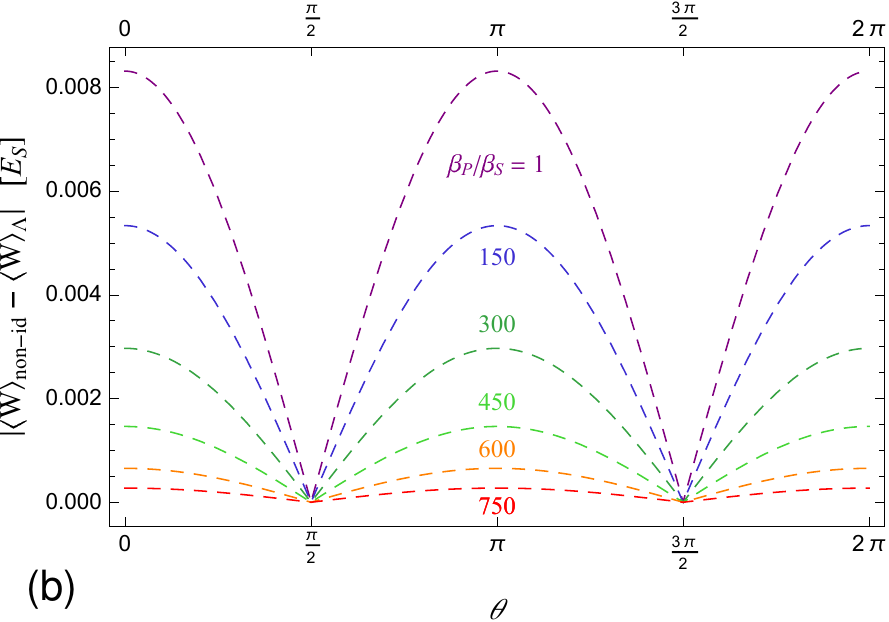}
        \caption{The non-ideal work estimate $\mean{W}_{\mathrm{non-id}}$ and its deviation from the ideal work estimate $\mean{W}_{\Lambda}$ are shown in  (a) and (b), respectively, as functions of $\theta= \Omega\Field t \in [0, 2\pi]$, for selected initial temperatures of the pointer, represented by different ratios of the system and pointer temperature, i.e., $\beta\Poi/\beta\Sys=1$, and from $\beta\Poi/\beta\Sys=150$ to $750$ in steps of $150$. For $\theta = \pi/2$ and $\theta= 3\pi/2$ the non-ideal work estimate coincides with the work performed by the process $U(\theta)$, independently of the temperature of the pointer. For the purpose of the this illustration, the initial temperature of the system has been chosen to be room temperature $T\Sys = 300$~K for an energy gap in the microwave regime such that $\beta\Sys E\Sys \approx 1/30$, and $E\Poi=E\Sys/10$.}
         \label{fig:non-ideal work delta}
\end{figure}

For $\theta=\Omega\Field t = \pi/2$ and $\theta = 3\pi/2$, the non-ideal work estimate coincides with the work realised by the process $U(\theta)$ independently of the temperature of the pointer. However, for $\theta=\pi$ one notices that the \emph{deviation} $\left|\mean{W}_{\Lambda} - \mean{W}_{\mathrm{non-id}} \right|$ of the work-estimate from its ideal value $\mean{W}_{\Lambda}$ has the same order of magnitude as $\mean{W}_{\Lambda}$ for a pointer at temperature close to room temperature, $\beta\Poi/\beta\Sys=1$, where we find that $\left|\mean{W}_{\Lambda} - \mean{W}_{\mathrm{non-id}} \right|/\mean{W}_{\Lambda}=0.49875$. Indeed, for values of $\theta$ close to integer multiples of $2\pi$, the non-ideal estimate can even be arbitrarily far away from the ideal estimate (which vanishes at these points) in the sense that $\left|\mean{W}_{\Lambda} - \mean{W}_{\mathrm{non-id}} \right|/\mean{W}_{\Lambda}$ diverges as $\theta\rightarrow2\pi$. In other words, if the temperature of the pointer is not taken into account in the work-estimation, the imprecision of the estimate can be as bigger than the work performed or extracted by the process $U(\theta)$.  In these cases $U(\theta)$ satisfies the condition in Eq.~\eqref{eq: condition w non-id equal w ideal}. At the same time, we note that, since we consider the special case of a qubit system, the minimal energy UMC measurements we consider are also minimally invasive, meaning that Jarzynski's relation is satisfied for all $\theta$ and $\beta\Poi/\beta\Sys$.

\begin{figure}[t]
        \includegraphics[scale=0.30]{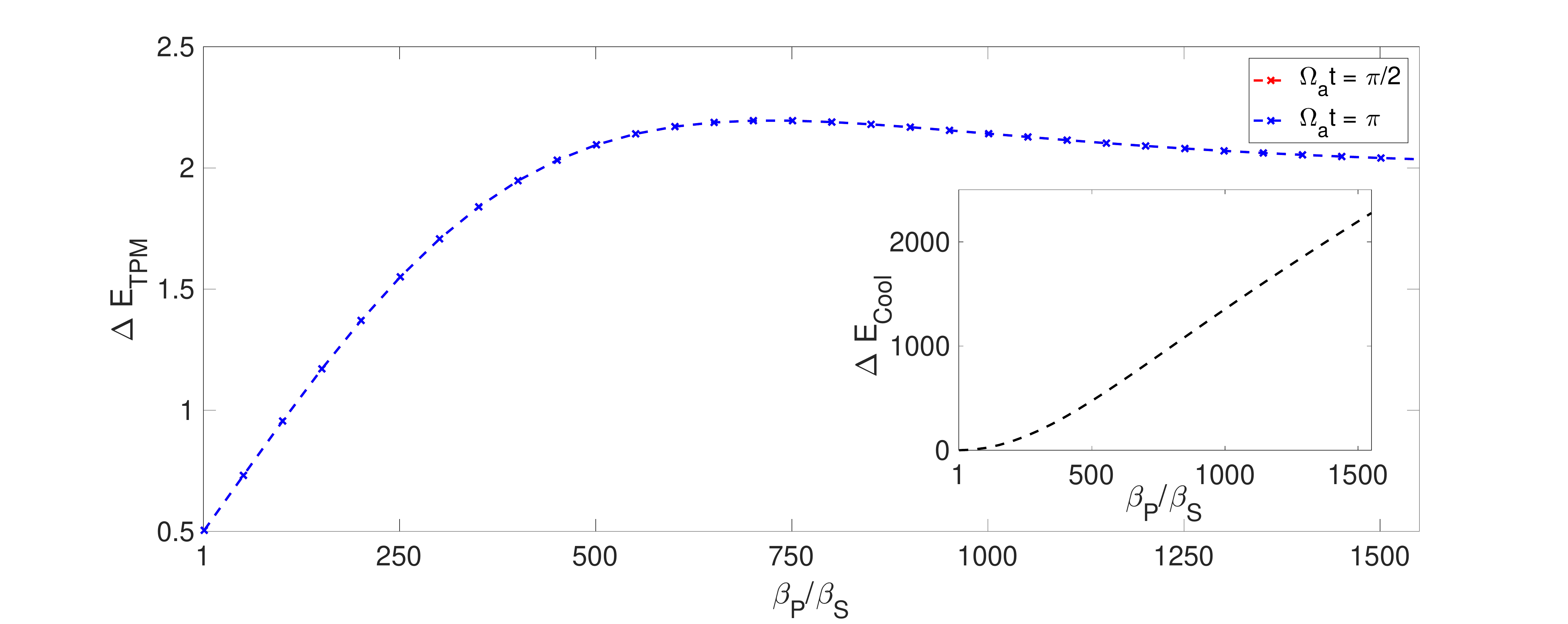}
        \caption{The energy cost $\Delta E_{\mathrm{TPM}}$ to perform the (non-ideal) TPM measurements is shown in units of $E\Sys$ as a function of the ratio between the temperatures of the system and pointer $\beta\Poi/\beta\Sys = T\Sys/T\Poi$, where the system is assumed to be at room temperature $T\Sys = 300K$ initially, for two exemplary durations $\theta=\Omega\Field t = \pi/2$ and $\theta=\Omega\Field t = \pi$ of the driving protocol. We see that there is no discernible dependence on $\theta=\Omega\Field t$. More importantly, the inset plot shows the energy cost $\Delta E_{\mathrm{Cool}}$ (in units of $E\Sys$) for cooling the pointer from $1/\beta\Sys$ to $1/\beta\Poi\leq1/\beta\Sys$ as a function of the ratio $\beta\Poi/\beta\Sys$.}
         \label{fig:TPM Energy cost}
\end{figure}

Let us now consider the energy spent to perform the measurements in the TPM process in the first place, which is given by $\Delta E_{\mathrm{TPM}} := \Delta E_{\mathrm{meas}}\suptiny{0}{0}{(0)} + \Delta E_{\mathrm{meas}}\suptiny{0}{0}{(\mathrm{f})}$. The contributions from the two respective measurements are
\begin{align}
    \Delta E_{\mathrm{meas}}\suptiny{0}{0}{(0)} &=\, \tr\bigl[(H\suptiny{0}{0}{(0)}\Sys + \sum_{i=1,2,3}H\PoiN{i})(\tilde{\rho}\suptiny{0}{0}{(0)}\SP \,-\, \rho\suptiny{0}{0}{(0)} \Sys \otimes \tau\Poi)\bigr], \\
    \Delta E_{\mathrm{meas}}\suptiny{0}{0}{(\mathrm{f})} &=\, \tr\bigl[(H\Sys\suptiny{0}{0}{(\mathrm{f})} + \sum_{i=1,2,3}H\PoiN{i})(\tilde{\rho}\suptiny{0}{0}{(\mathrm{f})}\SP \,-\, \rho\Sys\suptiny{0}{0}{(\mathrm{f})} \otimes \tau\Poi)\bigr],
\end{align}
where $H\PoiN{i}$ is the Hamiltonian for qubit $i$ (for $i=1,2,3$) and we have assumed that the pointer is prepared in the same initial state for both measurements. The states $\tilde{\rho}\suptiny{0}{0}{(0)}\SP$ and $\tilde{\rho}\suptiny{0}{0}{(\mathrm{f})}\SP$ are the joint system-pointer post-measurement states after the two respective measurements, as in Eq.~\eqref{eq:min energy unbiased max corr state}, for the initial state $\rho\Sys\suptiny{0}{0}{(0)}$ at $t=0$, and the final state at $t=t_\mathrm{f}$ is  $\rho\Sys\suptiny{0}{0}{(\mathrm{f})} = \rho\Sys(\theta)$.

In Fig.~\ref{fig:TPM Energy cost} we plot $\Delta E_{\mathrm{TPM}}$ as a function of the initial temperature of the pointer for two distinct durations of the driving protocol. We notice that there is no discernible dependence of $\Delta E_{\mathrm{TPM}}$ on the duration of the protocol. Furthermore, we calculate the energy cost of cooling the pointer within the single-qubit refrigerator paradigm~\cite{ClivazSilvaHaackBohrBraskBrunnerHuber2019a,ClivazSilvaHaackBohrBraskBrunnerHuber2019b}.
For a refrigerator with an energy gap $E_{\mathrm{F}}$, the energy needed to cool the pointer from a temperature $1/\beta\Sys$ to the lower temperature $1/\beta\Poi$ is at least
\begin{equation}
   \Delta E_{\mathrm{Cool}} = N(E_{\mathrm{F}} - 1)\ch{\frac{1}{e^{-\beta\Sys E_{\mathrm{F}}}+1 } -   \frac{1}{e^{-\beta\Sys E\Poi}+1 }},
\end{equation}
where $N$ is the number of qubits to be cooled, and $E_{\mathrm{F}}=E\Poi \beta\Poi/\beta\Sys$. In the example we consider $N=3$ for each measurement and $E\Poi =E\Sys/10$. From Fig.~\ref{fig:non-ideal work delta}, we see that for $\theta=\pi$ and  $\beta\Poi/\beta\Sys \geq 750$, the deviation of the non-ideal work-estimate from the ideal one is nearly half as big as $\mean{W}_{\Lambda}$, but $\left|\mean{W}_{\Lambda} - \mean{W}_{\mathrm{non-id}} \right|< 0.01 E\Sys$. However, as illustrated in Fig.~\ref{fig:TPM Energy cost}, the energy cost for cooling the pointer such that $\beta\Poi/\beta\Sys = 750$ is more than two times $E\Sys$. In other words, the energy cost of cooling the pointer to the required temperature can outweigh $\mean{W}_{\Lambda}$ by orders of magnitude. That is, the cost of estimating the work done by a physical process can by far exceed the amount of work that is done or consumed during the process.

\end{document}